\newtheorem{thm}{Theorem}
\newtheorem{prp}{Proposition}
\newtheorem{example}{Example}
\newtheorem{lemma}{Lemma}
\newtheorem{lemy}{Lemma}[lemma]
\newtheorem{corollary}{Corollary}
\newtheorem{definition}{Definition}
\newtheorem{defn}{Definition}[prp]
\newcommand{\bcor}{\begin{corollary}}
\newcommand{\ecor}{\end{corollary}}
\def\real{\hbox{\rm\setbox1=\hbox{I}\copy1\kern-.45\wd1 R}}
\def\neal{\hbox{\rm\setbox1=\hbox{I}\copy1\kern-.45\wd1 N}}
\renewcommand{\thefootnote}{\fnsymbol{footnote}}
\title{Ex-Post Equilibrium and VCG Mechanisms\thanks{This work is partly based on Rozen's
M.Sc thesis done under the supervision of Ron Holzman}}
\author{Rakefet Rozen \\
Rann Smorodinsky
\thanks{Davidson Faculty of Industrial Engineering and Management,
Technion, Haifa 32000, Israel. Tel:$\ +972\ (0)4 \ 8294422$, Fax:$\
+972 \ (0)4 \ 8295688$, $<${\tt rann@ie.technion.ac.il}$>$. The
hospitality of Department of Economics and Business, Pompeu Fabra
University, Barcelona, Spain and financial support from the Gordon Center for System Engineering at the Technion and the Sapir center are gratefully acknowledged. } }
\begin{document}

\begin{titlepage}

\pagestyle{empty}

\maketitle

\begin{abstract}

Consider an abstract social choice setting with incomplete
information, where the number of alternatives is large. Albeit
natural, implementing VCG mechanisms is infeasible due to the
prohibitive communication constraints. However, if players restrict
attention to a subset of the alternatives, feasibility may be
recovered.

This paper characterizes the class of subsets which induce an
ex-post equilibrium in the original game. It turns out that a
crucial condition for such subsets to exist is the existence of a
type-independent optimal social alternative, for each player. We
further analyze the welfare implications of these restrictions.

This work follows work by Holzman, Kfir-Dahav, Monderer and
Tennenholtz \cite{RNDM} and Holzman and Monderer \cite{HolMon2002}
where similar analysis is done for combinatorial auctions.

{\bf Keywords}: Ex-post Equilibrium; mechanism design; communication complexity

{\bf JEL classification}: C72, D70
\end{abstract}

\end{titlepage}

\renewcommand{\thefootnote}{\arabic{footnote}}
\setcounter{footnote}{0}

\section{Introduction}

A classical result in the theory of mechanism design, known as the
Vickrey-Clarke-Groves (VCG) mechanism, asserts that if agents'
utility is based on his valuations of goods and his monetary
situations, and these two are quasi-linear, then it is possible to
align all agents' interest such that all agents would want to
maximize the social welfare. Consequently, for each agent,
truthful reporting dominates any other strategy. This is true even
if an agent does not know exactly how many other agents are
participating.

This remarkable result has been a cornerstone of many design
problems, such as taxation, public good, auctions and more. In
recent years, two major practical difficulties have been
identified with this approach. Both are related to the design of
``large" problems. Namely, problems where the number of social
alternatives, $M$, is big.

The first problem relates to the {\bf computational complexity} of
determining the social maxima, among all $M$ alternatives. The
second problem refers to the {\bf communication complexity} of each
agent's message. For an agent to fully report her valuation she must
provide $M$ numbers, and if $M$ is large this may be quite
prohibitive. Although, these problems are of different nature both
stem from the large size of social alternatives.

The classical example which gives rise to such problems is that of combinatorial auctions:

\begin{example}
\label{ex1}
Assume $N$ agents bid for $G$ different goods and each agent cares only about the goods she receives. An agent's type is a vector of valuations, one for each of the $2^G-1$ nonempty
subsets of $G$. The case $|G|=50$ is already prohibitively large.
\end{example}

This example has been studied extensively but it is far  from being the unique setting giving rise communication complexity. Some other examples are:

\begin{example}
\label{ex1}
Assume $N$ agents bid for $G$ different goods and each agent cares for the partition of the goods. Namely agent $i$ utility also depends on the good agent $j$ receives. The type of an agent is a valuation for each of $(N+1)^G$ possible allocations. The case $|G|=|N|=20$ is prohibitively large.
\end{example}

Another example is that of ordering:

\begin{example}
Assume agents have to decide on an allocation of $C$ candidates to
$C$ positions (or an order of $C$ jobs to be completed serially).
Any agent then assigns a value to any of the $C!$
possible allocations. Once again, for $C=50$ this becomes too
demanding.
\end{example}

Additional examples are location problems where $F$ facilities need to be located in $L$ possible locations (inducing $|L|^{|F|}$ possibilities), network building where a set $E$ edges needs to be built in order to connect $V$, and more.

The problem of finding the social optimum has been studied, from a
computational complexity point of view by various authors. Some
examples are Rothkopf et al \cite{RothPekHar}, Fujishima et al
\cite{FujiBrownShoham}, Anderson et al \cite{Anderson}, Sandholm et
al \cite{Sandholm01} and Hoos and Boutilier \cite{HoosBoutilier}.
The communication complexity problem motivated researchers to
characterize alternative mechanisms which involve less demanding
strategies than truthfulness. Some examples for this line of
research in an auction setting are Gul and Stacchetti
\cite{GulSta00}, Parkes \cite{Parkes99}, Parkes and Ungar
\cite{ParkesUngar}, Wellman et al \cite{Wellmangeb} and Bikhchandani
et al \cite{Bikhchandani2001}.

In this paper we follow the framework of Holzman, Kfir-Dahav,
Monderer and Tennenholtz \cite{RNDM} and Holzman and Monderer
\cite{HolMon2002}, who do not look for alternative mechanisms but
rather study an alternative solution concept for the class of VCG
mechanisms. In particular these two papers study the class of ex post equilibria of VCG mechanisms which exhibit many of the properties of dominant strategies, yet allow for less demanding communication complexity. In Holzman, Kfir-Dahav, Monderer
and Tennenholtz \cite{RNDM} a family of ex post equilibria, called
bundling equilibria, is introduced and its efficiency is studied.
A bundling strategy is one where agents report valuations
on a subset of all alternatives, where the subset must contain the
empty set, be close under union of mutually exclusive sets, and
close under complementarities.  Holzman and Monderer \cite{HolMon2002} show that this family exhausts all the ex-post equilibria in combinatorial auction settings.

Whereas these two papers focus on the setting of example \ref{ex1}, namely on combinatorial auction without externalities, we study similar issues in a general and more abstract setting, making our results relevant to all the aforementioned examples. Additionally, the results obtained by in the two papers are restricted to the case of $N>2$ players while we our results encompass the $2$ player case as well.
A more detailed comparison of our results with the results in the combinatorial auction setting is provided in section 5.

This paper focuses on three research questions: existence, characterization and efficiency.
In particular we show that for the most general case, where valuations are arbitrary, the unique ex
post equilibrium is that of the weakly dominant strategies. However,
by imposing one restriction on the valuation space, namely that each
agent´s optimal choice is independent of his valuation, we can
recover the positive result and provide a large set of ex post
equilibria. In terms of efficiency loss, we show that in the general
case the efficiency loss grows with the number of players, and
cannot be bounded uniformly. In fact we show that the efficiency
loss is in the order of magnitude of the number of players and this
cannot be improved upon. However, we propose two types of
restrictions on the set of valuations which induce a uniform bound
on the efficiency loss.

A related strand of the literature is that on ex-post
implementation, which is part of the mechanism design literature.
The research goal of most papers on ex-post implementation is to
characterize the conditions needed for obtaining a mechanism which
implements some social choice functions under the ex-post
equilibrium solution. Some recent contributions to this are
Bergemann and Morris \cite{BergemannMorris2008}, Bikhchandani
\cite{Bikhchandani2006} and Jehiel et al \cite{JMMZ}. Needless to
mention, given the research goal of this literature, that these
papers do not yield mechanisms which have low complexity. Nisan and
Segal \cite{NisSeg02} study the tradeoff between communication
complexity and efficiency in allocation problems.

In section 2 we provide the basic model and definitions. Section 3
discusses existence and structure of ex-post equilibria, whereas section 4 analyzes their efficiency.
Section 5 is devoted to a discussion of the results for the combinatorial setting and a comparison with the results of Holzman, Kfir-Dahav, Monderer
and Tennenholtz \cite{RNDM} and Holzman and Monderer \cite{HolMon2002} . The proofs are relegated to the appendix.

\section{Model}

Let $A$ be a finite abstract set of social alternatives and let $N$
be a finite set of $n=|N|$ agents. A valuation function for agent
$i$ is a function $v_i:A \to \real$ and $v=(v_1,\ldots,v_n) \in
(\real^A)^n$ denotes a vector of valuations, one for each agent.

An {\it allocation mechanism}, $M: (\real^A)^n \to A$, chooses a
single alternative for each vector of valuations. We say that $M$ is
a social welfare maximizer if $M(v) \in \arg\max_{a\in A} \sum_i
v_i(a)$ for all $v \in (\real^A)^n$. Let ${\cal M}_N$ be the set of
all social welfare maximizers for the set of agents $N$. Note that
two elements in ${\cal M}_N$ differ only in the way they break ties.

A set of agents $N$, a social welfare maximizing mechanism, $M$,
and a set of functions $h_i:({\real^A})^{N - \{i\}} \to \real$,
$\forall i \in N$, and a set of valuations, ${\cal V}_i \subset
\real^A$, for each $i$, defines a {\it Vickrey-Clarke-Groves (VCG)
game}, denoted $\Gamma(N,M,h,{\cal V})$, where $h=(h_i)_{i=1}^n$
and ${\cal V} = \prod_{i\in N} {\cal V}_i$
\begin{itemize}
\item Agent $i$'s strategy, $b_i:{\cal V}_i \to \real^A$, maps his
true valuation to an announced valuation (possibly not in ${\cal
V}_i$). $b_i(v)(a)$ is the valuation announced for alternative $a$,
when the actual valuation is $v$ and the strategy is $b_i$. Let
$b=(b_1,\ldots,b_n)$ denote the agents' strategy profile and let
$b_i({\cal V}_i) \subset \real^A$ be the set
of all possible announcements of $i$.%
\item For any $v \in {\cal V}$ and any strategy profile, $b$, the
utility of agent $i$ is $U_i=v_i(M(b(v))) + \sum_{j\not =
i}b_j(v_j)(M(b(v))) - h_i(b_{-i}(v_{-i}))$.
\end{itemize}

We refer to the set of all VCG games where $h_i=0$ as {\it simple
VCG games}.

We note two observations about such VCG games:
\begin{itemize}
\item The bid $b_i$ is a best response for agent $i$, against $b_{-i}$,
in some VCG game if and only if it is a best response in all VCG
games. \item The truth-telling strategy, $b_i(v_i) = v_i$, weakly
dominates any other strategy in any VCG game. Furthermore, it is the
unique strategy with that property (up to a constant).
\end{itemize}

Given the above comments it seems that there is no need for any
additional game theoretic analysis of VCG games. However, recently
there has been a growing interest in the literature in situations
where agents face a large number of social alternatives, in which
case the communication of an agent's valuation is prohibitively
long, and practical reasons render the truth-telling strategy as
impossible. One example for such a situation is a combinatorial
auctions, where the number of social alternatives is exponential in
the number of goods. Another example is an assignment problem, where
"jobs" are assigned to "resources" (e.g., people to positions). In
this case the number of rankings grows fast with the number of jobs
and resources.

The communication complexity issue, discussed above, motivates an
alternative analysis of the solution concepts for large VCG games.
We follow on two recent papers, Holzman et al \cite {RNDM} and
Holzman and Monderer \cite{HolMon2002}. We look for natural solution
concepts which are less demanding in terms of the agents
communication needs, yet are as almost convincing, in terms of the
incentive compatibility requirements, as weakly dominant strategies
(truth telling).
\subsection{Ex-Post Equilibrium}
\label{secexpost}

The solution concept of dominant strategies has the following
appealing properties:
\begin{itemize}
\item Agents act optimally no matter what other agents do.

\item The solution concept makes no use of any probabilistic
information on agents' valuations, either by the agents themselves
or by the mechanism designer.

\item Agent's strategies are robust to changes in the number of
players.

\end{itemize}

In addition, the solution is attained at truth telling strategies
which are quite simple (knowing $v_i$, agent $i$ does not need to do
any computation). Consequently we deduce that the chosen alternative
is the socially efficient alternative. We also note that by choosing
functions $h_i$ properly the game is individually rational and
budget balanced. An alternative, yet weaker, solution concept is
that of an ex-post equilibrium:

\begin{definition}
A tuple of strategies, $b$, is an {\em ex-post equilibrium}, for the
VCG game, $\Gamma(N,M,h,{\cal V})$, if for any player $i \in N$, any
valuation $v_i \in {\cal V}_i$ and any alternative strategy ${\hat
b}_i$ of $i$,
$$
U_i((b_i,b_{-i}),(v_i,v_{-i})) \ge U_i(({\hat
b}_i,b_{-i}),(v_i,v_{-i})) \ \ \ \forall v_{-i} \in {\cal
V}^{N-\{i\}}.
$$
\end{definition}

\begin{definition}
Fix a set of agents, $N$, a set of social alternatives $A$, and a
set of valuations, one for each agent, ${\cal V} \subset
{(\real^A)}^n$. A tuple of strategies, $b$, is an {\em ex-post
equilibrium for the class of VCG games, over $(N,A,{\cal V})$}, if
for all $N' \subset N$, $(b_j)_{j\in N'}$ is an ex-post equilibrium
for $\Gamma(N',M,h,{\cal V})$, for all $M \in {\cal M}_{N'}$ and $h
\in H$.
\end{definition}

Note that an ex-post equilibrium for the class of VCG games has
many of the properties of the solution concept of dominant
strategies:

\begin{itemize}
\item Agents act optimally no matter what other agents's
valuations are, as long as all agents keep to their strategies. In
other words, agents have no incentive to unilaterally deviate, even
after all valuations have been realized.

\item The solution concept makes no use of any probabilistic
information on agents' valuations.

\item Agent's strategies are robust to changes in the number of
players.
\end{itemize}

\begin{example}
Consider a standard Vickrey auction auction of 2 goods, $A$ and $B$,
with 2 bidders. There are 9 possible allocations of the goods (one
can choose to allocate goods to none of the agents). Let ${\cal
V}_i$ be the set of $i$'s valuations that depend only on the goods
allocated to $i$ and are monotonically non-decreasing. Consider the
following strategies - Agent 1 announces the true valuation of the
grand bundle ($A$ and $B$), and the bundle $A$, and zero for $B$.
Agent $2$ announces his true valuation for the grand bundle and for
good $B$ and zero for $A$. These 2 strategies form an ex-post
equilibrium of the standard Vickrey auction. In fact, they form an
ex-post equilibrium for the class of VCG games, over $\cal
V$.\footnote{This example is taken from a comment in
\cite{HolMon2002} page 11.}
\end{example}

\begin{example}
Consider an auction of M goods and $N$ agents. As before, let
${\cal V}_i$ be the set of $i$'s valuations that depend only on
the goods allocated to $i$ and are monotonically non-decreasing.
Let $b_i$ be the strategy that assigns the true value for the
grand coalition and zero to all other allocations. Holzman et al
show that this is an ex-post equilibrium, for the class of VCG
games, over $\cal V$.
\end{example}

\section{Results}

Obviously any solution in weakly dominant strategies is also an ex
post equilibrium for the class of VCG games. However, Holzman et
al \cite{RNDM} have shown that there exist ex-post equilibria for
the class of VCG games induced by a combinatorial auction setting,
other than the dominant ones. In particular some of these
strategies have low communication complexity. A natural question
to pursue is whether this result is an artifact of the particular
structure of valuations of a combinatorial auction or whether it
can be extended to larger valuation sets.

In what follows we study the class of ex-post equilibria for
various classes of valuations. Our first result is immediate:

\begin{prp}\label{prop1} Assume $b$ is an ex-post equilibrium for the class of VCG games,
over (N,A,$\cal V$). If ${\cal V}'_i \subset {\cal V}_i$ then $b$
is an ex-post equilibrium for the class of VCG games, over
(N,A,${\cal V}')$.
\end{prp}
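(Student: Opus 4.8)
The plan is to treat this as a pure restriction argument: the inequalities defining an ex-post equilibrium are universally quantified over the admissible valuation profiles, so passing from ${\cal V}$ to a smaller product set ${\cal V}'$ (with ${\cal V}'_i\subset{\cal V}_i$ for each $i$) merely discards constraints and therefore cannot destroy the equilibrium property. First I would unwind the definition of an ex-post equilibrium for the class of VCG games over $(N,A,{\cal V}')$. Fix an arbitrary subset $N'\subset N$, a maximizer $M\in{\cal M}_{N'}$, a profile $h\in H$, a player $i\in N'$, a true valuation $v_i\in{\cal V}'_i$, an alternative strategy $\hat b_i$, and an opponent profile $v_{-i}\in ({\cal V}')^{N'-\{i\}}$; the goal is the best-response inequality $U_i((b_i,b_{-i}),(v_i,v_{-i}))\ge U_i((\hat b_i,b_{-i}),(v_i,v_{-i}))$. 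I would note at the outset that ${\cal M}_{N'}$ and $H$ depend only on $N'$ and $A$, not on the valuation set, so the very same $M$ and $h$ are admissible in the game $\Gamma(N',M,h,{\cal V})$.

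The key observation is that $U_i((b_i,b_{-i}),(v_i,v_{-i}))$ depends on the strategy profile only through the announcements $b_i(v_i)$ and $(b_j(v_j))_{j\ne i}$ evaluated at the realized valuations, and the term $h_i(b_{-i}(v_{-i}))$ does not involve $i$'s own announcement, hence cancels in any comparison. Consequently the best-response condition at $(v_i,v_{-i})$ is genuinely a pointwise statement about the single announcement $b_i(v_i)\in\real^A$. Since $v_i\in{\cal V}'_i\subseteq{\cal V}_i$ and each coordinate of $v_{-i}$ lies in the corresponding ${\cal V}'_j\subseteq{\cal V}_j$, the profile $(v_i,v_{-i})$ is admissible in the larger game as well, and the announcements prescribed by $b$ coincide. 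The hypothesis that $b$ is an ex-post equilibrium over $(N,A,{\cal V})$ supplies exactly this inequality for every $v_i\in{\cal V}_i$ and every $v_{-i}\in{\cal V}^{N'-\{i\}}$, so in particular for those lying in the smaller sets.

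The only point demanding care is the quantifier over alternative strategies: a deviation $\hat b_i$ in the restricted game is a map ${\cal V}'_i\to\real^A$, whereas the hypothesis guards $b$ against deviations defined on all of ${\cal V}_i$. I would resolve this by extending $\hat b_i$ to $\tilde b_i:{\cal V}_i\to\real^A$ arbitrarily, for instance setting $\tilde b_i=b_i$ off ${\cal V}'_i$; since $\tilde b_i(v_i)=\hat b_i(v_i)$ for $v_i\in{\cal V}'_i$, immunity of $b$ against $\tilde b_i$ at such points yields immunity against $\hat b_i$. Equivalently, the pointwise reformulation above lets one compare $b_i(v_i)$ directly against an arbitrary announcement $c\in\real^A$ with no reference to the domain of the deviating strategy at all. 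I expect no substantive obstacle here; the entire content of the argument is careful bookkeeping of the nested universal quantifiers, and the conclusion follows because restriction shrinks the index set over which those quantifiers range.
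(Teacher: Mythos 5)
Your argument is correct and is exactly the content the paper compresses into its one-line proof (``follows directly from the definition''): the ex-post equilibrium condition is universally quantified over valuation profiles, so shrinking each ${\cal V}_i$ only discards constraints, and your extension of a deviation $\hat b_i$ from ${\cal V}'_i$ to ${\cal V}_i$ correctly disposes of the one quantifier that does not trivially restrict. Nothing further is needed.
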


\begin{proof}
Follows directly from the definition.
\end{proof}

Unfortunately, if the set of valuation functions is large enough
then there are no ex post equilibria other than (near) truth
telling for any large class of VCG games:

\begin{thm}\label{NoPar1}
Assume that $n\ge 2$ and $|A|>2$, or alternatively that $n\ge 3$.
Then a strategy profile $b$ is an ex-post equilibrium for the
class of VCG mechanisms over $(N,A,{(\real^A)}^n)$ if and only if
$b_i(v_i)(a)=v_i(a)+f_i(v_i)$, for any arbitrary function $f_i:
\mathcal{V} \rightarrow \mathbb{R}$.
\end{thm}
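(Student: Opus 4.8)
The plan is to prove both directions, the \emph{only if} direction being the substantive one. For the easy direction, suppose each $b_i(v_i)(a)=v_i(a)+f_i(v_i)$ with $f_i$ constant in the alternative. Then for every profile the announced aggregate $\sum_j b_j(v_j)(a)$ differs from the true aggregate $\sum_j v_j(a)$ by an $a$-independent constant, so the two have the same $\arg\max$ and the selected alternative is socially optimal; since the excerpt records that truth-telling is weakly dominant (and unique up to an additive constant), these bids inherit the best-response property, hence form an ex-post equilibrium for the whole class. For the converse I would fix an agent $i$ and a valuation $v_i$, write $\beta=b_i(v_i)$, and let $s\in\real^A$ denote any aggregate opponent announcement $\sum_{j\in T} b_j(v_j)$ realizable in some subgame $T\subseteq N-\{i\}$. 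Because the equilibrium condition quantifies over \emph{all} tie-breaking maximizers $M$, and because bidding $v_i$ truthfully yields the value $\max_a[v_i(a)+s(a)]$ for any tie-break, one checks that $\beta$ is a best response against $s$ in every VCG game if and only if every maximizer of $\beta+s$ is a maximizer of $v_i+s$, i.e. $\arg\max_a[\beta(a)+s(a)]\subseteq\arg\max_a[v_i(a)+s(a)]$. I will call this the \emph{containment condition}, and it must hold for every realizable $s$.

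Granting that containment holds for a sufficiently rich family of aggregates, the separation step runs as follows. Suppose, for contradiction, that $\beta-v_i$ is not constant in the alternative, so there exist $a,a'$ with $\beta(a)-\beta(a')>v_i(a)-v_i(a')$. The aim is to produce a realizable $s$ that is very negative on every alternative outside $\{a,a'\}$ and whose gap $s(a')-s(a)$ lies in the nonempty interval $(v_i(a)-v_i(a'),\,\beta(a)-\beta(a')+v_i(a)-v_i(a')]$. Such an $s$ forces $a'$ to be the unique maximizer of $v_i+s$ while leaving $a$ a maximizer of $\beta+s$, so $a\in\arg\max(\beta+s)$ but $a\notin\arg\max(v_i+s)$, contradicting containment. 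Hence $\beta-v_i$ must be constant across alternatives, which is exactly $b_i(v_i)(a)=v_i(a)+f_i(v_i)$.

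The main obstacle is realizing aggregates $s$ of this prescribed shape, since the opponents' strategies are themselves unknown. The tool I would use is a \emph{responsiveness} observation drawn from the opponents' own containment conditions: if an opponent $j$'s valuation $v_j$ is sharply peaked at an alternative $a_0$, exceeding all other alternatives by more than the spread of the remaining fixed aggregate, then $\arg\max(v_j+\cdots)=\{a_0\}$ forces $a_0$ to be the \emph{unique} maximizer of $b_j(v_j)+\cdots$, so that $b_j(v_j)$ is strictly largest at $a_0$. This manufactures opponent bids that are high on a prescribed alternative and comparatively low elsewhere. When $n\ge 3$ I would peak one opponent at $a$ and a second at $a'$, so their sum is high on both $a$ and $a'$ and low elsewhere, and then raise the two peaks independently to steer the gap $s(a')-s(a)$ into the required interval. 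When $n=2$ the same effect must come from a single opponent whose valuation is high on the pair $\{a,a'\}$, and here a third alternative ($|A|>2$) is needed as the coordinate to be driven down. This is precisely why the degenerate case $n=2,\,|A|=2$ is excluded: with one opponent and no third alternative one cannot simultaneously separate $a$ and $a'$ and suppress the rest, and genuinely non-truthful ex-post equilibria persist.

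I expect the delicate point to be making the ``steer the gap into an open interval'' step rigorous, given that the strategy profile $b$ need not be continuous or monotone, so one cannot simply invoke an intermediate-value argument on the realized gap. I would handle this by working throughout with strict inequalities and peak-forcing rather than with exact gap values: the responsiveness observation already delivers \emph{strict} maximizers, and by choosing the opponents' peaks large enough relative to $v_i$ and $\beta$ one can guarantee the two required strict comparisons (that $a'$ strictly beats $a$ under $v_i+s$ while $a$ still ties or beats $a'$ under $\beta+s$) without needing to hit a precise numerical gap. Assembling the subgame choice, the responsiveness lemma, and this separation argument then yields the claimed characterization.
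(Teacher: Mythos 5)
Your high-level architecture is the right one and largely parallels the paper's: the containment condition you isolate is exactly the paper's Lemma \ref{old5} read across all tie-breaking maximizers, the reduction to a single pair $a,a'$ with $b_i(v_i)(a)-b_i(v_i)(a')\neq v_i(a)-v_i(a')$ is the paper's opening move, and sharply peaked opponent valuations $Z_j^{(a',t)}$ are the paper's main instrument. But there is a genuine gap at precisely the step you yourself flag as delicate, and the fix you propose does not close it. The two comparisons you need impose a \emph{two-sided} constraint: the realized opponent gap $s(a')-s(a)$ must land in the bounded window $\lb v_i(a)-v_i(a'),\, b_i(v_i)(a)-b_i(v_i)(a')\rsb$ (your stated upper endpoint is also incorrect, but that is secondary). ``Choosing the peaks large enough'' can only satisfy one-sided constraints; it cannot place a quantity inside a bounded interval. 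And your responsiveness observation yields only the qualitative fact that a peaked opponent bids strictly highest on her peak (the paper's Lemma \ref{l1}); it gives no quantitative control over the size of her reported gap $g_j^{(a',a)}(t)=b_j(Z_j^{(a',t)})(a')-b_j(Z_j^{(a',t)})(a)$, which a priori is an arbitrary, possibly discontinuous and non-monotone function of $t$ and may never take a value in the required window. This is exactly where the bulk of the paper's appendix lives: Lemma \ref{l3} shows the gap functions of two players satisfy the ``mean value exclusion'' property; Proposition \ref{Pprp} characterizes all such pairs via a decomposition of $\real_+$ into signed open intervals on which the two gap functions jump to opposite endpoints; and only then does the paper pin the gaps down enough to run your separation step --- via Lemma \ref{old4} (which gives $b_i(Z_i^{(a,s)})(a)-b_i(Z_i^{(a,s)})(a')=s$ exactly, using a third player \emph{and} Lemma \ref{old2}) in the $n\ge 3$ case, and via Propositions \ref{V1} and \ref{V2} together with an explicit construction involving a third alternative $a''$ and the auxiliary constants $h_1,\dots,h_4,\delta$ in the $n=2$, $|A|\ge 3$ case.

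In short: your proposal is a correct plan whose crucial quantitative step is asserted rather than proved, and the assertion is false as stated --- peaked valuations and strict inequalities do not let you ``steer'' the opponents' reported gap into a prescribed bounded window without first establishing something like the identity of Lemma \ref{old4} or the parallelogram structure of Proposition \ref{Pprp}, which is where essentially all of the difficulty of the theorem resides. Your diagnosis of why $n=2$, $|A|=2$ must be excluded is correct in spirit, but the $n=2$, $|A|\ge 3$ case in particular is far harder than your sketch suggests.
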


In particular, note that the valuations reported by the agents may
differ from the true valuations, however, the difference between
the true valuation and the reported ones must be constant. We
shall refer to such strategies as {\it nearly truth telling over
A} (see definition \ref{def1} below).

The proof of theorem \ref{NoPar1} is composed of two distinct
proofs, one for the case $n\ge 2$ and $|A|>2$ and a different one
for the case $n\ge 3$.

One can hope that by adding more structure to the problem the
positive result can be salvaged. A valuation function is called {\it
non-negative} if $v_i(a) \ge 0$ for all $a \in A$. Let $\real_+^A$
be the set of all non-negative valuation functions. Indeed, this is
a natural property in the case of combinatorial auctions.
Nevertheless:

\begin{thm}\label{NoPar2}
Assume that $n\ge 2$ and $|A|>2$, or alternatively that $n\ge 3$.
Then a strategy profile $b$ is an ex-post equilibrium for the
class of VCG games over $(N,A,{(\real_+^A)}^n)$ if and only if
$b_i(v_i)(a)=v_i(a)+f_i(v_i)$, for any arbitrary function
$f_i:\mathcal{V}\rightarrow \mathbb{R}$.
\end{thm}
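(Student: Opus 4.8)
The plan is to reduce Theorem~\ref{NoPar2} to Theorem~\ref{NoPar1} rather than redo the combinatorial work from scratch. Theorem~\ref{NoPar1} already pins down the ex-post equilibria over the \emph{full} valuation space $(\real^A)^n$, so the main task is to show that restricting to non-negative valuations does not enlarge the set of equilibrium strategies. The ``if'' direction is routine and identical in both theorems: if each $b_i(v_i)(a)=v_i(a)+f_i(v_i)$, then the announcement differs from the truth by an additive constant, which (as the paper notes after the definition of VCG games) does not change any best-response comparison, so $b$ remains an ex-post equilibrium. The substance is entirely in the ``only if'' direction.

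For the ``only if'' direction I would argue as follows. Suppose $b$ is an ex-post equilibrium for the class of VCG games over $(N,A,(\real_+^A)^n)$. The obstacle is that I cannot directly invoke Theorem~\ref{NoPar1}, since Proposition~\ref{prop1} only lets me pass to \emph{subsets} of the valuation space, and $\real_+^A$ is a subset of $\real^A$, not a superset. So I want to extend $b$ from the non-negative orthant to all of $\real^A$ while preserving the ex-post equilibrium property, then apply Theorem~\ref{NoPar1} and restrict the conclusion back down. The key observation enabling this is that adding a sufficiently large constant to a valuation both keeps it in $\real^A$ and, if the constant is large enough, moves it into $\real_+^A$; and since VCG best responses are invariant to additive constants in one's own reported valuation and in the others' true valuations, I can translate any profile of arbitrary valuations into the non-negative orthant without affecting the incentive structure.

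Concretely, for an arbitrary valuation $v_i\in\real^A$, pick a constant $c_i=c_i(v_i)$ with $v_i+c_i\in\real_+^A$ (e.g.\ $c_i=-\min_a v_i(a)$ when this is negative, else $0$), and define the extended strategy $\tilde b_i(v_i)=b_i(v_i+c_i)-c_i$. I would verify that $\tilde b_i$ is an ex-post equilibrium over the full space: given any realized profile $v\in(\real^A)^n$, translating every coordinate into $\real_+^A$ by its constant yields a profile on which $b$ is in equilibrium, and because both the social-welfare maximizer $M$ and each agent's utility $U_i$ are invariant under adding constants to individual reported and true valuations (the constants cancel in the $\arg\max$ and shift every player's utility by a fixed amount independent of the chosen alternative), the equilibrium inequalities for $b$ transfer verbatim to $\tilde b$. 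The hypotheses $n\ge 2,\ |A|>2$ or $n\ge 3$ are inherited unchanged, so Theorem~\ref{NoPar1} applies to $\tilde b$ and gives $\tilde b_i(v_i)(a)=v_i(a)+\tilde f_i(v_i)$ for some function $\tilde f_i$.

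Finally I would translate this back to $b$. For $v_i\in\real_+^A$ one has $c_i(v_i)=0$, hence $\tilde b_i(v_i)=b_i(v_i)$, so the conclusion $b_i(v_i)(a)=v_i(a)+f_i(v_i)$ holds on $\real_+^A$ with $f_i=\tilde f_i|_{\real_+^A}$, which is exactly the claimed form. The main obstacle, and the step deserving the most care, is verifying that the translated strategy $\tilde b$ genuinely inherits the ex-post equilibrium property across the \emph{entire} class of VCG games (all subsets $N'\subset N$, all $M\in\mathcal{M}_{N'}$, all $h$); this requires checking that the additive-constant invariance holds simultaneously for the allocation rule and the utility, and that tie-breaking across different $M$ does not interfere with the translation. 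Once that invariance lemma is nailed down, the reduction to Theorem~\ref{NoPar1} is immediate.
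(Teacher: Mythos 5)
Your translation lemma is correct as far as it goes: defining $\tilde b_i(v_i)=b_i(v_i+c_i)-c_i$ with $c_i=\max(0,-\min_a v_i(a))$ does preserve the ex-post equilibrium property, because the set $\arg\max_a\sum_j(\cdot)$ is unchanged when each summand is shifted by a constant and every utility comparison shifts by the same amount $\sum_j c_j$, uniformly over alternatives, deviations, subsets $N'$ and tie-breaking rules. So you have correctly shown that Theorem~\ref{NoPar2} \emph{follows from} Theorem~\ref{NoPar1}. The problem is that this is circular relative to the paper's logical development: in the paper the implication runs in exactly the opposite direction. The difficult direction is proved \emph{directly} for the non-negative space $(\real_+^A)^n$ (this is where all the work lives: the spike valuations $Z_i^{(a,s)}$, Lemmas~\ref{l1}--\ref{old4}, the mean value exclusion property and the parallelogram Proposition~\ref{Pprp}, and the explicit construction of the adversarial profiles $u_1,u_2$), and the difficult direction of Theorem~\ref{NoPar1} is then stated as a corollary of that proof together with Proposition~\ref{prop1}. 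You cannot invoke Theorem~\ref{NoPar1} as a black box here, because the paper offers no proof of it that does not already pass through Theorem~\ref{NoPar2}.

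Put differently, your argument establishes that the two theorems are equivalent (a genuinely useful observation, and arguably a cleaner justification of the paper's one-line ``follows as a corollary'' than the paper itself gives), but it supplies zero independent content for the hard direction of either one. Note also why the paper's choice of direction is the natural one: Theorem~\ref{NoPar2} is the stronger uniqueness claim, since over the smaller valuation class $\real_+^A$ there are fewer test instances available with which to pin down $b$; proving it there, using only non-negative witnesses such as $Z_i^{(a,s)}$, is what makes the restriction argument via Proposition~\ref{prop1} available for the larger class. To turn your proposal into an actual proof you would have to supply the direct argument for the non-negative orthant --- showing that for each pair $a,a'$ the difference functions $g_i^{(a,a')}$ satisfy mean value exclusion, deducing the segment structure of Proposition~\ref{Pprp}, and then deriving a contradiction from any profile with $b_1(v_1)(a)-v_1(a)\neq b_1(v_1)(a')-v_1(a')$ --- none of which is sketched in your proposal.
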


It is relatively straightforward to show that indeed the specific strategies prescribed
in theorems \ref{NoPar1} and \ref{NoPar2} are indeed an ex post equilibrium. The difficulty of the proof lies in the second direction.

\subsection{Constant Maximum Valuations}

We say that a set of valuations ${\cal V}_i$ {\em has a maximum} if
there exists some $a \in A$, called the {\em maximum}, such that
$v_i(a) \ge v_i(a')$ for all $a' \in A$ and all $v_i \in {\cal
V}_i$. Note that the set of non-decreasing valuations, in the
combinatorial auction setting, has a maximum. In particular, any
allocation that gives all the goods to agent $i$ is a maximum.

Valuations sets that have a maximum prevail in other settings as well. In the context of ordering a set of tasks, consider the valuations with the property that any agent want her own task to be processed first, but otherwise cares about which other tasks precede her own task. In the context of network construction, consider valuations where each player (who is a vertex in a graph) always prefers the star shaped graph centered around him over any other graph. Finally, in the context of facility location, assume agent always prefers all the `good facilities (e.g., library) to neighbor her and all the bad facilities (e.g., waste disposal) to be as far as possible.

Let ${\cal R}_i(a_i)$ be set of all non-negative valuation
functions for which $a_i$ is a maximum, and let ${\cal R}(\vec{a})
= \times_{i=1}^n {\cal R}_i(a_i)$.

The following family of strategies will play an important role for
this family of valuations.

\begin{definition}\label{def1}
Let $A' \subset A$ be a subset of social alternatives. A strategy
profile is called {\em nearly truth telling over $A'$} if:
\begin{itemize}
\item If $a \in A'$ then $b_i(v_i)(a) = v_i(a) + f(v_i)$ for some arbitrary function
$f_i:\cal V \to \real$;%
\footnote{Note that we do not exclude negative bids, namely $v_i(a) < - f(v_i)$.}
\item If $a \not \in A'$, then $b_i(v_i)(a) = C$, for an arbitrary 
$C\le \min_{a \in A'}b_i(v_i)(a)$.
\end{itemize}
\end{definition}

A nearly truth telling strategy prescribes telling the truth, up to
a shift in a constant, on some subset of pre-selected alternatives,
and assigns a valuation of zero to all other alternatives. So
players using the strategy need only communicate $|A'|+1$ numbers,
instead of $|A|$ numbers.

\begin{thm}\label{thm3}%
Consider the class of VCG games over $(N,A,{\cal R}(\vec{a}))$,
where $\vec{a}=(a_1,a_2,\ldots,a_n)$, and let $A' \subset A$ satisfy
$a_i \in A' \ \ \forall a_i$, $i=1,\ldots,,n$ (all the $n$ maxima
are in $A'$). Then any nearly truth telling strategy profile over
$A'$ is an ex-post equilibrium for this class.
\end{thm}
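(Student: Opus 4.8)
The plan is to reduce the statement to the familiar VCG best-response computation and then use the maximum structure to argue that confining announcements to $A'$ is costless. By the first observation recorded after the definition of VCG games, a bid is a best response in some VCG game if and only if it is a best response in all of them, so I would work in an arbitrary simple game ($h_i\equiv 0$); the term $h_i(b_{-i}(v_{-i}))$ is independent of $i$'s announcement and cancels in every comparison. Fix a subset $N'\subset N$, an agent $i\in N'$, her true valuation $v_i\in{\cal R}_i(a_i)$, and true valuations $v_{-i}$ for the others, who announce according to the nearly truth telling profile. Writing $W(a)=\sum_{j\neq i}b_j(v_j)(a)$ for the others' aggregate announcement, agent $i$'s realized utility when the chosen alternative is $a$ equals $v_i(a)+W(a)$. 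For any announcement $\hat b_i$ the mechanism selects some $\hat a\in\arg\max_a[\hat b_i(a)+W(a)]$, so agent $i$ can never exceed $\max_{a\in A}[v_i(a)+W(a)]$; the goal is to show nearly truth telling guarantees exactly this value.

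The core step is a domination inequality placing this global maximum inside $A'$. Since every other agent $j$ plays nearly truth telling, her off-$A'$ announcement is the constant $C_j\le\min_{a'\in A'}b_j(v_j)(a')$, and summing over $j\neq i$ gives, for every $a\notin A'$,
$$W(a)=\sum_{j\neq i}C_j\le\sum_{j\neq i}\min_{a'\in A'}b_j(v_j)(a')\le\min_{a'\in A'}W(a')\le W(a_i).$$
Combining this with $v_i(a)\le v_i(a_i)$, which holds because $a_i$ is a maximum of $v_i$, yields $v_i(a)+W(a)\le v_i(a_i)+W(a_i)$ for every $a\notin A'$. As $a_i\in A'$, the maximum of $v_i(\cdot)+W(\cdot)$ over all of $A$ is attained at some alternative of $A'$. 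This is precisely where non-negativity and the maximum property are used.

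It remains to check that nearly truth telling induces such an optimal alternative. On $A'$ agent $i$ announces $v_i(a)+f_i(v_i)$, a constant vertical shift of $v_i$, so $\arg\max_{a\in A'}[b_i(v_i)(a)+W(a)]=\arg\max_{a\in A'}[v_i(a)+W(a)]$; off $A'$ she announces $C_i\le\min_{a'\in A'}b_i(v_i)(a')$, which together with the displayed bound forces the announced aggregate $b_i(v_i)(a)+W(a)$ at every $a\notin A'$ to be at most its minimum over $A'$. Hence an announced-welfare maximizer can be taken in $A'$, where it coincides with the true-welfare maximizer over $A'$ and therefore, by the previous paragraph, with the global maximizer of $v_i(\cdot)+W(\cdot)$. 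Thus nearly truth telling delivers the utility $\max_{a\in A}[v_i(a)+W(a)]$, the best attainable against any $\hat b_i$, establishing the best-response inequality. Since $i$, $v_i$, $v_{-i}$, $M$, $h$ and $N'$ were arbitrary, the profile is an ex-post equilibrium for the whole class.

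The delicate point, and the step I would treat most carefully, is tie-breaking at the boundary $C_i=\min_{a'\in A'}b_i(v_i)(a')$. The inequalities above are weak, so an off-$A'$ alternative can tie the announced maximum, while the definition of ex-post equilibrium quantifies over \emph{every} $M\in{\cal M}_{N'}$, i.e.\ over every tie-breaking rule. I would show that such a tie can arise only when each agent's announcement is flat across $A'$ (forcing the announced aggregate to be constant there), isolate this degenerate regime, and verify the best-response inequality in it separately; the slack available in the choice of $C_i$ is what makes off-$A'$ alternatives strictly sub-optimal away from this regime. Establishing that the prescribed profile remains optimal under adversarial tie-breaking in the flat case is the main obstacle; the remainder is the routine VCG computation and presents no further difficulty.
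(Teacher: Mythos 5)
Your main argument is correct and is, in substance, the argument the paper intends: the paper's own proof is a three-line reduction to dominance (``if a deviation were profitable, deviating to truth telling would be too; but truth telling cannot change the chosen alternative''), and your domination chain $W(a)=\sum_{j\neq i}C_j\le\sum_{j\neq i}\min_{a'\in A'}b_j(v_j)(a')\le\min_{a'\in A'}W(a')\le W(a_i)$ together with $v_i(a)\le v_i(a_i)$ is exactly the missing justification for why the maximizer of $v_i(\cdot)+W(\cdot)$ lies in $A'$ and why the restricted announcement still locates it. Up to that point your write-up is considerably more careful than the paper's. (A small quibble: only the maximum property is used there; non-negativity plays no role.)

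The gap is precisely the point you flag at the end and then defer. The boundary case $C_j=\min_{a'\in A'}b_j(v_j)(a')$ is not a routine verification to be ``handled separately'': in the flat regime you correctly isolate (a tie with an off-$A'$ alternative forces every $b_j$, hence every $v_j$, to be constant on $A'$), the best-response inequality actually fails. Take $A=\{a_1,a_2,x\}$, $A'=\{a_1,a_2\}$, two players with $v_i\equiv 1$ on $A'$, $v_i(x)=0$, $f_i\equiv 0$ and $C_i=1=\min_{a'\in A'}b_i(v_i)(a')$. All three alternatives tie at announced welfare $2$; a welfare maximizer in ${\cal M}_N$ that breaks this tie toward $x$ gives each player utility $1$, whereas deviating to truth telling yields $2$. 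Since the ex-post equilibrium definition quantifies over every tie-breaking rule, the profile fails to be an equilibrium in this case, so no amount of separate verification will close your flagged obstacle as stated. The proof completes only if $C$ is taken strictly below the minimum (then every off-$A'$ alternative is strictly suboptimal in announced welfare and ties never arise), or if the definition is read with that intent. Note that the paper's one-line proof silently assumes away exactly this issue when it asserts that truth telling ``cannot change the chosen alternative.''
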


Unfortunately, not all ex post equilibria are nearly truth telling
for some subset $A'$. Consider the following strategy profile -
$b_i(v_i)(a_k) = v_i(a_k)+10 $ for all maxima, $a_k, \
k=1,\ldots,n$, and for all $a \not \in \{a_k: k=1,\ldots,n\}$,
$b_i(v_i)(a)$ is chosen arbitrarily in the interval $[0,9]$. We
leave it to the reader to verify that $b$ is an ex post equilibrium
for the class of VCG games over $(N,A,{\cal R}(a))$.

Although we are not able to characterize all ex-post equilibria for
the class of VCG games over $(N,A,{\cal R}(a))$, we can provide some
necessary conditions.

\begin{thm}\label{thm4}%
If $n\geq 3$ and the strategy profile, $b$, is an ex-post
equilibrium for the class of VCG games over $(N,A,{\cal R}(a))$ then
$b_i(v_i)(a_k) = v_i(a_k) + f(v_i)$, for all $i$ and $k$, where
$a_k$ is the maximum for player $k$.
\end{thm}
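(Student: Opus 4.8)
The plan is to prove Theorem \ref{thm4} by exploiting the ex-post equilibrium property across different sub-games and different valuation profiles. The claim is that on each player's maximum alternative $a_k$, every agent $i$ must report truthfully up to an additive constant $f(v_i)$ that is common across all the maxima. Since we are dealing with the class of VCG games over $(N,A,{\cal R}(a))$, I may freely choose the sub-coalition $N' \subset N$, the tie-breaking rule $M$, and the function profile $h$; the strength of the hypothesis is that $b$ must survive every such choice simultaneously.

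First I would reduce to a small coalition. Because $n \ge 3$, I can focus on three players and study pairwise the incentives of a single player $i$ at two distinct maxima $a_k$ and $a_\ell$ (with $k,\ell \neq i$, which is available since there are at least three agents). The idea is to fix the valuations $v_{-i}$ of the other players so that, for a social welfare maximizer, the decisive comparison in $A$ is precisely between $a_k$ and $a_\ell$. Concretely, I would let the other two players pile almost all their weight on $a_k$ and $a_\ell$ respectively, with negligible value elsewhere, and arrange that the induced ``externality'' the mechanism sees from $i$ is the difference $b_i(v_i)(a_k) - b_i(v_i)(a_\ell)$. Comparing this with the true difference $v_i(a_k) - v_i(a_\ell)$, a deviation argument in the style of the standard VCG best-response analysis forces
\[
b_i(v_i)(a_k) - v_i(a_k) = b_i(v_i)(a_\ell) - v_i(a_\ell).
\]
This says the additive shift is the same on $a_k$ as on $a_\ell$; calling the common value $f(v_i)$ yields $b_i(v_i)(a_k) = v_i(a_k) + f(v_i)$ for every maximum index $k$.

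The main obstacle, and the step I would spend the most care on, is the feasibility of the valuation constructions \emph{within} ${\cal R}(a)$: every valuation I plant for the other players must still have the prescribed alternative as its own maximum and be non-negative, so I cannot simply place arbitrarily large weight on $a_k$ or $a_\ell$ if those are not the maxima of the player whose valuation I am designing. This is exactly why $n \ge 3$ is needed --- with a third player I have enough freedom to make one of $a_k,a_\ell$ attractive to the mechanism through a player for whom it \emph{is} the maximum, while using the remaining player(s) to null out competing alternatives, all without violating the maximum constraint. I would handle this by choosing, for each target alternative, a player who owns it as a maximum and letting that player's valuation be large on that alternative and small (but still respecting its own maximum) elsewhere. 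Once the two comparisons are set up legitimately in ${\cal R}(a)$, the equilibrium inequalities in both directions pin down equality of the shifts.

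Finally I would verify that the resulting $f(v_i)$ is genuinely independent of the particular $v_{-i}$ used and of the chosen maximum: I would run the comparison for every pair of maxima, obtaining a transitive chain of equalities of shifts, so a single $f(v_i)$ serves for all $k$. Since the definition of ex-post equilibrium for the class quantifies over all $M \in {\cal M}_{N'}$ and all $h$, I can invoke the tie-breaking freedom to make the relevant maximizer selection go the way I need at the boundary cases, which removes any ambiguity arising from ties between $a_k$ and $a_\ell$. The conclusion is precisely the necessary condition $b_i(v_i)(a_k) = v_i(a_k) + f(v_i)$ asserted in the statement.
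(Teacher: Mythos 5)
Your high-level strategy is the same as the paper's (which derives Theorem \ref{thm4} by rerunning the $n\ge 3$ argument for Theorem \ref{NoPar2} with auxiliary players holding spike valuations on the maxima), but there is a genuine gap at the decisive step. You write that after planting valuations for the other players that ``pile almost all their weight'' on $a_k$ and $a_\ell$, a standard VCG deviation argument comparing $b_i(v_i)(a_k)-b_i(v_i)(a_\ell)$ with $v_i(a_k)-v_i(a_\ell)$ forces the shifts to agree. But the deviation argument (the paper's Lemma \ref{old5}) constrains the chosen alternative in terms of $i$'s \emph{true} valuation and the other players' \emph{bids}, while the mechanism selects using everyone's \emph{bids}. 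To drive a wedge between the two you must realize an ``others' bid difference'' lying strictly between $v_i(a_k)-v_i(a_\ell)$ and $b_i(v_i)(a_k)-b_i(v_i)(a_\ell)$, and that requires controlling the auxiliary players' equilibrium reports --- not their valuations. A priori, a player to whom you assign a spike valuation $Z_j^{(a_j,t)}$ could report something that badly distorts $t$, and then your construction tells you nothing about where the mechanism's indifference point sits.

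Closing this gap is the actual technical content of the paper: one must first prove that in any ex-post equilibrium a spike bidder reports a constant on all maxima other than her own (Lemma \ref{l6}) and that her reported gap between her own maximum and any other maximum equals the true gap $t$ exactly (Lemma \ref{l8}); the latter in turn rests on the two-player mean-value-exclusion analysis (Lemma \ref{l3}) and the parallelogram machinery, and is precisely where the third player is used. Your proposal treats this as routine (``letting that player's valuation be large on that alternative and small elsewhere''), which begs the question. A secondary point: you only compare maxima $a_k,a_\ell$ with $k,\ell\neq i$, but the theorem also asserts the conclusion at $k=i$, i.e.\ on player $i$'s own maximum $a_i$; that pair must be handled too (in the paper's argument only one spike, on the alternative $a'$, is needed, so taking $a'=a_k$ with $k\ne i$ and letting $a=a_i$ covers it, but your two-spike setup does not obviously do so since no other player has $a_i$ as her maximum).
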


In words, in any ex-post equilibrium players (almost) report their
true valuations on the set of maxima.

\section{Efficiency}

It is quite obvious that even if ex post equilibria exist, as in the
models discussed in theorems \ref{thm5} and \ref{thm6}, the demand
on communication may be much smaller, compared with the dominant
strategy solution. In fact, agents may need as little as reporting
the value for $N$ alternatives only (compare $N={50}$ with $2^{50}$
or $50^{50}$ alternatives in example \ref{ex1}).

In section \ref{secexpost} we provided arguments why, conceptually,
the notion of ex post equilibrium is almost as robust as the
dominant strategy solution. However, when it comes to efficiency and
social welfare the two solution concepts differ. Whereas, the
dominant strategy solution maximizes social welfare (the sum of
agents´utilities) this is not so for many ex post equilibria.

\begin{example}
\label{ex5} Consider a complete information combinatorial auction setting with $N$
agents and $N$ goods. Assume agent $i$ values the bundle of goods, $K$,
as follows: $v_i(K)=0$ if $i \not \in K$, $v_i(K)= 1$ if $i \in K$
and $|K|< N$ and finally $v_i(K) = 1+\epsilon$ if $K$ is the grand bundle. Consider a strategy profile where each agent bids zero over any bundle that is not the grand bundle and truthfully on the grand bundle.
This is an ex post equilibrium of the combinatorial auction and the
communication complexity is very low. However this results
assigning the grand bundle randomly to one of the players, achieving
a social welfare of $1+\epsilon$, as opposed to the maximal social welfare that is achievable in the dominant strategy solution. Thus,
the efficiency ratio is $N$.
\end{example}

Let $S(m) = \sum_i v_i(m) $ denote the social welfare
for the social alternative $m$. Let $r(m,m') = \frac{S(m)}{S(m')}$.
For any VCG mechanism and any strategy profile $d$ we denote by
$VCG(d(v))$ the resulting social alternative, at the valuation
profile $v$. Recall that the dominant strategy profile $b$ maximizes $S$, namely
$S(VCG(b(v))) \ge S(m) \ \forall m$.
The following theorem
extends a result of Holzman et al \cite{RNDM}.

\begin{thm}
\label{thm5} Consider the class of VCG games over $(N,A,{\cal
R}(\vec{a}))$, where $\vec{a}=(a_1,a_2,\ldots,a_n)$, and let $A'
\subset A$ satisfy $a_i \in A' \ \ \forall a_i$, $i=1,\ldots,,n$
(all the $n$ maxima are in $A'$). Let $d$ be a nearly truth telling
strategy profile over $A'$ that is an ex-post equilibrium for this
class and let $b$ be the dominant strategy equilibrium. Then
$r(VCG(b(v)),VCG(d(v))) \le N$.
\end{thm}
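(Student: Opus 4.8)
The plan is to control the two quantities $S(VCG(b(v)))$ and $S(VCG(d(v)))$ separately, using only the structure of $\mathcal{R}(\vec{a})$ and of nearly truth telling strategies. Write $m^\ast = VCG(b(v))$ and $m_d = VCG(d(v))$. Since $b$ is the dominant (truth telling) profile, $m^\ast$ maximizes the true social welfare $S$, so the ratio is at least $1$ and the whole content is the upper bound $S(m^\ast) \le N\cdot S(m_d)$.

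First I would bound the numerator from above. Because each $v_i \in \mathcal{R}_i(a_i)$ has $a_i$ as a maximum, $v_i(m^\ast) \le v_i(a_i)$ for every $i$, whence $S(m^\ast) = \sum_i v_i(m^\ast) \le \sum_i v_i(a_i)$. Averaging over the $N$ agents gives $\max_k v_k(a_k) \ge \frac{1}{N}\sum_k v_k(a_k) \ge \frac{1}{N} S(m^\ast)$. Next, non-negativity of the valuations gives, for each maximum $a_k$, that $S(a_k) = \sum_i v_i(a_k) \ge v_k(a_k)$; since $a_1,\dots,a_n \in A'$ by hypothesis, this yields $\max_{m \in A'} S(m) \ge \max_k S(a_k) \ge \max_k v_k(a_k) \ge \frac{1}{N} S(m^\ast)$.

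The remaining step is to show that the alternative $m_d$ actually selected under $d$ satisfies $S(m_d) = \max_{m \in A'} S(m)$. Here I use that on $A'$ a nearly truth telling profile reports $\sum_i d_i(v_i)(m) = S(m) + F$ with $F = \sum_i f_i(v_i)$ independent of $m$, so maximizing reported welfare over $A'$ coincides with maximizing $S$ over $A'$. It then remains to rule out that the mechanism strictly prefers an alternative outside $A'$: for $m \notin A'$ the report equals $\sum_i C_i \le \sum_i \min_{a \in A'} d_i(v_i)(a) \le \sum_i d_i(v_i)(a^\ast) = \max_{m \in A'}(S(m)+F)$, where $a^\ast \in \arg\max_{m\in A'} S(m)$. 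Hence some element of $A'$ is (weakly) reported-optimal, the welfare maximizer selects $m_d \in A'$ with $S(m_d) = \max_{m\in A'} S(m)$, and substituting the bound of the previous paragraph gives $S(m_d) \ge \frac{1}{N} S(m^\ast)$, i.e. $r(VCG(b(v)),VCG(d(v))) \le N$.

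The main obstacle is precisely this last step: the constant bids outside $A'$ only deliver a weak inequality, so ties between an optimal element of $A'$ and an element outside $A'$ are possible, and at such an outside alternative the true welfare $S$ is uncontrolled (it could even vanish). I would dispose of this by invoking the convention that a welfare maximizer breaks ties in favour of $A'$ (equivalently, selecting the reported-optimal alternative that lies in $A'$, which always exists by the inequality above); the genuine care needed is in making the roles of non-negativity and of the constant shift $F$ explicit, so that ``reported-optimal within $A'$'' is faithfully translated into ``true-welfare-optimal within $A'$''. The tightness of the bound $N$ is witnessed by Example \ref{ex5}.
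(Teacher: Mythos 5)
Your proof is correct and follows essentially the same chain of inequalities as the paper's: $S(VCG(b(v)))\le N\max_i v_i(VCG(b(v)))\le N\max_i v_i(a_i)\le N\max_i S(a_i)\le N\,S(VCG(d(v)))$, using that each $a_i$ is a type-independent maximum lying in $A'$ and that valuations are non-negative. The only difference is that you justify the final step (why the alternative selected under $d$ attains $\max_{m\in A'}S(m)$, including the possible tie with alternatives outside $A'$) more explicitly than the paper, which passes over this point.
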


\begin{proof}
The proof is similar to that of Remark $1$ in Holzman et al
\cite{RNDM}:
$$ s(VCG(b(v))) \le N \max_i v_i(VCG(b(v))) \le N \max_i(v_i(a_i)) \le N \max_i (\sum_j v_j(a_i))$$
where $a_i$ is the alternative $i$ prefers. On the other hand, $a_i$ is one of
the alternatives for which valuations are announced, therefore:
$$\max_i (\sum_j v_j(a_i))
 \le \max_i(v_i(VCG(d(v)))) \le
S(VCG(d(v)))$$ which completes the proof.
\end{proof}

Example \ref{ex5} shows that this bound is tight. In fact, we can
use the principles to that example to show that the efficiency loss
is not gradual and that one can get high efficiency loss even when
the communication complexity is very high:

\begin{example}
\label{ex6} Consider a setting with $N$ players and $M$ social
alternatives. Let $m_i$ denote the optimal social alternative for
$i$ and let $m_0 \not \in \{m_1,\ldots,m_N\}$ denote an arbitrary
alternative. Assume players play an ex post equilibrium with near
truth telling strategies on $M-\{m_0\}$. Now consider the following
valuation for player $i$ - $v_i(m_i)=1+i\epsilon$,$v_i(m_0)=1$ and
$v_i(m)=0$ for all other alternatives. For this vector of valuations
the resulting alternative is $m_N$ and the social welfare is
$1+N\epsilon$, whereas in he dominant strategy equilibrium the
resulting alternative is $m_0$ with a social welfare of $N$.
$r(VCG(b(v)),VCG(d(v)))$ approaches $N$ as $\epsilon$ aproaches zero
\end{example}

The bound we have shown is not a satisfactory one as the number of
players may be quite large. We consider two families of valuations
for which the efficiency loss is independent of the number of the
players.

The family of valuations $V=(V_1,\ldots,V_N)$ is called {\bf
homogeneous of degree $p$} if for any $v \in V$ and any $m \in M$
$\max_i(v_i(m)) < p\cdot \frac{\sum_iv_i(m)}{N}$. In words, for each
alternative there cannot be too much difference of opinion.%
\footnote{An alternative, and perhaps more appealing, definition
would involve the inequality $\max_i(v_i(m)) < p\cdot
(v_i(m)) \forall i$. However, we do not use this alternative as we do
not want to rule out the possibility for some agents to assign a
valuation of zero, while others assign a positive valuation.}
In many settings valuations are bounded, say $a \le v_i(m) \le b \ \ \forall i, v_i\in V_i, m \in A$. In such settings, if $a>0$ then valuations are homogeneous of degree $p=\frac{b}{a}$.
Another example for valuations of degree $p$, is in correlated settings where a common strictly positive signal is drawn and agents valuations are generated via idiosyncratic adjustments of the common signal.
More concretely think of a set of firms which compete for some public resource. The quality of the resource, and hence the potential revenues is common, yet the production costs, as a ratio of the revenues can be between $0<a$ and $b<1$. In this case homogeneity of degree $p=\frac{1-a}{1-b} $ prevails.

\begin{thm}
\label{thm6} Consider the class of VCG games over $(N,A,V)$, where
$V \subset {\cal R}(\vec{a})$ is homogeneous of degree $p$. Let $A'
\subset A$ satisfy $a_i \in A' \ \ \forall a_i$, $i=1,\ldots,,n$
(all the $n$ maxima are in $A'$). Let $d$ be a nearly truth telling
strategy profile over $A'$ that is an ex-post equilibrium for this
class and let $b$ be the dominant strategy equilibrium. Then
$r(VCG(b(v)),VCG(d(v))) \le p$.
\end{thm}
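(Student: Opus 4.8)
The plan is to follow the skeleton of the proof of Theorem \ref{thm5}, invoking the homogeneity assumption at precisely the point where that argument loses a factor of $N$. Write $m^{*}=VCG(b(v))$ for the welfare maximizing alternative and $\hat{m}=VCG(d(v))$ for the alternative selected by the nearly truth telling profile. Exactly as in the proof of Theorem \ref{thm5}, the fact that each maximum $a_i$ lies in $A'$, together with the fact that $d$ reports the true valuations (up to an additive constant) on $A'$ while bidding low off $A'$, forces the mechanism to maximize welfare over $A'$; this yields $S(\hat{m})\ge \max_i S(a_i)$, a step I would reuse unchanged. The remaining task is to replace the bound $S(m^{*})\le N\max_i S(a_i)$ of Theorem \ref{thm5} by the sharper $S(m^{*})< p\,\max_i S(a_i)$.

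The key is where homogeneity is applied. Applying it directly to $m^{*}$ is useless, since $\max_i v_i(m^{*})<\frac{p}{N}S(m^{*})$ only bounds $S(m^{*})$ from below. Instead I would route the estimate through the individual maxima. Since $a_i$ is player $i$'s maximum, $v_i(m^{*})\le v_i(a_i)$ for every $i$, so $S(m^{*})=\sum_i v_i(m^{*})\le \sum_i v_i(a_i)$. Now homogeneity may be applied one player at a time: for each $i$, $v_i(a_i)\le \max_j v_j(a_i)<\frac{p}{N}S(a_i)$, the last inequality being the homogeneity condition evaluated at the alternative $a_i$. Summing over the $N$ players gives $S(m^{*})\le \sum_i v_i(a_i)<\frac{p}{N}\sum_i S(a_i)\le \frac{p}{N}\cdot N\max_i S(a_i)=p\,\max_i S(a_i)$.

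Combining the two bounds yields $S(m^{*})< p\,\max_i S(a_i)\le p\,S(\hat{m})$, whence $r(VCG(b(v)),VCG(d(v)))=S(m^{*})/S(\hat{m})\le p$, as claimed. (The homogeneity inequality also guarantees $S(m)>0$ for every $m$: non-negativity forces $\max_i v_i(m)\ge 0$, so $0\le \max_i v_i(m)<\frac{p}{N}S(m)$ gives $S(m)>0$, and the ratio is well defined.)

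The only genuine obstacle is conceptual rather than computational: one must recognize that homogeneity should be used to convert each individual valuation $v_i(a_i)$, taken at a player's own maximum, into the fraction $\frac{p}{N}$ of the full welfare $S(a_i)$ there, and that it is the summation over the $N$ maxima -- rather than a single crude `sum $\le N\cdot$ max' bound -- that turns $N$ copies of $\frac{p}{N}$ back into the constant $p$. Every other inequality is a direct recombination of ingredients already present in the proof of Theorem \ref{thm5}.
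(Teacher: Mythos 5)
Your proof is correct, and it rests on the same three ingredients as the paper's own proof of Theorem \ref{thm6}: the inequality $v_i(VCG(b(v)))\le v_i(a_i)$ coming from $V\subset{\cal R}(\vec{a})$, homogeneity applied at the maxima $a_i$ rather than at $VCG(b(v))$ itself, and the inclusion $a_i\in A'$ which gives $S(a_i)\le S(VCG(d(v)))$. The only real difference is the order of operations, and here your framing slightly misreads what is needed: the paper \emph{does} use the ``crude'' bound you dismiss. It writes $S(VCG(b(v)))\le N\,v_{i_0}(VCG(b(v)))\le N\,v_{i_0}(a_{i_0})$ for the single player $i_0$ who values $VCG(b(v))$ the most, and then one application of homogeneity at $a_{i_0}$ supplies the compensating factor $p/N$, yielding $S(VCG(b(v)))<p\,S(a_{i_0})\le p\,S(VCG(d(v)))$. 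You instead bound player by player, apply homogeneity at each $a_i$ separately, and push the ``sum $\le N\cdot\max$'' step down to the level of the welfares $S(a_i)$. Both orderings land on exactly the bound $p$: yours produces the (ultimately discarded) intermediate estimate $\frac{p}{N}\sum_i S(a_i)$, the paper's pins down a specific witness $a_{i_0}$, and neither is sharper where it matters. Your closing observation that homogeneity forces $S(m)>0$, so that the ratio $r$ is well defined, is a detail the paper leaves implicit and is worth keeping.
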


\begin{proof} Let $i_0$ denote the agent that values the alternative $VCG(b(v))$
the most and let $a_{i_0}$ be the alternative $i_0$ prefers.
$$ s(VCG(b(v))) \le N v_{i_0}(VCG(b(v))) \le N v_{i_0}(a_{i_0}))$$
By homogeneity $v_{i_0}(a_{i_0})) < p\cdot \frac{\sum_j
v_j(a_{i_0})}{N}$ which implies that
$$s(VCG(b(v))) < p\cdot \sum_j v_j(a_{i_0})$$.

Because $a_{i_0}$ is one of the alternatives for which valuations
are announced it holds that:
$$\sum_j v_j(a_{i_0})) \le  \sum_j v_j(VCG(d(v)))$$
which completes the proof.
\end{proof}

Another family of valuations which we study is one where players´ valuations differ
significantly over each alternative. A family of valuations
$V=(V_1,\ldots,V_N)$ is called {\bf compatible of degree $p$} if for
any $v \in V$ and any $m \in M$ there are at most $p$ players for
which $v_i(m) > 0$. As an example consider a combinatorial auction with $p$ goods.

\begin{thm}
Consider the class of VCG games over $(N,A,V)$, where $V \subset
{\cal R}(\vec{a})$ is compatible of degree $p$. Let $A' \subset A$
satisfy $a_i \in A' \ \ \forall a_i$, $i=1,\ldots,,n$ (all the $n$
maxima are in $A'$). Let $d$ be a nearly truth telling strategy
profile over $A'$ that is an ex-post equilibrium for this class and
let $b$ be the dominant strategy equilibrium. Then
$r(VCG(b(v)),VCG(d(v))) \le p$.
\end{thm}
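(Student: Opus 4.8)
The plan is to follow the template established in the proofs of Theorems~\ref{thm5} and~\ref{thm6}, replacing the crude factor-$N$ bound on the numerator by the sharper factor-$p$ bound that the compatibility hypothesis supplies. Write $b^\ast=VCG(b(v))$ for the socially efficient alternative selected by the dominant strategy profile and $d^\ast=VCG(d(v))$ for the alternative selected by the nearly truth telling profile $d$. It suffices to establish the single inequality $S(b^\ast)\le p\cdot S(d^\ast)$, since $r(VCG(b(v)),VCG(d(v)))=S(b^\ast)/S(d^\ast)$ by definition of $r$.

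First I would bound the numerator. Because $V\subset{\cal R}(\vec{a})$, every valuation is non-negative, so each term of $S(b^\ast)=\sum_i v_i(b^\ast)$ is non-negative; and by compatibility of degree $p$ at most $p$ of these terms are strictly positive while the rest vanish. Hence the full sum over all $N$ agents equals the sum of at most $p$ positive terms, each bounded by $\max_i v_i(b^\ast)$, which gives $S(b^\ast)\le p\cdot\max_i v_i(b^\ast)$. This is the step that substitutes for $S(b^\ast)\le N\max_i v_i(b^\ast)$ in Theorem~\ref{thm5}, and it is the only point at which the compatibility assumption is used.

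Next I would relate $\max_i v_i(b^\ast)$ to the welfare attainable over $A'$, exactly as in the previous two proofs. Let $i_0$ attain the maximum and let $a_{i_0}$ be the maximum of player $i_0$. Since $a_{i_0}$ is a maximum, $v_{i_0}(b^\ast)\le v_{i_0}(a_{i_0})$, and by non-negativity of the remaining valuations $v_{i_0}(a_{i_0})\le\sum_j v_j(a_{i_0})=S(a_{i_0})$. Because $a_{i_0}\in A'$ and the nearly truth telling profile drives the VCG mechanism to select a social-welfare maximizer over $A'$ (the per-player constant shifts and the off-$A'$ bids do not alter the $\arg\max$ among alternatives of $A'$), we have $S(a_{i_0})\le S(d^\ast)$. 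Chaining these inequalities with the first step yields $S(b^\ast)\le p\cdot S(d^\ast)$, completing the argument.

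I do not anticipate a genuine obstacle, as the proof is a structural analog of the two preceding efficiency results; the point requiring care is the first step, where non-negativity must be combined with the degree-$p$ counting bound to pass from a sum of $N$ non-negative terms to a sum of at most $p$ positive terms, since a careless estimate there would only reproduce the weaker factor $N$ of Theorem~\ref{thm5}. One should also confirm, as implicitly in Theorem~\ref{thm5}, that $d^\ast$ is a welfare maximizer over $A'$ irrespective of the tie-breaking rule, which holds because all such maximizers share the same value of $S$.
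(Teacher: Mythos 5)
Your proof is correct and follows exactly the route the paper intends: the paper omits the proof, stating only that it ``mimics the proof of Theorem~\ref{thm5}, with $p$ replacing $N$,'' and your argument is precisely that adaptation, using non-negativity plus the degree-$p$ compatibility bound to replace $S(VCG(b(v)))\le N\max_i v_i(VCG(b(v)))$ by the factor-$p$ version and then chaining the same inequalities through the maximum $a_{i_0}\in A'$.
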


The proof of this theorem mimics the proof of Theorem \ref{thm5},
with $p$ replacing $N$, and is therefore omitted. Note that in
combinatorial auctions the number of players that have a positive
valuation for any alternative is at most the number of goods.
Additionally, in any bundling equilibrium derived from a partition
of the set of goods, the number of players that have a positive
valuation for any alternative is at most the size of the partition.

\section{Combinatorial Auctions}

An analysis of ex post equilibria in VCG mechanisms for the setting of combinatorial auctions is provided in In Holzman, Kfir-Dahav, Monderer and Tennenholtz \cite{RNDM} and  Holzman and Monderer \cite{HolMon2002}. These papers focus on combinatorial auctions with 3 bidders or more with monotonic valuations. Their main finding is that the ex post equilibria of such auctions are characterized by submitting bids on a subset of the possible bundles (this is referred to as a bundling equilibrium), which is a quasi-field. Namely, it is non-empty set of sets that is closed under complements and under disjoint unions.

Note that a social alternative in the auction setting is an assignment of the set of all goods to the set of players (the bidders and the seller). However, a player's valuation depends only on the goods allocated to her (there are no externalities). Therefore specifying agent $i$'s valuation for a bundle $B$ induces valuations for all social alternatives in which agent $i$ receives the bundle $B$. In addition, monotonic valuations imply that allocating the grand bundle to agent $i$ always maximizes $i$'s valuation over the possible social alternatives and so a set of maximizers is identified.

This unique structure allows for a full characterization of the ex-post equilibria, in contrast with our partial characterization for the general case. Comparing the results for the general case with those of the auction setting is not obvious. To see this consider the following example:

\begin{example}
Consider an auction with 3 goods, $\{a,b,c\}$ and 3 players. Consider the subset of social alternatives:
$$S' = \{(\emptyset,\emptyset,\emptyset),(\emptyset,bc,\emptyset), (\emptyset,abc,\emptyset), (ab,\emptyset,\emptyset), (abc,\emptyset,\emptyset), (\emptyset,\emptyset,abc) \}.$$
The set $S'$ includes all the three maximizers,$(abc,\emptyset,\emptyset), (\emptyset,abc,\emptyset)$ and $(\emptyset,\emptyset,abc)$. Therefore, by Theorem \ref{thm3} this set induces an ex post equilibrium, where players bid truthfully over this set. On the other hand, players do not submit bids on a quasi-field. In fact, note that this ex post equilibrium is not a bundling equilibrium as players do not bid on the same bundles. This seems to contradict the findings of Holzman, Kfir-Dahav, Monderer and Tennenholtz \cite{RNDM} and  Holzman and Monderer \cite{HolMon2002}.
\end{example}

Is this a real contradiction? The answer is clearly no. To settle this note that when we cast our general model to the combinatorial auction setting we do not assume additional restrictions on valuation functions. In particular agents valuation may have externalities and need not be monotonic. Therefore, valuations and bids over $S'$ are silent about valuations outside of $S'$.

If, however, we adopt the two restrictions of monotonicity and no-externality then the valuations of $S'$ extend to additional social alternatives. For example, if player 1 bids $v$  on the social alternative $(ab,\emptyset, \emptyset)$ then this implies a bid of $v$ on the social alternative $(ab,c,\emptyset) \not \in S'$. However, the valuation of player 2 for this social alternative cannot be deduced from her valuations over $S'$, making the bids asymmetric. This, in turn, makes Theorem \ref{thm3} mute as the conditions do not hold.

\section{Appendix - Proofs}

We begin by some preliminary observations needed to prove our main results.

\subsection{Preliminary Observations}

We observe that  in any ex post equilibrium the most valued
alternative for a player must have the highest reported
valuation.

\begin{lemma}\label{l1}
Let $n \ge 1$ and let $b$ be an ex-post equilibrium for the class of
VCG games over $(N,A,{\cal V})$. Then for all $i$ and all $v_i \in
{\cal V}_i$, if $v_i(a) > v_i(a')$, for all $a' \not = a$ then
$b_i(v_i)(a) > b_i(v_i)(a')$ for all $a' \not =a$.
\end{lemma}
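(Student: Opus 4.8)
The plan is to exploit the fact that, by definition, an ex-post equilibrium for the \emph{class} of VCG games must remain an ex-post equilibrium for every subset of players $N' \subset N$, every tie-breaking mechanism $M \in {\cal M}_{N'}$, and every $h \in H$. The key move is to specialize to the singleton subset $N' = \{i\}$, which is legitimate since $n \ge 1$. In the resulting single-player VCG game the sum over $j \ne i$ is empty and $h_i$ is a constant, so player $i$'s utility collapses to $U_i = v_i(M(b_i(v_i))) - \text{const}$, where $M(b_i(v_i)) \in \arg\max_{a' \in A} b_i(v_i)(a')$. Maximizing utility is therefore equivalent to steering the mechanism toward the alternative that $i$ truly values most.

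Next I would pin down the best-response value. Player $i$ can deviate to a strategy $\hat{b}_i$ whose report $\hat{b}_i(v_i)$ has $a$ as its strict maximizer; the mechanism is then forced to select $a$, yielding utility $v_i(a)$. Since $a$ is the true strict maximizer of $v_i$, this is the highest utility $i$ can attain. Hence the equilibrium condition forces $v_i(M(b_i(v_i))) \ge v_i(a)$, which combined with $v_i(a) \ge v_i(a')$ for all $a'$ gives $v_i(M(b_i(v_i))) = v_i(a)$. Because $a$ is the \emph{strict} maximizer of $v_i$, the only alternative achieving value $v_i(a)$ is $a$ itself, so $M(b_i(v_i)) = a$.

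Finally I would close with the tie-breaking argument. The identity $M(b_i(v_i)) = a$ must hold for every $M \in {\cal M}_{\{i\}}$, and these mechanisms differ precisely in how they resolve ties among the maximizers of the reported bid. As $M$ ranges over ${\cal M}_{\{i\}}$, the selected alternative $M(b_i(v_i))$ ranges over the full set $\arg\max_{a'} b_i(v_i)(a')$; since every such selection equals $a$, this argmax must be the singleton $\{a\}$, i.e. $b_i(v_i)(a) > b_i(v_i)(a')$ for all $a' \ne a$, as required.

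The step I expect to be the crux is the last one: recognizing that the universal quantifier over tie-breaking rules $M \in {\cal M}_{\{i\}}$ is exactly what upgrades the weak conclusion (that $a$ is merely \emph{among} the maximizers of the report) to the strict inequality in the statement. If one fixed a single favorable tie-breaking rule, a report in which $a$ ties with some other alternative would still be a best response and strictness would fail; it is the robustness of the equilibrium against all tie-breaking rules that rules this out.
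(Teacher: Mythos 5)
Your proof is correct and follows essentially the same route as the paper's one-line argument: restrict to the single-player game with $i$, observe that the equilibrium forces the chosen alternative to be $i$'s true optimum $a$, and use the quantification over all tie-breaking rules in ${\cal M}_{\{i\}}$ to conclude that $a$ must be the \emph{strict} maximizer of the report. You merely spell out the tie-breaking step that the paper leaves implicit.
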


\begin{proof} Assume $v_i(a) > v_i(a')$, for all $a' \not = a$, and
consider the one player game with player $i$. In this game, the
chosen alternative must be optimal for $i$, and so it must be that
$i$'s valuation on it was the highest, namely $b_i(v_i)(a) >
b_i(v_i)(a')$ for all $a' \not =a$.
\end{proof}

\begin{lemma}\label{old5} Let $b$ be an ex-post
equilibrium for the class of VCG games over $(N,A,{\cal V})$. If $a$
is chosen at the profile $v$, then for any $i$,
$$v_i(a) +
\sum_{j\not = i}b_j(v_j)(a) \  \ge \ v_i(a') + \sum_{j\not =
i}b_j(v_j)(a'), \ \  \forall a' \in A.$$
\end{lemma}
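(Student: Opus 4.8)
The plan is to exploit the ex-post equilibrium condition directly, using truth-telling as the available deviation for player $i$. Fix a VCG game in the class, say a simple VCG game ($h\equiv 0$) together with the welfare-maximizing mechanism $M$ that produces the chosen alternative $a=M(b(v))$ at the profile $v$. Agent $i$'s utility in this game is $U_i=v_i(M(b(v)))+\sum_{j\not = i}b_j(v_j)(M(b(v)))$, since the term $h_i(b_{-i}(v_{-i}))$ is independent of $i$'s report and hence irrelevant to her best-response problem (as recorded in the two observations on VCG games). Thus playing the equilibrium strategy $b_i$ yields $i$ the payoff $v_i(a)+\sum_{j\not = i}b_j(v_j)(a)$, which is precisely the left-hand side of the claimed inequality.

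Next I would analyze the truth-telling deviation $\hat b_i(v_i)=v_i$, which is a legitimate alternative strategy since announcements need not lie in ${\cal V}_i$. Under $(\hat b_i,b_{-i})$ the mechanism selects some $\hat a=M(v_i,b_{-i}(v_{-i}))$, and because $M$ is a social welfare maximizer for the reported profile $(v_i,b_{-i}(v_{-i}))$, the alternative $\hat a$ maximizes $v_i(\cdot)+\sum_{j\not = i}b_j(v_j)(\cdot)$. Hence, regardless of how $M$ breaks ties, the payoff to $i$ from this deviation equals $\max_{a''\in A}\bigl[v_i(a'')+\sum_{j\not = i}b_j(v_j)(a'')\bigr]$.

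Finally I would invoke the ex-post equilibrium property. Since $b$ is an ex-post equilibrium for the class of VCG games over $(N,A,{\cal V})$, player $i$ has no profitable deviation against $b_{-i}$ at her true valuation $v_i$, so the equilibrium payoff is at least the deviation payoff:
$$v_i(a)+\sum_{j\not = i}b_j(v_j)(a)\ \ge\ \max_{a''\in A}\left[v_i(a'')+\sum_{j\not = i}b_j(v_j)(a'')\right]\ \ge\ v_i(a')+\sum_{j\not = i}b_j(v_j)(a')$$
for every $a'\in A$, which is exactly the asserted inequality.

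The argument is short, and the only points requiring care are bookkeeping ones: first, that the best-response problem is unaffected by the choice of $h$ and may therefore be analyzed in a simple VCG game; and second, that the truth-telling deviation attains the full maximum of the ``true-$i$-plus-reported-others'' welfare irrespective of the tie-breaking rule embedded in $M$. I expect the tie-breaking issue to be the subtlest step, but it is harmless here, since we only need the \emph{value} achieved at $\hat a$, and this value coincides with the maximum for any welfare-maximizing selection.
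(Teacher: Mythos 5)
Your proposal is correct and follows essentially the same route as the paper's proof: both compare the equilibrium payoff $v_i(a)+\sum_{j\neq i}b_j(v_j)(a)$ to the payoff from the truth-telling deviation, which attains $\max_{\hat a\in A}\bigl[v_i(\hat a)+\sum_{j\neq i}b_j(v_j)(\hat a)\bigr]$, and conclude via the ex-post equilibrium condition. The paper phrases it as a proof by contradiction while you argue directly (and you spell out the $h$-invariance and tie-breaking points the paper leaves implicit), but the substance is identical.
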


\begin{proof} Assume the claim is wrong, and that for some $i$
$v_i(a) + \sum_{j\not = i}b_j(v_j)(a) < v_i(a') + \sum_{j\not =
i}b_j(v_j)(a') \le \max_{{\hat a} \in A} v_i({\hat a}) + \sum_{j\not
= i}b_j(v_j)({\hat a}) $. Note that the left hand side is $i$'s
utility, whereas the right hand side is $i$'s utility from reporting
truthfully. Thus, contradicting the ex-post equilibrium assumption.
\end{proof}

\begin{lemma}\label{old2}
Let $n \ge 2$ and let $b$ be an ex-post equilibrium for the class of
VCG games over $(N,A,{(\real_+^A)}^n)$. Then for all $i$ and all
$v_i \in \real_+^A$, if $v_i(a)= v_i(a')$, then $b_i(v_i)(a) = b_i(v_i)(a')$.%
\end{lemma}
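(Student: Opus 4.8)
The plan is to argue by contradiction, exploiting the fact that in a VCG game agent $i$'s payoff depends on the selected alternative only through $v_i(\cdot)+\sum_{j\neq i}b_j(v_j)(\cdot)$ (the term $h_i$ is independent of $i$'s report). Hence if $i$ is genuinely indifferent between $a$ and $a'$ but over-reports on one of them, a suitably chosen opponent can make that over-report decisive and strictly harmful to $i$, violating the best-response condition already packaged into Lemmas \ref{l1} and \ref{old5}. Concretely, fix $i$ and $v_i$ with $v_i(a)=v_i(a')$ and suppose the conclusion fails; relabelling $a,a'$ if necessary, assume $b_i(v_i)(a)>b_i(v_i)(a')$ and set $\delta=b_i(v_i)(a)-b_i(v_i)(a')>0$. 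Since $b$ is an ex-post equilibrium for the whole class, it restricts to an ex-post equilibrium of the two-player subgame on $\{i,j\}$ for an arbitrary second agent $j$ (this is where $n\ge 2$ is used), and I may choose $v_j\in\real_+^A$ at will.

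The construction is to take $v_j(a)=L$, $v_j(a')=L+\delta/2$, and $v_j(a'')=0$ for every other $a''$, with $L$ a large positive constant. The main obstacle is that the mechanism selects according to the \emph{reported} profile $b_i(v_i)+b_j(v_j)$, and $b_j(v_j)$ is not under my direct control. The key observation that circumvents this is that Lemma \ref{old5}, applied to player $j$, forces the chosen alternative to maximize the \emph{known} function $\phi(\cdot):=v_j(\cdot)+b_i(v_i)(\cdot)$. With the above $v_j$ one computes $\phi(a)=L+b_i(v_i)(a)$ and $\phi(a')=L+\delta/2+b_i(v_i)(a')=L+b_i(v_i)(a)-\delta/2<\phi(a)$, while for $L$ larger than $\max_{a''}b_i(v_i)(a'')-b_i(v_i)(a)$ every other $\phi(a'')=b_i(v_i)(a'')$ is strictly smaller still. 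Thus $a$ is the unique maximizer of $\phi$, so the selected alternative must be $a$, independently of the tie-breaking rule.

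Once the selected alternative is pinned to $a$, I invoke the two lemmas in opposite directions to close the argument. Applying Lemma \ref{old5} to player $i$ at the selected $a$ and using $v_i(a)=v_i(a')$ yields $b_j(v_j)(a)\ge b_j(v_j)(a')$. On the other hand, $v_j$ has a unique maximum at $a'$ (since $v_j(a')=L+\delta/2$ strictly exceeds $v_j(a)=L$ and all the zeros), so Lemma \ref{l1} applied to player $j$ gives $b_j(v_j)(a')>b_j(v_j)(a)$. These two inequalities are contradictory, so the assumption $b_i(v_i)(a)\neq b_i(v_i)(a')$ is untenable, proving the lemma. The only delicate point to verify carefully is the choice of $L$ guaranteeing that $a$ is the strict $\phi$-maximizer, which is routine because $A$ is finite and $b_i(v_i)$ is a fixed vector.
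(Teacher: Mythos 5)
Your proof is correct and follows essentially the same route as the paper's: the same opponent gadget ($v_j$ equal to a large constant on $a$, that constant plus $\delta/2$ on $a'$, and zero elsewhere), with the final contradiction obtained from Lemma \ref{l1} applied to $j$. The only difference is presentational: you invoke Lemma \ref{old5} for player $j$ to pin the chosen alternative to $a$ in one step, whereas the paper carries out the underlying case analysis (ruling out $\hat a\notin\{a,a'\}$ and then $a'$) by exhibiting the truthful deviations explicitly.
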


\begin{proof}: Assume $v_i(a) = v_i(a')$ but $b_i(v_i)(a) >
b_i(v_i)(a')$. Let $\bar x = \max_{{\hat a}\in A} |b_i(v_i)({\hat a})|$.
Consider the following valuation function for some player $j \not
=i$. $v_j({\bar a})= 0$, for all ${\bar a} \not \in \{a,a'\}$, $\
v_j(a) = 3\bar x$ and $v_j(a') = 3\bar x +{b_i(v_i)(a) - b_i(v_i)(a') \over
2}$. Note that $v_j(a') > v_j(a)$.

Assume that some ${\hat a} \not \in \{a,a'\}$ is chosen in the two
player game with $i$ and $j$. Then player $j$'s utility does not
exceed $\bar x$. However, by bidding truthfully either $a$ or $a'$ would
have been chosen and $j$'s utility would be at least $2\bar x$, leading
to a contradiction. Therefore, either $a$ or $a'$ are chosen.

If $a'$ is chosen then $j$'s utility is $3\bar x +{b_i(v_i)(a) -
b_i(v_i)(a') \over 2} + b_i(v_i)(a') $. This is strictly less than
$3\bar x + b_i(v_i)(a)$, which is the utility $j$ could have received by
reporting truthfully on $a$ and zero on all other alternatives. This
contradicts the ex-post equilibrium assumption, and therefore it
must be the case that $a$ is chosen. In this case the utility of $i$
is $v_i(a) + b_j(v_j)(a)$. By our assumption $v_i(a) = v_i(a')$. By
lemma \ref{l1}, $b_j(v_j)(a) < b_j(v_j)(a')$. Therefore the utility
of $i$ is strictly less than $v_i(a') + b_j(v_j)(a')$, which is what
$i$ could have received by reporting truthfully on
$a'$ and zero on all other alternatives.%
\end{proof}

\subsubsection{The mean value exclusion property and parallelograms}

The main lemma we will use is a simple observation due to Monderer
and Holzman \cite{HolMon2002}, which they refer to as the ``mean
value exclusion" property.

\begin{definition}
The pair of functions $f_1,f_2: \mathbb{R_+} \to \mathbb{R_+}$
satisfies the \emph{mean value exclusion} condition if $\forall{s,t,y}\geq{0}$
\[
s<y\leq{f_1(s)} \; \mbox{or} \; f_1(s)\leq{y}<s\Rightarrow
y\neq{f_2(t)}.
\]
and symmetrically
\[
s<y\leq{f_2(s)} \; \mbox{or} \; f_2(s)\leq{y}<s\Rightarrow
y\neq{f_1(t)}.
\]
\end{definition}

Let $I \subset \mathbb{R}$ denote an open interval. We denote its
closure by ${\bar I}$, its supremum by $I^+ = \sup_{x\in I}x$ and
its infimum, by $I_- = \inf_{x \in I}x$.

Let $\Omega$ denote the union of disjoint open intervals in
$\mathbb{R}_+$ such that for any $I \in \Omega$ $I^- \not = 0$.
Consider an arbitrary function $G:\Omega \to \{-1,+1\}$ satisfying
$G(I_1)\times G(I_2) = -1$ whenever $I_1^+ = I_2^-$. We say that a
pair of function $h_i:\mathbb{R}_+\rightarrow \mathbb{R}_+$,
$i=1,2$, is \emph{$(\Omega,G)$-compatible} if it satisfies the
following conditions:
\begin{enumerate}
\item If $x \in \mathbb{R}_+ \backslash \displaystyle{\bigcup_{I \in{\Omega}}}{\bar{I}}$,\;
then $h_1(x) = h_2(x)= x$
\item If $x\in I \in \Omega$ and $G(I) = -1$ then $h_1(x)=I^-$ and $h_2(x)=I^+$
\item If $x\in I \in \Omega$ and $G(I) = +1$ then $h_1(x)=I^+$ and
$h_2(x)=I^-$
\item If $I \in {\Omega}$ and $I^- \not = J^+$ for all $J \in \Omega$,
then:
\begin{enumerate}
\item if $G(I)= -1$ then $h_1(I^-) = I^-$ and $h_2(I^-) = I^+$.
\item if $G(I)= +1$ then $h_1(I^-) = I^+$ and $h_2(I^-) = I^-$.
\end{enumerate}
\item If $I \in {\Omega}$ and $I^+ \not = J^-$ for all $J \in \Omega$,
then:
\begin{enumerate}
\item if $G(I)= -1$ then $h_1(I^+) = I^-$ and $h_2(I^+) = I^+$.
\item if $G(I)= +1$ then $h_1(I^+) = I^+$ and $h_2(I^+) = I^-$.
\end{enumerate}

\item If $I^+ =J^-$ for some $I \not = J \in \Omega$ and $G(I)= +1$ then
$h_1(I^+)=I^+$ and $h_2(I^+)\in \{I^-,J^+\}$.

\item If $I^+ =J^-$ for some $I \not = J \in \Omega$ and $G(I)= -1$ then
$h_1(I^+)\in \{I^-,J^+\} $ and $h_2(I^+)=I^+$.

\end{enumerate}

Note that a set of disjoint open intervals, $\Omega$, and a
function $G:\Omega \to \{-1,+1\}$ almost determines the pair of
functions which is $(\Omega,G)$-compatible. The different variants
of such a pair of functions stems from conditions (6) and (7)
which allow a choice between two possible values.

\begin{prp}[Parallelogram]\label{Pprp}
Suppose that $g_i:\mathbb{R}_+\rightarrow \mathbb{R}_+, i=1,2,$
satisfy the mean value exclusion condition. Then there exists a set
of disjoint open segments $\Omega$ and a function $G:\Omega \to
\{-1,+1\}$, with the restriction that  $I^+ = J^-$ implies $G(I)
\times G(J) = -1$, such that the pair $(g_1,g_2)$ is
$(\Omega,G)$-compatible function.
\end{prp}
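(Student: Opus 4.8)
The plan is to read the mean value exclusion condition geometrically: for each $s$, the open interval with endpoints $s$ and $g_1(s)$ is ``swept'' by $g_1$, and the condition says precisely that this swept interval (together with its far endpoint $g_1(s)$) contains no value of $g_2$, and symmetrically. I would first record this as a lemma: writing $R_i$ for the range of $g_i$ and $U_i=\bigcup_{s\ge 0}\bigl(\min(s,g_i(s)),\max(s,g_i(s))\bigr)$ for the open ``swept set'' of $g_i$, the condition gives $\mathrm{int}(U_1)\cap R_2=\emptyset$ and $\mathrm{int}(U_2)\cap R_1=\emptyset$. Two easy consequences follow: off the swept region both functions are the identity (if $x\notin\overline{U_1}$, then a nonzero displacement $g_1(x)\ne x$ would make $x$ a boundary point of $U_1$, so $g_1(x)=x$, and likewise for $g_2$); and any value of one function can only sit at a point that is not interior to the other's swept set.

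Each $U_i$ is open, hence a disjoint union of open intervals, and so is $U_1\cup U_2$. I would define $\Omega$ to be the collection of connected components of $U_1\cup U_2$; these are automatically disjoint open intervals. Two technical points must be settled here: that no component has left endpoint $0$ (equivalently, that $0$ is a common fixed point), which I would extract from the condition by noting that a value in $(0,g_i(0)]$ would have to be avoided by the other function's range while $0$ itself cannot be so avoided; and that these components are the correct support of the nontrivial behavior. The target of the construction is then to show that on each component $I=(I^-,I^+)$ the pair $(g_1,g_2)$ collapses to the two endpoints in opposite order, and to set $G(I)=-1$ when $g_1\equiv I^-$ (so $g_2\equiv I^+$) and $G(I)=+1$ otherwise.

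\textbf{The crux, and the main obstacle}, is showing that for $s\in I$ the image $g_1(s)$ lands exactly on an endpoint of $I$ --- never strictly inside $I$, and never at $s$ itself. Connectedness already forces $g_1(s)\in\overline I$. Suppose instead $y:=g_1(s)$ were interior to $I$. Then $y\in\mathrm{int}(U_1)$, so $y\notin R_2$ and hence $g_2(y)\ne y$; applying the exclusion condition to $g_2$ at $y$, the interval between $y$ and $g_2(y)$ carries no $g_1$-value, while $y$ itself is a $g_1$-value. A neighborhood argument around $y$ --- comparing, for points just above and just below $y$ inside $I$, to which side $g_2$ sends them, using that $g_2$ takes no value in $\mathrm{int}(U_1)$ --- then produces an open sub-interval around $y$ free of $g_1$-values, contradicting $y\in R_1$. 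The same argument rules out $g_1(s)=s$ for interior $s$. This neighborhood/casework step, especially propagating it cleanly across a whole component and at points shared by two adjacent components, is where the real work lies.

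Once endpoint-landing is established, the rest is short. If some $s\in I$ had $g_1(s)=g_2(s)=I^+$, then from $g_1(s)=I^+>s$ the exclusion condition gives $I^+\notin R_2$, contradicting $g_2(s)=I^+\in R_2$; the same for $I^-$. Hence $\{g_1(s),g_2(s)\}=\{I^-,I^+\}$ for every $s\in I$. If moreover $g_1$ took the value $I^-$ at some $s$ and $I^+$ at some $s'$ in $I$, then $g_1(s)=I^-<s$ gives $I^-\notin R_2$, while the opposite-endpoint fact forces $g_2(s')=I^-\in R_2$, a contradiction; so $g_1$ is constant on $I$, equal to one endpoint, and $g_2$ is the opposite constant. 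This defines $G$ and yields conditions (1)--(3). Finally I would verify the endpoint conditions (4)--(7) and the adjacency constraint $G(I)G(J)=-1$ for $I^+=J^-$ by applying the exclusion condition at the endpoints together with the now-known interior behavior: sharing an endpoint forces the two intervals to collapse $g_1$ to opposite sides there, hence opposite $G$, and the genuine two-way freedom in (6)--(7) corresponds exactly to the cases the condition leaves undetermined at a shared endpoint.
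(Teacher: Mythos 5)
Your overall architecture is sound and in fact closely parallels the paper's: the paper also builds a family of maximal open intervals on which the functions act nontrivially (its sets $D_3$--$D_6$, which by Lemmas \ref{idt} and \ref{plmi} turn out to coincide with the components of your $U_1\cup U_2$), proves they are disjoint, shows both functions are the identity off their closures, shows the collapse to opposite endpoints inside, and then treats shared endpoints. But the step you yourself flag as ``where the real work lies'' is not a technicality to be deferred --- it is essentially the entire content of the proposition, occupying the paper's Lemmas \ref{idt}, \ref{plmi} and \ref{endp} --- and the one concrete mechanism you propose for it does not close. The mean value exclusion condition only ever hands you \emph{one-sided, half-open} intervals free of values of the other function (from $s$, exclusive, to $g_i(s)$, inclusive); it never directly yields ``an open sub-interval around $y$ free of $g_1$-values,'' and even if it did, a punctured such neighborhood would not contradict $y\in R_1$. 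Welding together the one-sided exclusions at $s$, at $y=g_1(s)$, and at auxiliary points witnessing $y\in U_1$ into the two-sided statement you need is a genuine multi-case argument (the paper's proof of Lemma \ref{plmi} alone runs through four nested exclusion steps on each of two sides), and the case $g_1(s)=s$ for $s$ interior to a component has a different structure from the case where $g_1(s)$ is interior and distinct from $s$, so ``the same argument'' does not obviously apply there.

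A second, smaller but genuine error: your claim that the exclusion condition forces $0$ to be a common fixed point (so that no component has left endpoint $0$) is false. Take $g_1\equiv 1$ on $[0,1]$ with $g_1(x)=x$ for $x>1$, and $g_2\equiv 0$ on $[0,1]$ with $g_2(x)=x$ for $x>1$; this pair satisfies mean value exclusion, yet $g_1(0)=1\neq 0$ and the single component of $U_1\cup U_2$ is $(0,1)$. The paper does not derive $I^-\neq 0$ from the exclusion condition at all --- it proves it separately (Lemma \ref{zeroseg}) from the ex-post equilibrium property of the specific functions $g_i^{(a,a')}$ to which the proposition is later applied. So that part of your plan must either be removed from the scope of Proposition \ref{Pprp} or be supported by the equilibrium hypothesis, which is not available inside the proposition as stated.
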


To construct this family of segments from the given $g_1,g_2$ we
use the following definitions and lemmas:
\begin{defn}
$D_1$: A set of open segments such that\\
$\forall I \in D_1, g_1(I^-)=I^+$.
\end{defn}
\begin{defn}
$D_2$: A set of open segments such that\\
$\forall I \in D_2, g_1(I^+)=I^-$.
\end{defn}
\begin{defn}
$D_3$: A set of open segments such that\\
$\forall I \in D_3, g_1(I^-)=I^+$ and there is no $x'<I^-$ such that
$g_1(x')=I^+$.
\end{defn}
\begin{defn}
$D_4$: A set of open segments such that\\
$\forall I \in D_4, g_1(I^+)=I^-$ and there is no $y'>I^+$ such that
$g_1(y')=I^-$.
\end{defn}
\begin{defn}
$D_5$: A set of open segments such that $\forall I \in D_5$ there
exists a monotone decreasing sequence $\{x_k\}_{k=1}^\infty,
x_k\rightarrow I^-$ and $\forall{k} \in\mathbb{N}\; g_1(x_k)=I^+$,
but $g_1(x')\neq I^+$ for all $x'\leq I^-$.
\end{defn}
\begin{defn}
$D_6$: A set of open segments such that $\forall I \in D_5$ there
exists a monotone increasing sequence $\{y_k\}_{k=1}^\infty,
y_k\rightarrow I^+$ and $\forall{k} \in\mathbb{N}\; g_1(y_k)=I^-$,
but $g_1(y')\neq I^-$ for all $y'\geq I^+$.
\end{defn}
\begin{defn}\label{defD}
$ D= D_3 \cup D_4 \cup D_5 \cup D_6$.
\end{defn}
\begin{defn}
We say that a segment $I$ satisfies the \emph{$"+"$ condition} if
for all $t \in I, g_1(t)=I^+, g_2(t)=I^-$.
\end{defn}
\begin{defn}
We say that a segment $I$ satisfies the \emph{$"-"$ condition} if
for all $t \in I, g_1(t)=I^-, g_2(t)=I^+$.
\end{defn}

When a family of segments will be created in the sequel the "+" and
"-" conditions will be attached to a segment with a function G,
which will give a segment a +1 if it satisfies the "+" condition and
a -1 if it satisfies the "-" condition.

\begin{lemma}\label{D}
$\forall{I}\in{D_1 \cup D_2} \; \exists I' \in{D} $ such that $I
\subseteq I'.$
\end{lemma}

\begin{proof}[Proof of Lemma \ref{D}] Let $I\in D_1\cup D_2$. We first
assume that $g_1(I^-)=I^+$, and let $x_0=\inf \{x'|g_1(x')=I^+\}$,
of course $x_0\leq I^-$. If $g_1(x_0)=I^+$ then $I \subseteq
(x_0,I^+)\in D_3$. Else there exists a sequence
$\{x_k\}_{k=1}^\infty $ that converges to $x_0$, such that $\forall
k>0 \;g_1(x_k)=I^+, \forall x'\leq x_0\; g_1(x')\neq I^+$. So we
have $I\subseteq (x_0,I^+)\in D_5$. Now we assume that
$g_1(I^+)=I^-,$ and let $y_0=\sup \{y'|g_1(y')=I^-\}$, of course
$y_0\geq I^+$. If $g_1(y_0)=I^-$ then $I \subseteq (I^-,y_0)\in
D_4$. Else there exists a sequence $\{y_k\}_{k=1}^\infty $ that by
Lemma \ref{lemy} converges to $y_0<\infty$, such that $\forall k>0\;
g_1(y_k)=I^-, \forall y' \geq y_0 \; g_1(y')\neq I^-$. So, we have
$I\subseteq (I^-,y_0)\in D_6$.
\end{proof}

\begin{lemy}\label{lemy}
Let $\{y_k\}_{k=1}^\infty$ be a sequence such that $\forall{k}\in
\mathbb{N}\; x<y_k$ and $g_1(y_k)=x$ and $y_k\rightarrow y$.
Then $y<\infty$.\\
\end{lemy}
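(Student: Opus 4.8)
The plan is to argue by contradiction: suppose the conclusion fails, so that $y=\infty$, i.e.\ $y_k\rightarrow\infty$, and derive a contradiction purely from the fact that $g_1,g_2$ satisfy the mean value exclusion condition. The whole argument is a careful double application of that condition — once in each of its two symmetric forms — with nothing more than bookkeeping on the intervals involved.

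First I would apply the first half of the mean value exclusion condition with $s=y_k$. Since $g_1(y_k)=x$ and $x<y_k$, we are in the branch $g_1(s)\le y<s$, so the condition gives $y\neq g_2(t)$ for every $t\ge 0$ and every $y\in[x,y_k)$. In other words, for each fixed $k$ the function $g_2$ omits the entire half-open interval $[x,y_k)$ from its range. I would then let $k\rightarrow\infty$: because $y_k\rightarrow\infty$, every real $z\ge x$ eventually lies in some $[x,y_k)$, so the countable union of these intervals exhausts $[x,\infty)$, and we conclude that $g_2(t)<x$ for all $t\ge 0$.

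Finally I would invoke the symmetric half of the condition at $s=y_1$ (any fixed term $y_k>x$ works, and $x<y_1$ by hypothesis). Since $g_2(y_1)<x<y_1$, the value $y=x$ satisfies $g_2(s)\le y<s$, so the symmetric exclusion forces $x\neq g_1(t)$ for all $t$. This directly contradicts the hypothesis $g_1(y_1)=x$, and hence $y<\infty$.

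I expect the only delicate point to be matching each case against the correct branch of the disjunction in the definition — making sure the relevant $y$ sits on the appropriate closed/open side of the excluded interval so the right half of the property applies — together with the routine verification that $\bigcup_k[x,y_k)=[x,\infty)$ when $y_k\rightarrow\infty$. Once the two symmetric halves of the mean value exclusion property are lined up correctly, the contradiction is immediate and there is no genuine analytic difficulty.
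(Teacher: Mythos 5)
Your argument is correct and is essentially the paper's own proof: both apply the first half of the mean value exclusion condition at $s=y_k$ to rule $[x,y_k)$ (and hence, letting $k\to\infty$, all of $[x,\infty)$) out of the range of $g_2$, and then apply the symmetric half to the resulting inequality $g_2(t)<x<t$ to exclude $x$ from the range of $g_1$, contradicting $g_1(y_k)=x$. The only difference is cosmetic---the paper fixes a single $t>x$ and shows $g_2(t)$ has no admissible value in either $[x,\infty)$ or $[0,x)$, while you first bound $g_2$ everywhere below $x$ and then specialize to $t=y_1$.
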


\begin{proof}[Proof of Lemma \ref{lemy}]Assume for the sake of
contradiction that $y_k\rightarrow \infty$ and choose $t$ such
that $t>x$.
Then we shall see where $g_2(t)$ can be:\\
$g_2(t)\notin [x,\infty)$, otherwise we can find a $y_m$ such that
$x\leq g_2(t)<y_m, g_1(y_m)=x \Rightarrow g_1(y_m)\leq
g_2(t)<y_m$,
a contradiction to mean value exclusion.\\
$g_2(t)\notin [0,x)$, otherwise $g_2(t)<x<t, g_1(y_0)=x
\Rightarrow g_2(t)\leq g_1(y_0)<t$, a contradiction.\\
\end{proof}

\begin{lemma}\label{idt}
Let $x \in \mathbb{R}_+ \backslash \displaystyle{\bigcup_{I
\in{D}}}{\bar{I}}$ where $\bar{I}$ is the closure of $I$ . Then
$g_1(x)=g_2(x)=x$.\\
\end{lemma}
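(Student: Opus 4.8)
The plan is to treat $g_1$ and $g_2$ separately: establish $g_1(x)=x$ directly from the segment machinery and Lemma \ref{D}, and then pin down $g_2(x)=x$ by bringing in the mean value exclusion property.

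First I would show $g_1(x)=x$ by contradiction. Suppose $g_1(x)\neq x$. If $g_1(x)>x$, then the open segment $I=(x,g_1(x))$ satisfies $g_1(I^-)=g_1(x)=I^+$, so $I\in D_1$; if $g_1(x)<x$, then $I=(g_1(x),x)$ satisfies $g_1(I^+)=g_1(x)=I^-$, so $I\in D_2$. In either case Lemma \ref{D} yields some $I'\in D$ with $I\subseteq I'$. Since $x$ is an endpoint of $I$ and the inclusion forces $I'^-\leq I^-\leq I^+\leq I'^+$, we obtain $x\in[I'^-,I'^+]=\bar{I'}$, contradicting the hypothesis that $x\notin\bigcup_{J\in D}\bar{J}$. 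Hence $g_1(x)=x$.

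The harder step is $g_2(x)=x$, and this is where the main obstacle sits: the family $D$ is built only from $g_1$, so any control over $g_2$ must be funneled through mean value exclusion. Suppose $g_2(x)=z>x$ (the case $z<x$ being symmetric). Applying the symmetric mean value exclusion condition at $s=x$, the interval $(x,z]$ is disjoint from the range of $g_1$; in particular, for $w=(x+z)/2\in(x,z)$ we must have $g_1(w)\leq x$ or $g_1(w)>z$. If $g_1(w)\leq x$, then $(g_1(w),w)\in D_2$, and Lemma \ref{D} again places $x$ in the closure of a segment of $D$, a contradiction. If instead $g_1(w)>z$, set $q=g_1(w)$ and apply the first mean value exclusion condition at $s=w$: every $y$ with $w<y\leq q$ fails to be a value of $g_2$, and since $w<z\leq q$ this gives $z\neq g_2(t)$ for all $t$, contradicting $g_2(x)=z$.

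Combining the two sub-cases rules out $g_2(x)>x$, and the mirror-image argument — using that $[z,x)$ is disjoint from the range of $g_1$ and invoking the $D_1$ side of Lemma \ref{D} — rules out $g_2(x)<x$, so $g_2(x)=x$. I expect the only delicate point to be the bookkeeping in this second step: correctly matching the half-open endpoints appearing in the two forms of mean value exclusion and checking that each segment I construct really lands $x$ in the relevant closure. The $g_1$ statement is essentially immediate once Lemma \ref{D} is available.
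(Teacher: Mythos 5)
Your argument is correct and follows essentially the same route as the paper's proof: $g_1(x)=x$ via Lemma \ref{D} applied to $(x,g_1(x))$ or $(g_1(x),x)$, and then for $g_2$ a case analysis on where $g_1$ sends an intermediate point $w\in(x,g_2(x))$, using mean value exclusion to force $g_1(w)\le x$ and hence a segment of $D$ whose closure contains $x$. The only (immaterial) difference is bookkeeping: the paper excludes $g_1(w)\in[g_2(x),\infty)$ and $g_1(w)\in(x,g_2(x))$ by two separate mean-value-exclusion applications, while you first exclude the range $(x,z]$ at $s=x$ and then treat $g_1(w)>z$ at $s=w$.
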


\begin{proof}[Proof of Lemma \ref{idt}] Let $x\in \mathbb{R}_+ \backslash
\displaystyle{\bigcup_{I\in{D}}}{\bar{I}}$. Then $g_1(x)=x$,
otherwise, if $x<g_1(x)$ then $(x,g_1(x))\in D_1$ and by Lemma
\ref{D} it follows that there exists $I\in D$ such that
$(x,g_1(x))\subseteq I $, and hence $x\in \bar{I}$, contradicting
the assumption. The same goes for $x>g_1(x)$.\\
If $x<g_2(x)$ then let $x<t<g_2(x)$. We shall see where $g_1(t)$ can be:\\
$g_1(t)\notin [g_2(x),\infty)$, otherwise $t<g_2(x)\leq
g_1(t)$, a contradiction.\\
$g_1(t)\notin (x,g_2(x))$, otherwise $x<g_1(t)<
g_2(x)$, a contradiction.\\
It follows that $g_1(t)\leq x<t$ and hence $(g_1(t),t)\in D_2$. By
Lemma \ref{D} $\exists I\in D$ such that $(g_1(t),t)\subseteq I$
and therefore $ x\in \overline{I}$,
contradicting the assumption.\\
If $g_2(x)<x$ then let $g_2(x)<t<x$. We shall see where $g_1(t)$ can be:\\
$g_1(t)\notin [0,g_2(x)]$, otherwise $g_1(t)\leq g_2(x)<t$, a contradiction.\\
$g_1(t)\notin [g_2(x),x)$, otherwise $g_2(x)\leq g_1(t)<x$, a contradiction.\\
It follows that $t<x\leq g_1(t)$ and hence $(t,g_1(t))\in D_1$. By
Lemma \ref{D} $\exists I\in D$ such that $(t,g_1(t))\subseteq I$
and therefore $x\in \overline{I}$,
contradicting the assumption.\\
\end{proof}

\begin{lemma}\label{plmi}
Let $I\in D$. Then $I$ satisfies the "+" condition or the "-"
condition.\\
If $I$ satisfies the "+" condition then $g_2(I^-)=I^-$ and
$g_1(I^+)=I^+$.\\
If $I$ satisfies the "-" condition then $g_1(I^-)=I^-$ and
$g_2(I^+)=I^+$.
\end{lemma}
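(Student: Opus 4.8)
The plan is to prove the statement in full for a segment $I\in D_3$ (where $g_1(I^-)=I^+$ and, by the defining clause, $I^-$ is the least preimage of $I^+$ under $g_1$), and to reduce the cases $D_4,D_5,D_6$ to it by the evident left/right symmetry together with a limiting argument. For $I\in D_3$ the target is the ``+'' condition, namely $g_1\equiv I^+$ and $g_2\equiv I^-$ on $I$, together with the boundary identities $g_2(I^-)=I^-$ and $g_1(I^+)=I^+$.

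First I would seed everything from the single known value $g_1(I^-)=I^+$. Applying mean value exclusion with $s=I^-$ (so that $I^-<y\le g_1(I^-)=I^+$) shows at once that $g_2$ takes no value in $(I^-,I^+]$; call this $(\ast)$. Feeding $(\ast)$ back into the requirement that $g_1$ be \emph{allowed} to attain the value $I^+$ at $I^-$ --- i.e. that $I^+$ is not excluded as a value of $g_1$ by the symmetric form of the condition --- yields two global sign restrictions on $g_2$: every $s<I^+$ must satisfy $g_2(s)\le I^-$, and every $s>I^+$ must satisfy $g_2(s)>I^+$. These localize the fixed-point statement of Lemma \ref{idt} to the segment and are exactly what force the parallelogram pattern.

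Next I would pin down the interior. Fix $t\in I$. From $g_2(t)\le I^-<t$ and the symmetric exclusion (branch $g_2(t)\le y<t$) one gets that $g_1$ omits the whole interval $[g_2(t),t)\supseteq[I^-,t)$ as a value; letting $t\uparrow I^+$ and taking the union shows $g_1$ never attains a value in $[I^-,I^+)$. Hence for each $t\in I$ either $g_1(t)<I^-$ or $g_1(t)\ge I^+$. The case $g_1(t)<I^-$ is killed by a short self-contradiction: one application of the condition with $f_1=g_1$ pushes $g_2(t)$ strictly below $g_1(t)$, and a second with $f_2=g_2$ then excludes $g_1(t)$ itself as a value of $g_1$, i.e. forces $g_1(t)\neq g_1(t)$. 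This leaves $g_1(t)\ge I^+$, which I would then sharpen to $g_1(t)=I^+$; a dual argument, now using that $g_1$ is forced to attain $I^+$, gives $g_2(t)=I^-$. With $g_1\equiv I^+$ and $g_2\equiv I^-$ on $I$ the segment satisfies the ``+'' condition; since the reflected computation for $D_4$ gives $g_1\equiv I^-$, $g_2\equiv I^+$ and hence the ``-'' condition, every $I\in D$ satisfies exactly one of the two, and the endpoint identities $g_2(I^-)=I^-$ and $g_1(I^+)=I^+$ fall out of the boundary versions of the same exclusions.

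The hard part will be precisely the two ``no overshoot'' steps: ruling out $g_1(t)>I^+$ and $g_2(t)<I^-$ for interior $t$. Mean value exclusion alone does not forbid these locally --- it merely propagates an overshoot into an ever larger value, so a purely local argument runs into an unbounded ascending chain. This is exactly where the defining clause of $D$ enters: an overshoot or undershoot forces, through the interval-omission argument above, either some $x'<I^-$ with $g_1(x')=I^+$ (contradicting the minimality built into $D_3$) or an unbounded chain that Lemma \ref{lemy} forbids. For $D_5$ and $D_6$, where $I^+$ is only the limit of values $g_1(x_k)$ and is \emph{not} attained at or below $I^-$, the same scheme applies after replacing the single seed $g_1(I^-)=I^+$ by the sequence $g_1(x_k)=I^+$ and passing to the limit; here Lemma \ref{lemy} is what keeps the approximating sequences bounded so that $(\ast)$ and the closure of the overshoot argument survive the limit. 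I expect the bookkeeping in these limiting cases, rather than any new idea, to be the most delicate part of the write-up.
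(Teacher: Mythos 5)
Your proposal follows essentially the same route as the paper's proof: iterated application of the mean value exclusion condition to force $g_2\equiv I^-$ and $g_1\equiv I^+$ on $\bar{I}$ for $I\in D_3\cup D_5$ (with the defining minimality clause of $D_3$/$D_5$ killing the one remaining escape, namely $g_1(s)=I^+$ for some $s<I^-$, in the undershoot step), and the $D_4\cup D_6$ cases obtained by reversing the order on $\mathbb{R}_+$. The two deferred ``no overshoot'' steps do close exactly along the forbidden-preimage branch you describe --- your early global sign restrictions on $g_2$ already preclude any unbounded ascending chain, so Lemma \ref{lemy} is not actually needed here (the paper's proof of this lemma never invokes it) --- and the paper handles $D_5$ not by a limiting argument but by selecting, at each application, a witness $x_k$ with $g_1(x_k)=I^+$ close enough to $I^-$, which sidesteps the limiting bookkeeping you anticipate being delicate.
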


\begin{proof}[Proof of Lemma \ref{plmi}] Let $I\in D$. We will split
the proof into two parts:\\
Part 1: if $I\in D_3\cup D_5$ then $I$ satisfies the "+" condition
and $g_2(I^-)=I^-$, $g_1(I^+)=I^+$.\\
Part 2: if $I\in D_4\cup D_6$ then $I$ satisfies the "-"
condition and $g_1(I^-)=I^-$, $g_2(I^+)=I^+$.\\

Proof of part 1:
\begin{enumerate}
\item $\forall t, I^-\leq t<I^+$, we have $g_2(t)\leq I^-$:
Suppose this is not true. If $I\in D_3$ then $g_1(I^-)=I^+$ so if
$I^-<g_2(t)\leq g_1(I^-)$ it will be a contradiction. If
$t<g_1(I^-)<g_2(t)$ it will also be a contradiction. If $I\in D_5$
and $I^-<g_2(t)\leq I^+$ then we shall look at $x_k$ of the sequence
(that is given with a $I\in D_5$) such that $I^- < x_k<g_2(t)$ and
$g_1(x_k)=I^+$. Then $x_k<g_2(t)\leq g_1(x_k)$, a contradiction. If
$t<I^+<g_2(t)$ then again we shall look at the same $x_k$ and we
will get that $t<g_1(x_k)<g_2(t)$, a contradiction.\\

\item $\forall t, I^-\leq t<I^+$, we have $g_2(t)=I^-$: Indeed, suppose there
exists $I^-\leq t<I^+$ such that $g_2(t)<I^-$. Choose $s$ such that
$g_2(t)<s<I^-\leq t$. We shall see where $g_1(s)$ can be:\\
$g_1(s)\notin [0,g_2(t)]$: otherwise $g_1(s)\leq g_2(t)<s$, a
contradiction.\\
$g_1(s)\notin (g_2(t),t)$: otherwise $g_2(t)<g_1(s)<t$, a
contradiction.\\
$g_1(s)\notin [t,I^+)$: otherwise we will find $t'$ such that
$I^-\leq g_1(s)<t'<I^+$, then from $(1.)$ it follows that
$g_2(t')\leq I^-$, so $g_2(t')\leq I^-\leq g_1(s)<t'$, a contradiction.\\
$g_1(s)\notin (I^+,\infty)$: otherwise choose $t', I^+<t'<g_1(s)$.
We shall see where $g_2(t')$ can be:\\
If $g_2(t')\leq I^+$ we  have that
\[
        g_2(t')\leq I^+=\left\{
                        \begin{array}{ll}
                            g_1(I^-)<t'       & \mbox{$I\in D_3$}\\
                            g_1(x_k)<t'     & \mbox{$I\in D_5$}
                        \end{array}
                      \right.
\]
a contradiction.\\
If $I^+<g_2(t')\leq g_1(s)$ we  have that $s<g_2(t')\leq g_1(s)$, a
contradiction.\\
If $g_1(s)<g_2(t')$ we  have that
$t'<g_1(s)<g_2(t')$, a contradiction.\\

The only remaining possibility is $g_1(s)=I^+$. But since $s<I^-$
and $I\in D_3\cup D_5$, this is impossible.

\item $\forall t, I^-<t\leq I^+$, we have $g_1(t)\geq I^+$: Otherwise there
exists $x_0, I^-<x_0\leq I^+$ such that $I^-\leq g_1(x_0)<I^+$ or
$g_1(x_0)<I^-$.\\
If $g_1(x_0)<I_-$ then let $t_0$ be a number that satisfies
$(2.)$. We have that\\
$g_1(x_0)<I^-=g_2(t_0)<x_0$, a contradiction.\\
If $I^-\leq g_1(x_0)<I^+$ then choose $t_0, g_1(x_0)<t_0<I^+$. It
follows from $(2.)$ that \\
$g_2(t_0) = I^-\leq g_1(x_0)<t_0$, a contradiction.\\

\item $\forall t, I^-<t\leq I^+$, we have $g_1(t)=I^+$: Otherwise there
exists $t_0, I^-<t_0\leq I^+$ such that $g_1(t_0)>I^+$. Choose $s,
I^+<s<g_1(t_0)$. We shall see where $g_2(s)$ can be:\\
$g_2(s)\notin [g_1(t_0),\infty)$: otherwise $s<g_1(t_0)\leq
g_2(s)$, a contradiction.\\
$g_2(s)\notin (t_0,g_1(t_0))$: otherwise $t_0<g_2(s)<g_1(t_0)$, a
contradiction.\\
$g_2(s)\notin [0,t_0]$: otherwise
\[
        g_2(s)\leq t_0 \leq I^+=\left\{
                        \begin{array}{ll}
                            g_1(I^-)<s       & \mbox{$I\in D_3$}\\
                            g_1(x_k)<s     & \mbox{$I\in D_5$}
                        \end{array}
                      \right.
\]
a contradiction.\\
\end{enumerate}

    Proof of part 2:\\
As the mean value exclusion condition is symmetric, by reversing the
order of $\mathbb{R}_+$ and exchanging $I^-$ and $I^+$ the proof of
part 1 yields a proof of part 2.
\end{proof}

\begin{lemma}\label{disj}
Let $I_1,I_2\in D$ such that $ I_1\neq I_2$. Then $I_1\cap I_2=\phi$.\\
\end{lemma}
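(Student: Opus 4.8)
The plan is to argue by contradiction and to reduce everything to Lemma \ref{plmi}, which completely pins down the behaviour of $g_1$ and $g_2$ on each segment of $D$. Recall that Lemma \ref{plmi} guarantees that every $I \in D = D_3 \cup D_4 \cup D_5 \cup D_6$ satisfies either the ``$+$'' condition (so that $g_1(t)=I^+$ and $g_2(t)=I^-$ for every $t \in I$) or the ``$-$'' condition (so that $g_1(t)=I^-$ and $g_2(t)=I^+$ for every $t \in I$). The upshot is that on any segment of $D$ both $g_1$ and $g_2$ are \emph{constant}, equal to the two endpoints of that segment. This constancy is the only fact I would need.

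First I would suppose, for contradiction, that $I_1 \neq I_2$ are two segments in $D$ with a common point $t \in I_1 \cap I_2$. By Lemma \ref{plmi} each of $I_1, I_2$ satisfies the ``$+$'' or the ``$-$'' condition, so I would split into two cases according to whether the two segments carry the same condition or opposite conditions. In the same-condition case, say both satisfy ``$+$'' (the ``$-$'' subcase being identical), evaluating at the shared point gives $I_1^+ = g_1(t) = I_2^+$ and $I_1^- = g_2(t) = I_2^-$, so $I_1$ and $I_2$ have identical endpoints and hence $I_1 = I_2$, contradicting $I_1 \neq I_2$.

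In the opposite-condition case, say $I_1$ satisfies ``$+$'' and $I_2$ satisfies ``$-$'', evaluating $g_1$ at $t$ yields $I_1^+ = g_1(t) = I_2^-$. But $t \in I_1$ forces $t < I_1^+$ while $t \in I_2$ forces $t > I_2^-$; together with $I_1^+ = I_2^-$ this gives $t < I_1^+ = I_2^- < t$, a contradiction. These two cases exhaust the possibilities, so no common point $t$ can exist and $I_1 \cap I_2 = \phi$.

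I do not expect a genuine obstacle here: once Lemma \ref{plmi} is available the whole argument is essentially bookkeeping at a single shared point. The only point deserving a moment's care is to confirm that Lemma \ref{plmi} applies uniformly to every member of $D$ (which it does by its very statement), so that the endpoint values of $g_1$ and $g_2$ may legitimately be read off at the common point $t$ from \emph{both} segments simultaneously.
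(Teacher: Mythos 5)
Your proof is correct and follows essentially the same route as the paper's: assume a common point $t$, invoke Lemma \ref{plmi} to read off the values of $g_1$ and $g_2$ at $t$ from both segments, and derive a contradiction in each of the same/opposite sign cases. The only cosmetic difference is in the opposite-sign case, where the paper concludes from $I_1^-=I_2^+$ and $I_1^+=I_2^-$ that one segment has its endpoints in the wrong order, while you get the contradiction directly from $t<I_1^+=I_2^-<t$; both are the same bookkeeping.
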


\begin{proof}[Proof of Lemma \ref{disj}]Let us assume for the sake of
contradiction that $t\in I_1\cap I_2 \neq \phi$. Then
if $I_1=(I_1^-,I_1^+), I_2=(I_2^-,I_2^+)$ we have three cases:\\
\begin{enumerate}
\item $I_1,I_2$ both satisfy the "+" condition. Then:\\
$g_1(t)=I_1^+$ and $g_1(t)=I_2^+$\\
$g_2(t)=I_1^-$ and $g_2(t)=I_2^-$\\
$\Rightarrow I_1^+=I_2^+, I_1^-=I_2^- \Rightarrow I_1=I_2$, a
contradiction.

\item $I_1,I_2$ both satisfy the "-" condition. Then:\\
$g_1(t)=I_1^-$ and $g_1(t)=I_2^-$\\
$g_2(t)=I_1^+$ and $g_2(t)=I_2^+$\\
$\Rightarrow I_1^+=I_2^+, I_1^-=I_2^- \Rightarrow I_1=I_2$, a
contradiction.
\item $I_1$ satisfies the "-" condition, and $I_2$ satisfies the
"+" condition then:\\
$I_1$ satisfies the "-" condition $\Rightarrow g_1(t)=I_1^-$ and $g_2(t)=I_1^+$\\
$I_2$ satisfies the "+" condition $\Rightarrow g_1(t)=I_2^+$ and $g_2(t)=I_2^-$\\
$\Rightarrow I_1^-=I_2^+, I_1^+=I_2^-$. Hence one of the segments is
not defined as a legal segment, a
contradiction.\\

The symmetric case to (3.) has a symmetric proof.
\end{enumerate}
\end{proof}

\begin{lemma}\label{endp}
Let t be an end point of a segment $I\in D$. Then:
\begin{enumerate}

\item If t is an end point of I alone then $\forall x\in I, g_1(t)=g_1(x),g_2(t)=g_2(x)$.
\item If t is an end point of two segments $I,J\in D$ then:
\begin{enumerate}
\item The two segments have opposite signs.
\item $\forall x\in I, g_1(t)=g_1(x)$ or $\forall x\in J, g_1(t)=g_1(x)$
and $\forall x\in I, g_2(t)=g_2(x)$ or $\forall x\in J,
g_2(t)=g_2(x)$.
\end{enumerate}
\end{enumerate}
\end{lemma}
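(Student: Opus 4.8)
The plan is to determine both boundary values $g_1(t)$ and $g_2(t)$ at each endpoint by combining the interior description of Lemma \ref{plmi} with the defining properties of the families $D_3$--$D_6$, using Lemma \ref{D} and the disjointness of Lemma \ref{disj} to locate (or exclude) an adjacent segment. First observe the required dichotomy: since the segments of $D$ are pairwise disjoint (Lemma \ref{disj}), a point $t$ can be the right endpoint of at most one segment and the left endpoint of at most one, so $t$ is an endpoint of exactly one segment (case~1) or of exactly two, one on each side (case~2). Recall from Lemma \ref{plmi} that on a ``$+$'' segment $I$ one has $g_1\equiv I^+$ and $g_2\equiv I^-$ on the interior, together with $g_2(I^-)=I^-$ and $g_1(I^+)=I^+$ (with mirror statements for ``$-$'' segments). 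Thus at each endpoint one of the two functions already agrees with its interior value, and the work is to control the other one.

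Part 2(a) is quickest: if $t=I^+=J^-$ with both $I,J$ of type ``$+$'', then $g_1(x)=t$ for $x\in I$ and $g_2(y)=t$ for $y\in J$, so applying mean value exclusion with $s=x<t\le g_1(x)$ gives $t\ne g_2(y)$, a contradiction; the two-``$-$'' case is symmetric, so adjacent segments carry opposite signs. For the boundary values, the defining endpoint of a $D_3$ or $D_4$ segment is immediate: $I\in D_3$ gives $g_1(I^-)=I^+$ and $I\in D_4$ gives $g_1(I^+)=I^-$, each matching the interior, which settles that function at that endpoint in case~1.

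The first substantial step concerns $I\in D_5$ (and, by the order-reversal symmetry of the mean value exclusion condition already used in Lemma \ref{plmi}, $D_6$). Here the definition forces $g_1(I^-)\ne I^+$, so $g_1$ cannot match on $I$, and the statement can hold only if $I^-$ is shared. Writing $v=g_1(I^-)$, I would rule out $v\ge I^-$: if $v>I^-$ then $(I^-,v)\in D_1$, and Lemma \ref{D} produces $J\in D$ with $(I^-,v)\subseteq J$ and $J^+=v\ne I^+$, forcing $I\cap J\ne\emptyset$ with $J\ne I$, against Lemma \ref{disj}; if $v=I^-$ then, taking the sequence $x_k\downarrow I^-$ with $g_1(x_k)=I^+$ from the definition of $D_5$ (so $g_2(x_k)=I^-$), the symmetric mean value exclusion at $s=x_k$ yields $[I^-,x_k)\cap\mathrm{range}(g_1)=\emptyset$, contradicting $g_1(I^-)=I^-$. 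Hence $v<I^-$, so $(v,I^-)\in D_2$, and Lemma \ref{D} with disjointness identifies the adjacent segment $J=(v,I^-)\in D_4$; it has opposite sign and $g_1(I^-)=v=J^-=g_1|_J$, so $g_1$ matches on $J$ while $g_2(I^-)=I^-$ still matches on $I$, exactly as case~2(b) requires.

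What remains, and where I expect the main difficulty, is the one boundary value untouched by all of the above and not directly constrained by the $g_1$-based definitions: $g_2(I^+)$ for a ``$+$'' segment (equivalently, by order reversal, $g_2(I^-)$ for a ``$-$'' segment). Feeding the interior points $t'\uparrow I^+$, where $g_1(t')=I^+$, into mean value exclusion first shows $(I^-,I^+]\cap\mathrm{range}(g_2)=\emptyset$, so $g_2(I^+)\le I^-$ or $g_2(I^+)>I^+$. The plan is then to exclude $g_2(I^+)<I^-$, to show that the alternative $g_2(I^+)>I^+$ can occur only when $I^+$ is shared, in which case $g_2(I^+)=J^+$ for the adjacent ``$-$'' segment $J$ (matching $J$), and to conclude $g_2(I^+)=I^-$ (matching $I$) whenever $I^+$ is alone. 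Each of these is obtained by turning the hypothesized value into a $D_1$ or $D_2$ segment and invoking Lemma \ref{D} together with disjointness, exactly as in the previous paragraph; but the coupling here is between $g_2$ and the $g_1$-built family $D$, so the mean value exclusion bookkeeping is the most delicate part of the argument.
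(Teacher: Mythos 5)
Your overall strategy is the same as the paper's: read off one of the two values at each endpoint from the interior behaviour via Lemma \ref{plmi}, and pin down the other by combining the mean value exclusion condition with Lemma \ref{D} and the disjointness of Lemma \ref{disj}. The parts you actually carry out are correct. Claim 2(a) is essentially the paper's argument. For the $g_1$-value at the ``defining'' endpoint ($g_1(I^-)$ for a ``$+$'' segment, $g_1(I^+)$ for a ``$-$'' segment), your dichotomy works: it is given outright by the definition of $D_3$ resp.\ $D_4$, while for $D_5$ resp.\ $D_6$ your argument correctly shows that the endpoint must be shared with an adjacent opposite-sign segment on which $g_1$ matches (and your use of the fact, from the proof of Lemma \ref{D}, that the containing segment inherits the relevant endpoint is legitimate).

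There is, however, a genuine gap: your last paragraph, which you yourself flag as the main difficulty, is only a plan, and it is exactly where the bulk of the paper's proof lives. For a ``$+$'' segment $I$ you still must show $g_2(I^+)=I^-$ when $I^+$ is unshared and $g_2(I^+)\in\{I^-,J^+\}$ when $I^+=J^-$ (and the mirror statements for ``$-$'' segments at $I^-$). None of this follows from the definitions of $D_1$--$D_6$, which are all phrased in terms of $g_1$, so your recipe of ``turning the hypothesized value into a $D_1$ or $D_2$ segment'' does not apply directly: a hypothesized value of $g_2$ does not by itself produce a $g_1$-defined segment. What is actually required is an intermediate layer of mean value exclusion bookkeeping before Lemma \ref{D} can be invoked. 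For instance, to exclude $g_2(I^+)<I^-$ one first proves that some $s\in\bigl(g_2(I^+),I^-\bigr)$ satisfies $g_1(s)\ge I^+$ by eliminating every other location for $g_1(s)$, and only then does $(s,g_1(s))\in D_1$ yield, via Lemma \ref{D}, a segment meeting $I$ but different from it, contradicting Lemma \ref{disj}; to exclude $g_2(I^+)>I^+$ in the unshared case one similarly pins down $g_1(s)$ for $s\in\bigl(I^+,g_2(I^+)\bigr)$ to manufacture a $D_2$ segment with left endpoint $I^+$; and in the shared case one must separately rule out $g_2(I^+)$ lying in $[0,I^-)$, $(I^-,I^+]$, $(I^+,J^+)$ and $(J^+,\infty)$. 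Until these case analyses are written out, the lemma is not proved.
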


\begin{proof}[Proof of Lemma \ref{endp}]
If $t$ is an end point of $I$ alone then we shall split the proof to
4 parts:
\begin{enumerate}
\item $I$ satisfies the "+" condition and $t=I^-$. Then by
Lemma \ref{plmi} it follows that
$g_2(I^-)=I^-$, we will show that $g_1(I^-)=I^+$:\\
$g_1(I^-)\notin [I^-,I^+)$: Otherwise, we will find a number $s,
g_1(I^-)<s<I^+$, and then by Lemma \ref{plmi} it follows that
$g_2(s)=I^-$. This implies $g_2(s)=I^-\leq g_1(I^-)<s$,
a contradiction.\\
$g_1(I^-)\notin [0,I^-)$: Otherwise $(g_1(I^-),I^-)\in D_2$ and by
Lemma \ref{D} it follows that there exists a segment $I'\in D$ such
that $(g_1(I^-),I^-)\subseteq I'$. But by Lemma \ref{disj} $I\cap
I'=\phi$. Hence  $I^-$ is the right end point of $I'$, a
contradiction to the assumption of this case.\\
$g_1(I^-)\notin (I^+,\infty)$: Otherwise $(I^-,g_1(I^-))\in D_1$ and
therefore $ \exists I'\in D$ such that $I \varsubsetneq
(I^-,g_1(I^-))\subseteq I'$, a contradiction to Lemma \ref{disj}.
The only remaining possibility is $g_1(I^-)=I^+$.
\item
$I$ satisfies the "+" condition and $t=I^+$. Then by Lemma
\ref{plmi}
it follows that $g_1(I^+)=I^+$, we will show that $g_2(I^+)=I^-$:\\
$g_2(I^+)\notin (I^-,I^+]$: Otherwise, we will find a number $s,
I^-<s<g_2(I^+)$, and then by Lemma \ref{plmi} it follows that
$g_1(s)=I^+$. This implies $s<g_2(I^+)\leq I^+=g_1(s)$, a contradiction.\\
$g_2(I^+)\notin [0,I^-)$: Otherwise, the segment $(g_2(I^+),I^+)$ is
not contained in $I$. We will show that there exists another $J\in
D$ such that $(g_2(I^+),I^+)\subseteq J$. This will be a
contradiction to Lemma \ref{disj}. To  show the existence of such a
segment we shall show that for $s$ such that $g_2(I^+)<s<I^-$,
$g_1(s)\geq I^+$. This will imply by using Lemma \ref{D} that there
exists a segment $J\in D$ as desired. Let $s$ satisfy
$g_2(I^+)<s<I^-$. We will show that all other possibilities cannot
be true:\begin {tabbing} aaa \= \kill
    \>$g_1(s)\notin [0,g_2(I^+)]$: Otherwise $g_1(s)\leq g_2(I^+)< s$, a contradiction.\\
    \>$g_1(s)\notin (g_2(I^+),I^+)$: Otherwise $g_2(I^+)<g_1(s)<I^+$, a contradiction.\\

$g_2(I^+)\notin (I^+,\infty)$: Otherwise,  we will show that there\\
exists a second segment $J\in D$ such that $J\neq I$ but $I^+$
is a left\\
end point of $J$. Let $s\in (I^+,g_2(I^+))$. Then\\
aaa \=                    \kill
    \>$g_1(s)\notin [0,I^-]$: Otherwise $g_1(s)\leq I^-=g_2(I^-)<s$, a contradiction.\\
    \>$g_1(s)\notin I$: Otherwise, we can find a number $d\in (g_1(s),I^+)$\\
    \>and then by Lemma \ref{plmi} $g_2(d)=I^-< g_1(s)<d$, a contradiction.\\
    \>$g_1(s)\notin (I^+,g_2(I^+)]$: Otherwise $I^+<g_1(s)\leq g_2(I^+)$, a contradiction.\\
    \>$g_1(s)\notin (g_2(I^+),\infty)$: Otherwise $s<g_2(I^+)<g_1(s)$, a contradiction.\\
So, by Lemma \ref{D} there exists a segment $J\in D$ such that
$(I^+,g_2(I^+))\subseteq J$.\\
By Lemma \ref{disj} $J\bigcap I=\phi $, so $I^+$ is the left end
point of $J$ and $I$,\\
a contradiction to the assumption of this case.\\
The only remaining possibility is $g_2(I^+)=I^-$.
\end{tabbing}
\item $I$ satisfies the "-" condition and $t=I^-$: the proof
is similar to (2.).
\item $I$ satisfies the "-" condition and $t=I^+$: the proof
is similar to (1.).
\end{enumerate}
    If $t$ is an end point of two segments, we split the proof to two
parts:
\begin{enumerate}
\item The two segments have opposite signs:\\
Assume for the sake of contradiction that $t$ is an end point of two
segments $I_1=(x,t), I_2=(t,y)$ that both satisfy the "+" condition.
Then, by Lemma \ref{plmi} and the fact that $t$ is the left end
point of $I_2$, it follows that $g_2(t)=t$. It also follows by Lemma
\ref{plmi} that $\forall x' \in (x,t)\;
g_1(x')=t$. Hence $x'<t=g_2(t)=g_1(x')$, a contradiction.\\
In the same way it can be shown that $t$ can't be an end point of
2 segments that satisfy the "-" condition.
\item \begin{tabbing} Now we shall show that\\
$\forall x\in I_1, g_1(t)=g_1(x)$ or $\forall x\in I_2, g_1(t)=g_1(x)$ and\\
$\forall x\in I_1, g_2(t)=g_2(x)$ or $\forall x\in I_2, g_2(t)=g_2(x)$:\\
Let us say that $t$ is a common end point of $I_1=(x,t)$ that
satisfies the "+"\\
condition, and of $I_2=(t,y)$ that satisfies the "-"
condition.\\
(The opposite case is handled in a similar way.)\\
By Lemma \ref{plmi} and the fact that $I_1$ satisfies the
"+" condition,\\
it follows that $g_1(t)=t$ as for any $x'\in (x,t)$.\\
So we need to show that $g_2(t)\in \{x,y\}$:\\
$g_2(t)\notin [0,x)$: Otherwise, choose $s, g_2(t)<s<x$. We shall
see where $g_1(s)$ can be:\\
aaa \=                    \kill
    \>$g_1(s)\notin [0,g_2(t)]$: Otherwise $g_1(s)\leq g_2(t)<s$, a contradiction.\\
    \>$g_1(s)\notin (g_2(t),t)$: Otherwise $g_2(t)<g_1(s)<t$, a contradiction.\\
    \>$g_1(s)\notin [t,\infty)$: Otherwise choose $y', x<y'<t$. \\
By Lemma \ref{plmi} it follows that\\
$g_2(y')=x$ and hence $s<g_2(y')<t\leq g_1(s)$, a contradiction.\\
$g_2(t)\notin (x,t]$: Otherwise choose $y', x<y'<g_2(t)$.\\
By Lemma \ref{plmi} it follows that\\
$g_1(y')=t$ and hence $y'<g_2(t)\leq t=g_1(y')$, a contradiction.\\
$g_2(t)\notin (t,y)$: Otherwise choose $y', g_2(t)<y'<y$.\\
By Lemma \ref{plmi} it follows that\\
$g_1(y')=t$ and hence $g_1(y')=t<g_2(t)<y'$, a contradiction.\\
$g_2(t)\notin (y,\infty)$: Otherwise choose $s, y<s<g_2(t)$.\\
We shall see where $g_1(s)$ can be:\\
    aaa \=                    \kill
    \>$g_1(s)\notin [0,y]$: Otherwise choose $y', t<y'<y$.\\
By Lemma \ref{plmi} it follows that\\
    \>$g_2(y')=y$ and hence $g_1(s)\leq g_2(y')<s$, a contradiction.\\
    \>$g_1(s)\notin (y,g_2(t)]$: Otherwise $t<y<g_1(s)\leq g_2(t)$,\\
    a contradiction.\\
    \>$g_1(s)\notin [g_2(t),\infty)$: Otherwise $s<g_2(t)\leq g_1(s)$,\\
    a contradiction.\\
We have shown that  $g_2(t)\in \{x,y\}$.
\end{tabbing}
\end{enumerate}
\end{proof}

\begin{proof}[Proof of Proposition \ref{Pprp}] Suppose that
$g_1,\; g_2$ satisfy the mean value exclusion condition. Then the
set of segments $D$ (see Definition \ref{defD}) is the set of
segments promised in the proposition. $D$ satisfies all
the demanded properties:\\
By Lemma \ref{disj} the segments are disjoint.\\
>From the definition of $D$ the segments are open.\\
By Lemma \ref{plmi} all the segments have signs. So we can define\\
a function $G:D\rightarrow \{+1,-1\}$ as follows:\\
 $\forall I \in D$
 \[
   G(I)=\left\{
            \begin{array}{ll}
             +1           & \mbox{if $I$ satisfies the $"+"$ condition}\\
             -1           & \mbox{if $I$ satisfies the $"-"$ condition}\\
            \end{array}
                 \right.
\]
By Lemma \ref{endp} no two segments with a common end point
have the same sign.\\
By Lemmas \ref{idt}, \ref{plmi} and \ref{endp} $g_i$ is a $(D-i,G)$
compatible function, $i=1,2$.
\end{proof}

\subsubsection{Ex post equilibria and parallelograms}

Throughout the proofs we make use the following valuation function
for $i$, $Z_i^{(a,s)} \in {\real^A_+}$, which assigns $a \in A$ the
value $s > 0$ and zero otherwise.

We denote
$g^{(a,a')}_i(s)=b_i(Z_i^{(a,s)}(a))-b_i(Z_1^{(a,s)}(a'))$

\begin{lemma}\label{l3}%
Let $n \ge 2$ and $|A| \ge 3$. Let $b$ be an ex-post equilibrium
for the class of VCG games over $(N,A,{\cal V})$. Assume that for
some $i,j \in N$ and for any $s \in \mathbb{R}$, $Z_i^{(a,s)} \in
{\cal V}_i$ and $Z_j^{(a',s)} \in {\cal V}_j $. Then
$g^{(a,a')}_i$ and $g^{(a',a)}_j$ satisfy the mean exclusion
condition.
\end{lemma}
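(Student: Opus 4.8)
The plan is to realize both functions as differences of equilibrium bids in a single two-player game and then read off the winner of that game in two independent ways — once from a player's no-deviation condition (Lemma \ref{old5}) and once from the definition of the mechanism — and show these two readings are incompatible precisely when mean value exclusion fails. Concretely, fix $s,t>0$ and pass to the subgame on $N'=\{i,j\}$ (legitimate since $b$ is an ex-post equilibrium for the class, hence for $N'$ and every tie-breaking rule $M\in\mathcal{M}_{N'}$), letting player $i$ hold $Z_i^{(a,s)}$ and player $j$ hold $Z_j^{(a',t)}$. Writing $B_i=b_i(Z_i^{(a,s)})$ and $B_j=b_j(Z_j^{(a',t)})$, the definitions give $f_1(s):=g_i^{(a,a')}(s)=B_i(a)-B_i(a')$ and $f_2(t):=g_j^{(a',a)}(t)=B_j(a')-B_j(a)$, and the reported-total gap between $a$ and $a'$ is $[B_i(a)+B_j(a)]-[B_i(a')+B_j(a')]=f_1(s)-f_2(t)$.

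The first key step is to rule out that any third alternative can be selected, which is where I would invoke Lemma \ref{l1}: since $s,t>0$, $a$ is the strict $B_i$-maximizer and $a'$ the strict $B_j$-maximizer. Hence $v_i+B_j$ is just $B_j$ with a spike of height $s$ at $a$, and its maximum over all of $A$ is attained only inside $\{a,a'\}$ — at $a$ when $s>f_2(t)$, at $a'$ when $s<f_2(t)$, and at both when $s=f_2(t)$, because $\max_{m\neq a}B_j(m)=B_j(a')$ strictly. By Lemma \ref{old5} the alternative chosen in the subgame maximizes $v_i+B_j$, so it lies in $\{a,a'\}$ and, whenever $s\neq f_2(t)$, is pinned down uniquely.

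The crucial leverage comes from the fact that this holds for \emph{every} tie-breaking rule $M\in\mathcal{M}_{N'}$: if the uniquely forced winner is $a'$ (the case $s<f_2(t)$), then every $M$ must select $a'$, which is possible only if $a'$ is the \emph{unique} maximizer of $B_i+B_j$, i.e.\ only if $f_1(s)-f_2(t)<0$. This is exactly how I would close the two halves of the condition. For the first half, suppose $s<y\le f_1(s)$ with $y=f_2(t)$; then $s<f_2(t)$ forces $a'$, yet $y\le f_1(s)$ gives $f_1(s)-f_2(t)\ge 0$, so $a$'s reported total is at least $a'$'s and $a'$ cannot be the strict maximizer — contradiction, hence $y\neq f_2(t)$. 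For the second half, suppose $f_1(s)\le y<s$ with $y=f_2(t)$; then $s>f_2(t)$ forces $a$, while $f_1(s)\le f_2(t)$ makes the gap nonpositive, so $a$ cannot be the strict maximizer — again a contradiction.

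Finally, the two symmetric implications (with the roles of $f_1$ and $f_2$ interchanged) follow verbatim after swapping $(i,a)$ with $(j,a')$: take $i$ with $Z_i^{(a,t)}$ and $j$ with $Z_j^{(a',s)}$ and apply Lemma \ref{old5} to player $j$, whose objective $v_j+B_i$ now carries the spike at $a'$. I expect the only delicate point to be the one already flagged: correctly excluding third alternatives via Lemma \ref{l1} and then using the ``for all tie-breaking rules'' clause to upgrade ``the chosen alternative must be $a'$'' into ``$a'$ is the unique welfare maximizer of the announced bids'', which is what lets the sign of $f_1(s)-f_2(t)$ finish the argument.
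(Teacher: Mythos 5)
Your proof is correct and rests on the same construction as the paper's: the two-player subgame in which $i$ reports $b_i(Z_i^{(a,s)})$ and $j$ reports $b_j(Z_j^{(a',t)})$, with Lemma \ref{l1} ensuring each player's announced bid is strictly maximized at his own spike. The difference lies in how the contradiction is extracted. The paper fixes one concrete tie-breaking rule (ties between $a$ and $a'$ go to $a$), eliminates third alternatives by letting $i$ deviate to the all-zero report, deduces that $a$ is chosen under the violating inequalities $s< g^{(a',a)}_j(t)\le g^{(a,a')}_i(s)$, and then exhibits truth-telling as a strictly profitable deviation for $i$. You instead route everything through Lemma \ref{old5}: the chosen alternative must maximize $Z_i^{(a,s)}+b_j(Z_j^{(a',t)})$, which simultaneously kills third alternatives and pins the winner to $a$ or $a'$ according to the sign of $s-g^{(a',a)}_j(t)$; you then use the quantifier over all $M\in{\cal M}_{N'}$ to upgrade ``the winner is always $a'$'' to ``$a'$ is the \emph{unique} maximizer of the announced totals,'' which fixes the sign of $g^{(a,a')}_i(s)-g^{(a',a)}_j(t)$ and collides with the assumed violation. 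Your packaging is somewhat cleaner --- it reuses Lemma \ref{old5} rather than redoing the deviation argument inline, and it treats both halves of the condition as well as the symmetric clause explicitly, where the paper writes out only the case $s<y\le g^{(a,a')}_i(s)$ --- while the paper's version makes the failure of equilibrium concrete in a single mechanism. The only caveat, which you share with the paper, is that the argument needs $s,t>0$ so that Lemma \ref{l1} applies; this is consistent with the paper's definition of $Z_i^{(a,s)}$, which requires $s>0$.
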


\begin{proof}
Assume, to the contrary of the claim, $b_i(Z_i^{(a,s)})(a) -
b_i(Z_i^{(a,s)})(a')
> s$, and there exists a player $j$ and some $t$ such that
$Z_j^{(a',t)} \in {\cal V}_j$ and
$$s \ < \ b_j(Z_j^{(a',t)})(a') - b_j(Z_j^{(a',t)})(a) \ \le \
b_i(Z_i^{(a,s)})(a)- b_i(Z_i^{(a,s)})(a'). $$

By Lemma \ref{l1} $b_j(Z_j^{(a',t)})(a')
> b_j(Z_j^{(a',t)})(\hat a)$ for all $\hat a \not =a'$.

Lets consider a simple VCG game with 2 players, $i$ and $j$, where
the mechanism's tie breaking rule, in case of a tie between $a$
and $a'$, is to choose $a$.

Consider the instance where $i$'s valuation is $Z_i^{(a,s)}$ and
$j$'s valuation is $Z_j^{(a',t)}$. Assume that some ${\hat a} \not
\in \{a,a'\}$ is chosen in this game. In this case $i$'s utility
is $j$'s valuation of $\hat a$, namely $b_j(Z_j^{(a',t)})(\hat
a)$. Compare this to $i$'s utility had he announced zero on all
alternatives. In this case $a'$ would have been the chosen
alternative and $i$ would have received a utility of
$b_j(Z_j^{(a',t)})(a')$. As $b_j(Z_j^{(a',t)})(a')
> b_j(Z_j^{(a',t)})(\hat a)$ we have a contradiction with the
assumption that $b$ forms an ex-post equilibrium.

We conclude that either $a$ or $a'$ must chosen.

By our assumption $b_j(Z_j^{(a',t)})(a') + b_i(Z_i^{(a,s)})(a')
\le b_j(Z_j^{(a',t)})(a) + b_i(Z_i^{(a,s)})(a) $, and so $a$ is
actually chosen, and the utility of $i$ is $s +
b_j(Z_j^{(a',t)})(a)$.

On the other hand lets assume $i$ would have announced truthfully.
By the assumption $s+ b_j(Z_j^{(a',t)})(a) <
b_j(Z_j^{(a',t)})(a')$, leading to $a'$ being chosen, and
consequently $i$'s utility would have been
$b_j(Z_j^{(a',t)})(a')$.

By our assumption $b_j(Z_j^{(a',t)})(a') > s +
b_j(Z_j^{(a',t)})(a) $, which stands in contradiction to the fact
the $b$ is an ex-post equilibrium of the 2 player game.
\end{proof}

\begin{corollary}
Let $n \ge 2$ and $|A| \ge 3$. Let $b$ be an ex-post equilibrium
for the class of VCG games over $(N,A,{\cal V})$. For any $a, a'
\in A$ there exists a set of disjoint open segments, denoted
$\Omega^{(a,a')}$ and a function $G^{(a,a')}: \Omega^{(a,a')} \to
\{-1, +1\}$, for which the pair of functions $g^{(a,a')}_1(\cdot)$
and $g^{(a',a)}_2(\cdot)$ are compatible.
\end{corollary}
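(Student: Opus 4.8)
The plan is to obtain the corollary as a direct composition of Lemma \ref{l3} with the Parallelogram Proposition \ref{Pprp}, so that essentially no new argument is required beyond checking that the hypotheses of the two results line up.

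First I would fix an arbitrary ordered pair $a,a' \in A$ and apply Lemma \ref{l3} with the choice of players $i=1$ and $j=2$. The hypothesis of that lemma requires $Z_1^{(a,s)} \in {\cal V}_1$ and $Z_2^{(a',s)} \in {\cal V}_2$ for all $s$; in the settings where this corollary is used --- in particular ${\cal V}=(\real_+^A)^n$ in Theorems \ref{NoPar1} and \ref{NoPar2} --- every such point-mass valuation belongs to ${\cal V}$, so the hypothesis is met. Lemma \ref{l3} then yields that the pair $g^{(a,a')}_1(\cdot)$ and $g^{(a',a)}_2(\cdot)$ satisfies the mean value exclusion condition.

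Next I would verify the mild domain/range requirement needed to invoke Proposition \ref{Pprp}, namely that each of these functions maps $\real_+$ into $\real_+$. For $s>0$ the valuation $Z_1^{(a,s)}$ strictly prefers $a$ to every other alternative, so Lemma \ref{l1} gives $b_1(Z_1^{(a,s)})(a) > b_1(Z_1^{(a,s)})(a')$ and hence $g^{(a,a')}_1(s) > 0$; the symmetric argument applies to $g^{(a',a)}_2$. Thus $(g^{(a,a')}_1, g^{(a',a)}_2)$ is a pair of nonnegative functions on $\real_+$ satisfying mean value exclusion, exactly the input format demanded by Proposition \ref{Pprp}.

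Finally I would apply Proposition \ref{Pprp} to this pair, taking $g_1 = g^{(a,a')}_1$ and $g_2 = g^{(a',a)}_2$. The proposition produces a set of disjoint open segments, which I rename $\Omega^{(a,a')}$, together with a function $G^{(a,a')}\colon \Omega^{(a,a')} \to \{-1,+1\}$ satisfying the sign restriction $I^+=J^- \Rightarrow G(I)\times G(J)=-1$ at shared endpoints, such that the pair is $(\Omega^{(a,a')},G^{(a,a')})$-compatible. Since $a,a'$ were arbitrary, this establishes the claim for every pair. The only genuine point to watch --- and hence the step I would state most carefully --- is the membership of the point-mass valuations in ${\cal V}$, which is what makes Lemma \ref{l3} applicable; everything else is a one-line appeal to Lemma \ref{l1} for nonnegativity together with bookkeeping of indices.
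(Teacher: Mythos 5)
Your proposal matches the paper's own proof, which simply states that the corollary follows directly from Lemma \ref{l3} combined with Proposition \ref{Pprp}; your additional checks (membership of the point-mass valuations $Z_i^{(a,s)}$ in ${\cal V}_i$ and nonnegativity of $g^{(a,a')}_1, g^{(a',a)}_2$ via Lemma \ref{l1}) merely make explicit the hypotheses the paper leaves implicit. The approach is the same and the argument is correct.
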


\begin{proof}
Follows directly from Lemma \ref{l3} and Proposition \ref{Pprp}.
\end{proof}

\begin{lemma}\label{old4}
 Let $n \ge 3$, $|A| \ge 3$, and let $b$ be an ex-post
equilibrium for the class of VCG games over $(N,A,{(\real_+^A)}^n)$,
then $b_i(Z_i^{(a,s)})(a) - b_i(Z_i^{(a,s)})(a') = s$ for all
$i \in N$, $s \in \real_+$ and $a' \not = a \in A$.%
\end{lemma}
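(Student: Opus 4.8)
The plan is to prove the stronger-looking but equivalent statement that every function $g^{(a,a')}_i$ is the identity, and to do so by first collapsing each player's data into a single partner-independent function and then playing three players off one another through the parallelogram structure already in hand.

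First I would record two reductions. Since $Z_i^{(a,s)}$ assigns the common value $0$ to every $a'\neq a$, Lemma \ref{old2} gives $b_i(Z_i^{(a,s)})(a')=b_i(Z_i^{(a,s)})(a'')$ for all $a',a''\neq a$; hence $g^{(a,a')}_i(s)$ is independent of the choice of $a'\neq a$, and I write $\phi^a_i(s)$ for this common value. Lemma \ref{l1} gives $\phi^a_i(s)>0$ for $s>0$ and $\phi^a_i(0)=0$, so each $\phi^a_i$ maps $\real_+$ into $\real_+$, as the parallelogram machinery requires. The goal becomes $\phi^a_i(s)=s$ for all $i,a,s$.

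Now fix three distinct players $1,2,3$ (this is where $n\ge 3$ enters) and three distinct alternatives $a_1,a_2,a_3$ (this is where $|A|\ge 3$ enters), with $a_k$ the peak of player $k$. Applying the corollary following Lemma \ref{l3} to each ordered pair, the three pairs $(\phi^{a_1}_1,\phi^{a_2}_2)$, $(\phi^{a_1}_1,\phi^{a_3}_3)$ and $(\phi^{a_2}_2,\phi^{a_3}_3)$ are each $(\Omega,G)$-compatible in the sense of Proposition \ref{Pprp}. The decisive observation is that the segment system $\Omega$ and the sign function $G$ produced in the proof of Proposition \ref{Pprp} (the set $D$ of Definition \ref{defD} together with the $"+"/"-"$ labels) are built from the \emph{first} function of the pair alone. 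Since $\phi^{a_1}_1$ is literally the same function in its two pairings, it yields one and the same $(\Omega,G)$ in both. Consequently, by conditions $(2)$ and $(3)$ of $(\Omega,G)$-compatibility, for any $s_0$ interior to a segment $I\in\Omega$ the second functions $\phi^{a_2}_2$ and $\phi^{a_3}_3$ are both forced to equal the endpoint of $I$ opposite to $\phi^{a_1}_1(s_0)$; call this common value $d$, and note $d\neq s_0$.

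With this in place the contradiction is short. Suppose $\phi^{a_1}_1$ is not the identity. Then $\Omega$ contains a nonempty segment $I$, and the previous paragraph shows $\phi^{a_2}_2(s_0)=\phi^{a_3}_3(s_0)=d$ for every interior $s_0\in I$; that is, $\phi^{a_2}_2$ and $\phi^{a_3}_3$ take the common non-identity value $d$ throughout $I$. But the pair $(\phi^{a_2}_2,\phi^{a_3}_3)$ is itself $(\Omega',G')$-compatible, and two compatible functions can never share a value different from the point of evaluation: off the segments both equal that point, while on a segment's interior they sit on opposite endpoints and hence differ. Since all but countably many $s_0\in I$ avoid the endpoints of $\Omega'$, this is impossible. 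Hence $\phi^{a_1}_1$ is the identity, and relabeling players and alternatives gives $\phi^a_i(s)=s$, i.e. $b_i(Z_i^{(a,s)})(a)-b_i(Z_i^{(a,s)})(a')=s$, for all $i,s,a'\neq a$.

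The step I expect to be most delicate — and the one I would write out most carefully — is the claim that player $2$ and player $3$ see the \emph{same} opposite endpoint $d$. Its justification is precisely the one-sidedness of the construction in Proposition \ref{Pprp}: because $\Omega$ and $G$ are manufactured from the first coordinate only, fixing $\phi^{a_1}_1$ across the two pairings forces an identical segment system and hence identical values of the second coordinate on segment interiors. I would verify explicitly that the definitions of $D_1,\dots,D_6$ and of the $"+"/"-"$ labels indeed reference only $g_1$, so that this transfer is legitimate; once that is secured, the final clash between the two descriptions of $(\phi^{a_2}_2,\phi^{a_3}_3)$ is immediate.
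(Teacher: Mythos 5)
Your argument is correct, but it reaches the conclusion by a genuinely different route from the paper's. The paper's proof of Lemma \ref{old4} never invokes the parallelogram decomposition: it assumes $g_i^{(a,a')}(s)>s$, picks $t$ with $s<t<g_i^{(a,a')}(s)$ and a second player $j$ with valuation $Z_j^{(a',t)}$, and splits on whether $g_j^{(a',a)}(t)\ge t$ (which contradicts the mean value exclusion of Lemma \ref{l3} directly, one way or the other depending on which of $g_i^{(a,a')}(s)$ and $g_j^{(a',a)}(t)$ is larger) or $g_j^{(a',a)}(t)<t$ (in which case Lemma \ref{old2} transfers the inequality to a third alternative $a''$, a third player $l$ is brought in, and comparing $j$ with $l$ again contradicts Lemma \ref{l3}). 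So the paper needs only Lemmas \ref{l1}, \ref{old2} and \ref{l3} here, not Proposition \ref{Pprp}. You instead run everything through the full $(\Omega,G)$-structure, and your pivotal observation --- that the segment system $D$ of Definition \ref{defD} and its signs are manufactured from the first coordinate alone, so pairing $\phi^{a_1}_1$ with two different partners yields the same $\Omega$ and $G$ and forces $\phi^{a_2}_2$ and $\phi^{a_3}_3$ onto the same opposite endpoint throughout each segment interior --- is sound: $D_1,\dots,D_6$ reference only $g_1$, and although the ``$+$''/``$-$'' conditions as literally stated mention $g_2$, Lemma \ref{plmi} pins the label down by whether $I\in D_3\cup D_5$ or $I\in D_4\cup D_6$, a $g_1$-only criterion; you correctly flag this as the step requiring explicit verification. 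Your closing clash can even be shortened: once $\phi^{a_2}_2(s_0)=\phi^{a_3}_3(s_0)=d$ with $d\neq s_0$, the mean value exclusion for the pair $(\phi^{a_2}_2,\phi^{a_3}_3)$ applied with $y=d$ already forbids $\phi^{a_3}_3(s_0)=d$, with no need to invoke the compatibility structure of that second pair or to dodge countably many endpoints. What each approach buys: the paper's argument is shorter and self-contained modulo the mean value exclusion lemma alone; yours is heavier but more structural, makes the role of $n\ge 3$ transparent as a triangle of pairwise compatibilities sharing a common vertex, and in passing establishes the slightly stronger fact that any function occurring as the first coordinate of two compatible pairs with distinct partners must be the identity.
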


\begin{proof}: Assume the claim is not true and that
$b_i(Z_i^{(a,s)})(a) - b_i(Z_i^{(a,s)})(a') \not = s$ for some $i
\in N$, $s \in \real_+$ and $a \in A$. We will assume that
$b_i(Z_i^{(a,s)})(a) - b_i(Z_i^{(a,s)})(a') > s$. The case that
$b_i(Z_i^{(a,s)})(a) - b_i(Z_i^{(a,s)})(a') < s$ is similar, and
therefore omitted.

Choose $t$ such that $b_i(Z_i^{(a,s)})(a) - b_i(Z_i^{(a,s)})(a') > t
> s $ and a player $j \not = i$.

{\em Case 1}: Assume $b_j(Z_j^{(a',t)})(a') - b_j(Z_j^{(a',t)})(a)
\ge t $. If in addition $b_i(Z_i^{(a,s)})(a) - b_i(Z_i^{(a,s)})(a')
\ge b_j(Z_j^{(a',t)})(a') - b_j(Z_j^{(a',t)})(a)$ then we get a
contradiction to lemma \ref{l3}. Otherwise, $b_j(Z_j^{(a',t)})(a') -
b_j(Z_j^{(a',t)})(a)
> b_i(Z_i^{(a,s)})(a) - b_i(Z_i^{(a,s)})(a')$, which leads again to a
contradiction of lemma \ref{l3}, with the roles of $i$ and $j$
reversed.

{\em Case 2}: Assume $b_j(Z_j^{(a',t)})(a') - b_j(Z_j^{(a',t)})(a) <
t $ and consider a third alternative $a'' \not \in \{a,a'\}$. By
lemma \ref{old2} $b_j(Z_j^{(a',t)})(a'') = b_j(Z_j^{(a',t)})(a)$ and
therefore $b_j(Z_j^{(a',t)})(a') - b_j(Z_j^{(a',t)})(a'') < t $.

Consider a third player $l$. Obviously, $b_l(Z_l^{(a'',t)})(a'') -
b_l(Z_l^{(a'',t)})(a) < t $ as well (otherwise we can replicate the
arguments of case 1). By applying lemma \ref{old2} we conclude that
$b_l(Z_l^{(a'',t)})(a'') - b_l(Z_l^{(a'',t)})(a') < t $ as well.
Lets assume, without loss of generality that $b_j(Z_j^{(a',t)})(a')
- b_j(Z_j^{(a',t)})(a'') \le b_l(Z_l^{(a'',t)})(a'') -
b_l(Z_l^{(a'',t)})(a') <t $. This conflicts lemma \ref{l3}, where
$j$ is in the role of $i$ and $l$ in the role of $j$.%
\end{proof}

\begin{lemma} \label{zeroseg}
$I \in \Omega^{(a,a')}$ implies $I^- \not = 0$.
\end{lemma}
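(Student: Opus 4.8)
The plan is to assume, toward a contradiction, that some $I \in \Omega^{(a,a')}$ has $I^- = 0$, and to derive a contradiction from the fact that the two functions generating $\Omega^{(a,a')}$ are \emph{strictly positive} on the open half-line $(0,\infty)$. Recall from the Corollary that $\Omega^{(a,a')}$ is exactly the segment family produced by Proposition \ref{Pprp} applied to the pair $g_1 := g^{(a,a')}_1$ and $g_2 := g^{(a',a)}_2$, so every lemma proved about the set $D$ in the parallelogram section applies to each $I \in \Omega^{(a,a')}$.

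First I would establish the positivity I need. For any $t > 0$ the valuation $Z_i^{(a,t)}$ assigns the strictly positive value $t$ to $a$ and $0$ to every other alternative, so $a$ is its unique maximizer. Lemma \ref{l1} then gives $b_i(Z_i^{(a,t)})(a) > b_i(Z_i^{(a,t)})(a')$ for every $a'\neq a$, whence $g^{(a,a')}_1(t) > 0$ for all $t>0$; the symmetric statement (with the roles of $a$ and $a'$ interchanged) yields $g^{(a',a)}_2(t) > 0$ for all $t > 0$ as well. Note this is the step where the ex-post equilibrium hypothesis actually enters, through Lemma \ref{l1}.

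The core of the argument is then immediate. Suppose $I \in \Omega^{(a,a')}$ satisfies $I^- = 0$, so $I = (0, I^+)$ with $I^+ > 0$; choose any $t \in I$, so that $t > 0$. By Lemma \ref{plmi} the segment $I$ satisfies either the ``$+$'' condition or the ``$-$'' condition. If $I$ satisfies the ``$+$'' condition then $g_2(t) = I^- = 0$, i.e. $g^{(a',a)}_2(t) = 0$; if it satisfies the ``$-$'' condition then $g_1(t) = I^- = 0$, i.e. $g^{(a,a')}_1(t) = 0$. Either way one of the two functions vanishes at the strictly positive point $t$, contradicting the positivity just established. Hence no segment of $\Omega^{(a,a')}$ has left endpoint $0$, which is the claim.

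I do not expect a genuine obstacle: the statement is essentially a bookkeeping consequence of Lemmas \ref{l1} and \ref{plmi}, and the argument does not even need the value of either function at $0$. The only point requiring care is to match $g^{(a,a')}_1$ and $g^{(a',a)}_2$ correctly with the roles of $g_1$ and $g_2$ in the definitions of the ``$+$'' and ``$-$'' conditions, and to verify that whichever of the two conditions holds forces the value $I^- = 0$ onto one of these two functions on the nonempty interval $(0,I^+)$ --- both of which are strictly positive there, so both alternatives are impossible.
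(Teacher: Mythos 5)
Your proof is correct and takes essentially the same route as the paper's: both arguments reduce to the fact (Lemma \ref{l1}, which the paper re-derives inline via tie-breaking in the one-player game) that a player with a strict maximizer must submit a strictly higher bid on it, so neither $g^{(a,a')}_1$ nor $g^{(a',a)}_2$ can vanish at a positive argument, whereas a segment with $I^-=0$ forces one of them to equal $0$ on the nonempty interval $(0,I^+)$. Your write-up is merely a bit more explicit than the paper's in invoking Lemma \ref{plmi} to identify which of the two functions takes the value $I^-$ on $I$.
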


\begin{proof}
Assume the claim is wrong. This implies that there exists a player,
w.l.o.g player $1$, and a valuation $v_1$ such that
$v_1(a)-v_1(a')>0$ where $a$ is a maximizing alternative for $v_1$,
but $b_1(v_1)(a)-b_1(v_1)(a')=0$. Among all the VCG mechanisms for
the single player game, there exist one that chooses the alternative
$a'$, in case of tie between $a$ and $a'$. This contradicts the fact
that, in an ex post equilibrium, if player $1$ is on his own that
the maximizing alternative must always be chosen.
\end{proof}

\begin{prp} \label{V1} Let $(b_1,b_2)$ be an ex post equilibrium in the VCG mechanisms.
Let $v_1, v_2 \in \mathcal{V}$ be two valuations for players $1$
and $2$, such that $a$ is a maximizing alternative for $v_1$ and
$a'$ is a maximizing alternative for $v_2$. For any $s$ which is
not an end point of two segments in $\Omega^{(a,a')}$:
\begin{itemize}
\item
If $v_1(a)-v_1(a')=s$  then
$b_1(v_1)(a)-b_1(v)(a')=g^{(a,a')}_1(s)$.
\item
If $v_2(a')-v_2(a)=s$ then
$b_2(v_2)(a')-b_2(v_2)(a)=g^{(a,a')}_2(s)$.
\end{itemize}
\end{prp}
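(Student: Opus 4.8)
The plan is to reduce the general valuations $v_1, v_2$ to the canonical single-alternative valuations $Z_i^{(a,s)}$ on which $g_1^{(a,a')}$ and $g_2^{(a,a')}$ are defined, by showing that the bid difference $b_i(v_i)(a)-b_i(v_i)(a')$ depends only on the gap $v_i(a)-v_i(a')$ and not on the valuations assigned to the remaining alternatives. I will prove the first bullet; the second is symmetric. Fix $v_1$ with maximizer $a$ and set $s = v_1(a)-v_1(a') \ge 0$. Since the corollary following Lemma \ref{l3} guarantees that $g_1^{(a,a')}$ and $g_2^{(a',a)}$ are $(\Omega^{(a,a')},G^{(a,a')})$-compatible, it suffices to show that $b_1(v_1)(a)-b_1(v_1)(a')$ equals the value $g_1^{(a,a')}(s)$ obtained from the reference valuation $Z_1^{(a,s)}$.

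First I would establish the key reduction step: for any other player $j$ and any $t$, the chosen alternative in the two-player simple VCG game between players $1$ and $j$ depends only on the two bid differences $b_1(v_1)(a)-b_1(v_1)(a')$ and $b_j(v_j)(a')-b_j(v_j)(a)$, provided $v_j$ has $a'$ as maximizer and assigns sufficiently large values at $\{a,a'\}$ relative to the other coordinates. The mechanism behind this is exactly the argument of Lemma \ref{l3}: by making player $j$'s valuation large on $a$ and $a'$ and zero elsewhere, one forces the chosen alternative into $\{a,a'\}$ (any $\hat a \notin \{a,a'\}$ gives $j$ a utility too small to be a best response), after which the comparison between $a$ and $a'$ is governed purely by the sum of the two relevant bid differences. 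Running the ex-post equilibrium inequality for player $1$ against a spectrum of such opponents $j$ (varying $t$) pins down $b_1(v_1)(a)-b_1(v_1)(a')$ through the same mean-value-exclusion mechanism that defined $g_1^{(a,a')}$. The point is that the argument of Lemma \ref{l3} never used the fact that $v_1$ was of the form $Z_1^{(a,s)}$ except through its restriction to $\{a,a'\}$ and the requirement that $a$ be a maximizer; the zeros elsewhere in $Z_1^{(a,s)}$ only matter for bounding player $1$'s payoff when a third alternative is selected, and these can be replaced by the hypothesis that $a$ is a maximizer for $v_1$.

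Next I would formalize that the function $s \mapsto b_1(v_1)(a)-b_1(v_1)(a')$ induced by general maximizing valuations coincides with $g_1^{(a,a')}$ wherever the latter is single-valued. By the Parallelogram proposition, $g_1^{(a,a')}$ is determined by $(\Omega^{(a,a')},G^{(a,a')})$ and is uniquely pinned down except possibly at points $s$ that are common endpoints of two adjacent segments (where conditions (6)--(7) of compatibility permit two values). Excluding such $s$ — precisely the hypothesis of the proposition — the value $g_1^{(a,a')}(s)$ is forced, and since the two-player reduction above shows that $b_1(v_1)(a)-b_1(v_1)(a')$ must satisfy the identical mean-value-exclusion constraints against every admissible opponent, it must equal $g_1^{(a,a')}(s)$.

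The main obstacle I anticipate is the reduction step itself: showing rigorously that the value $b_1(v_1)(a)-b_1(v_1)(a')$ for a \emph{general} valuation $v_1$ (with arbitrary entries off $\{a,a'\}$) is forced to equal the value arising from the canonical $Z_1^{(a,s)}$. The delicate issue is that when an opponent is chosen adversarially to steer the outcome to $a$ or $a'$, one must verify that player $1$'s \emph{own} off-diagonal valuations never give player $1$ a profitable deviation to a third alternative $\hat a$; this is where the hypothesis that $a$ is a maximizer for $v_1$ (so $v_1(\hat a)\le v_1(a)$) is essential, mirroring the role the zeros played in Lemma \ref{l3}. Handling the endpoint caveat carefully — and confirming that excluding double-endpoints is exactly what restores single-valuedness — is the second technical point requiring attention.
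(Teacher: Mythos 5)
Your overall strategy is the same as the paper's: fix $v_1$ with maximizer $a$ and gap $s$, play it against the canonical opponents $Z_2^{(a',t)}$ in a two-player game, use Lemma \ref{old5} to transfer maximality of the chosen alternative to the ``half-true'' profiles $(v_1,b_2(Z_2^{(a',t)}))$ and $(b_1(v_1),Z_2^{(a',t)})$, and use the maximizer hypothesis $v_1(\gamma)\le v_1(a)$ to rule out third alternatives. You correctly identify that the zeros of $Z_1^{(a,s)}$ off $\{a,a'\}$ are what the maximizer hypothesis must replace. So the skeleton is right.

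The genuine gap is in your concluding step: ``since the bid gap satisfies the identical mean-value-exclusion constraints against every admissible opponent, it must equal $g_1^{(a,a')}(s)$.'' This does not follow from Proposition \ref{Pprp}. That proposition, applied to a pair satisfying MVE, produces \emph{some} segment structure $(\Omega,G)$ built from that pair; it does not assert that any quantity satisfying MVE against the fixed $g_2^{(a,a')}$ must coincide with $g_1^{(a,a')}(s)$, and you supply no uniqueness argument (e.g.\ that $\Omega^{(a,a')}$ and $G^{(a,a')}$ are recoverable from $g_2^{(a,a')}$ alone, so that a second compatible partner is forced to agree off double endpoints). There is also a well-definedness problem: $s\mapsto b_1(v_1)(a)-b_1(v_1)(a')$ is not a function of $s$ a priori, since distinct $v_1$ with the same gap could bid differently, so the functional MVE machinery cannot be invoked wholesale and the argument must be run pointwise. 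This missing step is precisely where the paper's proof does its work: it splits into the three cases $g_1^{(a,a')}(s)=s$, $<s$, $>s$, and within each case locates the putative bid gap relative to $g_1^{(a,a')}(s)$, $s$, and the endpoints of the relevant segment $I$, choosing for each location a tailored $t$ (sometimes $t$ in an adjacent segment, sometimes $t=I^+$, sometimes $t$ outside all segments, using Lemma \ref{lemRemark1}-type density facts) and exhibiting a specific violation of Lemma \ref{old5}. Without carrying out that case analysis, or proving the uniqueness-of-compatible-partner claim you are implicitly relying on, the proof is not complete.
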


\begin{proof}[Proof of Proposition \ref{V1}]Let $s\in \mathbb{R}_+, v_1\in \mathcal{V}$
such that $v_1(a)-v_1(a')=s$. $a$ is a maximizing alternative for
$v_1$ and $s$ is not an end point of two segments. Then we have:
\begin{enumerate}
\item $g^{(a,a')}_1(s)=s$:
\begin{itemize}
\item If $s=g^{(a,a')}_1(s)<b_1(v_1)(a)-b_1(v_1)(a')$: Choose
$t, s=g^{(a,a')}_1(s)<t<b_1(v_1)(a)-b_1(v_1)(a')$. From the mean
value exclusion condition it follows that $s<g^{(a,a')}_2(t)$.
Consider the profile $(b_1(v_1),b_2(Z_2^{(a',t)}))$. Let $\gamma$
be a maximizing alternative of $(b_1(v_1),b_2(Z_2^{(a',t)}))$. It
follows by Lemma 2 that $\gamma$ is also a maximizing alternative
of $(v_1,b_2(Z_2^{(a',t)}))$. Then $\gamma = a'$, for otherwise
$g_2(t)=b_2(Z_2^{(a',t)}))(a')-b_2(Z_2^{(a',t)}))(\gamma)<v_1(\gamma)-v_1(a')\leq
v_1(a)-v_1(a')=s<g_2(t)$, a contradiction.\\
Again by Lemma 2 $a'$ should be a maximizing alternative of
$(b_1(v_1),Z_2^{(a',t)})$ as well. But
$b_1(v_1)(a)-b_1(v_1)(a')>t=Z_2^{(a',t)}(a')-Z_2^{(a',t)}(a)$, a
contradiction.
\item If $b_1(v_1)(a)-b_1(v_1)(a')<s=g^{(a,a')}_1(s)$: Choose
$t, b_1(v_1)(a)-b_1(v_1)(a')<t<s$. From the mean value exclusion
condition it follows that $g^{(a,a')}_2(t)< s$. Consider the
profile $(b_1(v_1),b_2(Z_2^{(a',t)}))$. Let $\gamma$ be a
maximizing alternative of $(b_1(v_1),b_2(Z_2^{(a',t)}))$. It
follows by Lemma 2 that $\gamma$ is also a maximizing alternative
of $(b_1(v_1),Z_2^{(a',t)})$. Then $\gamma=a'$, for otherwise
$t=Z_2^{(a',t)}(a')-Z_2^{(a',t)}(\gamma)>b_1(v_1)(a)-b_1(v_1)(a')\leq
b_1(v_1)(\gamma)-b_1(v_1)(a')$, a contradiction.\\
Again by Lemma 2 $a'$ should be a maximizing alternative of
$(v_1,b_2(Z_2^{(a',t)}))$ as well. But
$v_1(a)-v_1(a')=s>g^{(a,a')}_2(t)=b_2(Z_2^{(a',t)})(a')-b_2(Z_2^{(a',t)})(a)$,
a contradiction.
\end{itemize}

\item $g^{(a,a')}_1(s)<s$: Let $I\in \Omega^{(a,a')}$ be a segment with $G(I))=-1$
such that $g^{(a,a')}_1(s)=I^-<s\leq I^+$. Such a segment exists by
Proposition \ref{Pprp}. Now consider three cases:
\begin{itemize}
\item $b_1(v_1)(a)-b_1(v_1)(a')<g^{(a,a')}_1(s)<s$:
Choose $t,b_1(v_1)(a)-b_1(v_1)(a')<t<g^{(a,a')}_1(s)$. It emerges
from the mean value exclusion condition that
$g^{(a,a')}_2(t)<g^{(a,a')}_1(s)$. Consider the profile
$(b_1(v_1),b_2(Z_2^{(a',t)}))$. Let $\gamma$ be a maximizing
alternative of $(b_1(v_1),b_2(Z_2^{(a',t)}))$. It follows by Lemma
2 that $\gamma$ is also a maximizing alternative of
$(b_1(v_1),Z_2^{(a',t)})$. Then $\gamma=a'$, for otherwise
$b_1(v_1)(\gamma)-b_1(v_1)(a')\leq b_1(v_1)(a)-b_1(v_1)(a')
<t=Z_2^{(a',t)}(a')-Z_2^{(a',t)}(\gamma)$, a contradiction.\\
Again by Lemma 2 $a'$ should be a maximizing alternative of
$(v_1,b_2(Z_2^{(a',t)}))$ as well. But
$v_1(a)-v_1(a')=s>g^{(a,a')}_1(s)>g^{(a,a')}_2(t))=b_2(Z_2^{(a',t)})(a')-b_2(Z_2^{(a',t)})(a)$,
a contradiction.
\item $g^{(a,a')}_1(s)<b_1(v_1)(a)-b_1(v_1)(a')<s$:
Choose $t, g^{(a,a')}_1(s)<t<b_1(v_1)(a)-b_1(v_1)(a')$. It emerges
from the mean value exclusion condition that $g^{(a,a')}_2(t)\geq
s$. Consider the profile $(b_1(v_1),b_2(Z_2^{(a',t)}))$. Let
$\gamma$ be a maximizing alternative of
$(b_1(v_1),b_2(Z_2^{(a',t)}))$. Then $\gamma=a'$, for otherwise
$b_1(v_1)(\gamma)-b_1(v_1)(a')\leq b_1(v_1)(a)-b_1(v_1)(a')<s\leq
g^{(a,a')}_2(t)=b_2(Z_2^{(a',t)})(a')-b_2(Z_2^{(a',t)})(\gamma)$
, a contradiction.\\
Again by Lemma 2 $a'$ should be a maximizing alternative of
$(b_1(v_1),Z_2^{(a',t)})$ as well. But
$b_1(v_1)(a)-b_1(v_1)(a')>t=Z_2^{(a',t)}(a')-Z_2^{(a',t)}(a)$, a
contradiction.
\item $g^{(a,a')}_1(s)<s\leq b_1(v_1)(a)$: There are three
cases:\\
\begin{enumerate}
\item $s=b_1(v_1)(a)$: Choose $t, g^{(a,a')}_1(s)<t<s$.
It emerges from the mean value exclusion condition that
$g^{(a,a')}_2(t)\geq s$. Consider the profile
$(b_1(v_1),b_2(Z_2^{(a',t)}))$. Note that $a'$ is a maximizing
alternative of $(b_1(v_1),b_2(Z_2^{(a',t)}))$ (not necessarily the
only one). For otherwise, there exists an alternative $\gamma$
which gives a better social surplus. But,
$b_1(v_1)(\gamma)-b_1(v_1)(a')\leq b_1(v_1)(a)-b_1(v_1)(a')\leq
g^{(a,a')}_2(t)=b_2(Z_2^{(a',t)})(a')-b_2(Z_2^{(a',t)})(\gamma)$,
a contradiction.\\
By Lemma 2 $a'$ should be a maximizing alternative of
$(b_1(v_1),Z_2^{(a',t)})$ as well. But
$b_1(v_1)(a)-b_1(v_1)(a')>t=Z_2^{(a',t)}(a')-Z_2^{(a',t)}(a)$, a
contradiction.
\item $s=I^+, s<b_1(v_1)(a)-b_1(v_1)(a')$: There are two cases induced when
$s$ is not an end point of two segments:
\begin{enumerate}
\item $I^+$ is a limit point of segments $I_k\in \Omega^{(a,a')}$ that lie to
the right of $I^+$. Then we can find a segment $I_j$ such that
$I^+<I_j^-<I_j^+<b_1(v_1)(a)-b_1(v_1)(a')$. Choosing a number
$s_0\in I_j$ we have
$I^+<g^{(a,a')}_1(s_0),g^{(a,a')}_2(s_0)<b_1(v_1)(a)-b_1(v_1)(a')$.
\item $I^+$ is not a limit of segments. Then we
can find a number $s_0, s<s_0<b_1(v_1)(a)$ such that $s_0\in
\mathbb{R}_+ \backslash \displaystyle{\bigcup_{I
\in{\Omega^{(a,a')}}}}{\bar{I}}$ where $\bar{I}$ is the closure of
$I$. For such $s_0$ we have
$I^+<g^{(a,a')}_1(s_0)=g^{(a,a')}_2(s_0)=s_0<b_1(v_1)(a)$.
\end{enumerate}
In both cases we shall look at the profile
$(b_1(v_1),b_2(Z_2^{(a',s_0)}))$. As
$g^{(a,a')}_2(s_0)<b_1(v_1)(a)-b_1(v_1)(a')$, it follows that $a$ is
a maximizing alternative. By Lemma 2 it should be a maximizing
alternative of $(v_1,b_2(Z_2^{(a',s_0)}))$ as well. But
$v_1(a)-v_1(a')=s=I^+<g^{(a,a')}_2(s_0)=b_2(Z_2^{(a',s_0)})(a')-b_2(Z_2^{(a',s_0)})(a)$,
a contradiction.
\item $s<b_1(v_1)(a)-b_1(v_1)(a'), s\neq I^+$: Then $s<I^+=g^{(a,a')}_2(I^+)$ and $\forall s', s<s'<I^+$,
we have $g^{(a,a')}_2(s')=I^+$. There are two cases:
\begin{enumerate}
\item $b_1(v_1)(a)-b_1(v_1)(a')\leq I^+$. Choose $s', s<s'<b_1(v_1)(a)-b_1(v_1)(a')\leq I^+$.
Consider the profile $(b_1(v_1),b_2(Z_2^{(a',s')}))$.\\
Note that $a'$ is a maximizing alternative of
$(b_1(v_1),b_2(Z_2^{(a',s')}))$ (not necessarily the only one). For
otherwise, there exists an alternative $\gamma$ which gives a better
social surplus. But, $b_1(v_1)(\gamma)-b_1(v_1)(a')\leq
b_1(v_1)(a)-b_1(v_1)(a')\leq
I^+=b_2(Z_2^{(a',s')})(a')-b_2(Z_2^{(a',s')})(\gamma)$, a
contradiction. It follows by Lemma 2 that $a'$ also maximizes
$(b_1(v_1),Z_2^{(a',s')})$. But
$b_1(v_1)(a)-b_1(v_1)(a')>s'=Z_2^{(a',s')}(a')-Z_2^{(a',s')}(a)$,
a contradiction.\\
\item $I^+<b_1(v_1)(a)-b_1(v_1)(a')$. Consider the
profile $(b_1(v_1),b_2(Z_2^{(a',I^+)}))$. Note that $a$ is a
maximizing alternative of $(b_1(v_1),b_2(Z_2^{(a',I^+)}))$ (not
necessarily the only one). For otherwise, there exists an
alternative $\gamma$ which gives a better social surplus. But if
$\gamma=a'$ then, $b_1(v_1)(a)-b_1(v_1)(a')> I^+
=b_2(Z_2^{(a',I^+)})(a')-b_2(Z_2^{(a',I^+)})(a)$, a contradiction.
If $\gamma \neq a,a'$ then, $b_1(v_1)(a)-b_1(v_1)(\gamma)\geq 0 =
b_2(Z_2^{(a',I^+)})(\gamma)-b_2(Z_2^{(a',I^+)})(a)$, a
contradiction. It follows by Lemma 2 that it also maximizes
$(v_1,b_2(Z_2^{(a',I^+)}))$. But $v_1(a)-v_1(a')=s<I^+=
b_2(Z_2^{(a',I^+)})(a')-b_2(Z_2^{(a',I^+)})(a)$, a
contradiction.\\
\end{enumerate}
\end{enumerate}
\end{itemize}
\item $g^{(a,a')}_1(s)>s$: This case is handled in a similar way as the previous
case.
\end{enumerate}
For player 2 the proof is similar.
\end{proof}

\begin{prp} \label{V2} Let $(b_1,b_2)$ be an ex post equilibrium in the VCG mechanisms.
Let $v_1, v_2 \in \mathcal{V}$ be two valuations for players $1$
and $2$, such that $a$ is a maximizing alternative for $v_1$ and
$a'$ is a maximizing alternative for $v_2$. Let $s$ be an end
point of two segments $I=(x,s), J=(s,y) \in \Omega^{(a,a')}$. If
$v_1(a)-v_1(a')=s$ and $v_2(a')-v_2(a)=s$ one of the following
must hold:
\begin{enumerate}
\item
If $G^{(a,a')}(I) = -1$ and $G^{(a,a')}(J) = +1$ then:\\
$b_1(v_1)(a)-b_1(v_1)(a')=x\; or\; y$\\
$b_2(v_2)(a')-b_2(v_2)(a)=s$
\item
If $G^{(a,a')}(I) = +1$ and $G^{(a,a')}(J) = -1$ then:\\
$b_1(v_1)(a)-b_1(v_1)(a')=s$\\
$b_2(v_2)(a')-b_2(v_2)(a)=x\; or\; y$

\end{enumerate}
\end{prp}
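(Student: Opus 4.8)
The plan is to reduce to case 1 ($G^{(a,a')}(I)=-1$, $G^{(a,a')}(J)=+1$); case 2 follows by interchanging the two players and the roles of $a,a'$. Write $d_1=b_1(v_1)(a)-b_1(v_1)(a')$ and $d_2=b_2(v_2)(a')-b_2(v_2)(a)$, and let $(g_1,g_2)=(g^{(a,a')}_1,g^{(a',a)}_2)$ be the compatible pair obtained from Lemma \ref{l3} and Proposition \ref{Pprp}. First I would read off the values of $g_1,g_2$ near $s$ from the compatibility conditions: on $I=(x,s)$ one has $g_1\equiv x$ and $g_2\equiv s$; on $J=(s,y)$ one has $g_1\equiv y$ and $g_2\equiv s$; at the shared endpoint $g_2(s)=s$ and $g_1(s)\in\{x,y\}$; and, crucially, because signs alternate at touching endpoints, $g_2(t)<s$ for every $t$ just below $x$ and $g_2(t)>s$ for every $t$ just above $y$, whether the neighbouring position is a gap (where $g_2(t)=t$) or an adjacent segment (where $g_2$ is constantly one of that segment's endpoints).

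The engine of every step is the following probing scheme together with Lemma \ref{old5}. To constrain player $2$ I consider the profile $(b_1(Z_1^{(a,t)}),b_2(v_2))$; to constrain player $1$ I consider $(b_1(v_1),b_2(Z_2^{(a',t)}))$. In each case Lemmas \ref{l1} and \ref{old2} confine the social--welfare maximizer to $\{a,a'\}$, since the $Z$--bid is constant off its own support and the remaining alternative is dominated by $a$ (resp.\ $a'$). The announced $a'$--over--$a$ margins are $d_2-g_1(t)$ and $g_2(t)-d_1$; replacing the $Z$--player's announcement by her true valuation gives margin $d_2-t$ (resp.\ $t-d_1$), while replacing the other player's announcement by the truth gives margin $s-g_1(t)$ (resp.\ $g_2(t)-s$). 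By Lemma \ref{old5} the chosen alternative must remain a maximizer after either substitution, so the signs of these margins must be mutually consistent; I will choose $t$ to make them inconsistent whenever $d_1,d_2$ takes a forbidden value.

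For $d_2$ this squeezes immediately. Taking $t\in I$, the substitution margin $s-g_1(t)=s-x>0$ forces the chosen alternative (which lies in $\{a,a'\}$) to be $a'$, whence the $Z$--substitution requires $d_2\ge t$; letting $t\uparrow s$ gives $d_2\ge s$. Taking $t\in J$ gives $s-g_1(t)=s-y<0$, forcing $a$, hence $d_2\le t$, and $t\downarrow s$ gives $d_2\le s$; thus $d_2=s$. For $d_1$ the same two limits, now run through $t$ just below $x$ and just above $y$ where $g_2(t)\neq s$, give $d_1\ge x$ and $d_1\le y$, so $d_1\in[x,y]$. Finally, on $[x,y]$ one has $g_2\equiv s$, so the announced margin is $s-d_1$ and the $Z$--substitution margin is $t-d_1$: if $d_1\in(x,s)$ pick $t\in[x,d_1)$, if $d_1\in(s,y)$ pick $t\in(d_1,y]$, and if $d_1=s$ select a tie--breaking rule together with a $t$ on the appropriate side of $s$. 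Each choice makes the two margins disagree, contradicting Lemma \ref{old5}. Hence $d_1\in\{x,y\}$, completing case 1.

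The only place real care is needed is the first step: pinning down $g_2$ just outside $[x,y]$. Since the statement only asserts that $s$ is a shared endpoint, the points $x$ and $y$ may themselves be shared endpoints or accumulation points of further segments, so one must invoke the sign--alternation restriction on $G^{(a,a')}$ from Proposition \ref{Pprp} to guarantee, in every configuration, that $g_2$ lands strictly below $s$ to the left of $x$ and strictly above $s$ to the right of $y$; this is exactly what powers the limit arguments giving $d_1\in[x,y]$. The remaining bookkeeping --- confining the maximizer to $\{a,a'\}$ and resolving exact ties by choosing the tie--break rule (legitimate because the equilibrium must hold for every $M\in{\cal M}_{N'}$) --- is routine.
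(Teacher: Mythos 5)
Your proof is correct and follows essentially the same route as the paper's: probe with the single-alternative valuations $Z^{(\cdot,t)}$, apply Lemma \ref{old5} after each truth-substitution, read off $g_1,g_2$ near $s$ from the $(\Omega,G)$-compatibility structure, and split into cases according to where $b_1(v_1)(a)-b_1(v_1)(a')$ falls relative to $x$, $s$ and $y$. The differences are purely organizational --- you take sup/inf limits over $t\in I$ and $t\in J$ where the paper picks explicit witness values of $t$, you obtain Case 2 and the player-$2$ claim by a symmetry swap where the paper writes ``the proof is similar,'' and you make the tie-breaking freedom explicit --- none of which changes the substance.
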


\begin{proof} [Proof of Proposition \ref{V2}]
Let $s\in \mathbb{R}_+, v_1\in \mathcal{V}$
such that $v_1(a)-v_1(a')=s$. $a$ is a maximizing alternative for
$v_1$ and $s$ is an end point of two segments $I=(x,s), J=(s,y)
\in \Omega^{(a,a')}$. There are three cases to consider, $g^{(a,a')}_1(s)=s$, $g^{(a,a')}_1(s)<s$ and $g^{(a,a')}_1(s)>s$:

\begin{enumerate}
\item $g^{(a,a')}_1(s)=s$: In the proof of \ref{V1} where
$g^{(a,a')}_1(s)=s$, there was no use of the fact that $s$ wasn't
an end point of two segments. Therefore the result is valid in
this case as well.
\item $g^{(a,a')}_1(s)<s$:



Assume for the sake of contradiction that
$b_1(v_1)(a)-b_1(v_1)(a') \notin \{g^{(a,a')}_1(s),y\}$. There are
four cases:
\begin{enumerate}
\item $b_1(v_1)(a)-b_1(v_1)(a')<g^{(a,a')}_1(s)=x$: Choose $t,
b_1(v_1)(a)-b_1(v_1)(a')<t<g^{(a,a')}_1(s)=x$. Consider the
profile $(b_1(v_1),b_2(Z_2^{(a',t)}))$. Let $\gamma$ be a
maximizing alternative for this profile. Then by Lemma 2 it
maximizes $(b_1(v_1),Z_2^{(a',t)})$ as well. Hence $\gamma=a'$,
for otherwise $b_1(v_1)(\gamma)-b_1(v_1)(a')\leq
b_1(v_1)(a)-b_1(v_1)(a')< t =
Z_2^{(a',t)}(a')-Z_2^{(a',t)}(\gamma)$. It also follows by Lemma 2
that $a'$ maximizes $(v_1,b_2(Z_2^{(a',t)}))$. Therefore $
g^{(a,a')}_2(t)= b_2(Z_2^{(a',t)})(a')-b_2(Z_2^{(a',t)})(a)\geq
v_1(a)-v_1(a')=s$. This contradicts the mean value exclusion
condition.
\item $g^{(a,a')}_1(s)<b_1(v_1)(a)-b_1(v_1)(a')\leq s$: Choose $t,
g^{(a,a')}_1(s)<t<b_1(v_1)(a)-b_1(v_1)(a')\leq s$. Then
$g^{(a,a')}_2(t)=s.$ Consider the profile
$(b_1(v_1),b_2(Z_2^{(a',t)}))$. Note that $a'$ is a maximizing
alternative of $(b_1(v_1),b_2(Z_2^{(a',t)}))$ (not necessarily the
only one). For otherwise, there exists an alternative $\gamma$
which gives a better social surplus. But,
$b_1(v_1)(\gamma)-b_1(v_1)(a')\leq b_1(v_1)(a)-b_1(v_1)(a')\leq s=
g^{(a,a')}_2(t)=b_2(Z_2^{(a',t)})(a')-b_2(Z_2^{(a',t)})(\gamma)$,
a contradiction.\\
By Lemma 2 it maximizes $(b_1(v_1),Z_2^{(a',t)})$ as well, but
$b_1(v_1)(a)-b_1(v_1)(a')>t=Z_2^{(a',t)}(a')-Z_2^{(a',t)}(a)$, a
contradiction.
\item $s<b_1(v_1)(a)-b_1(v_1)(a')<y$: Choose $t,
s<b_1(v_1)(a)-b_1(v_1)(a')<t<b$. Then $g^{(a,a')}_2(t)=s.$
Consider the profile $(b_1(v_1),b_2(Z_2^{(a',t)}))$. Note that $a$
is a maximizing alternative of $(b_1(v_1),b_2(Z_2^{(a',t)}))$ (not
necessarily the only one). For otherwise, there exists an
alternative $\gamma$ which gives a better social surplus. But if
$\gamma=a'$ then, $b_1(v_1)(a)-b_1(v_1)(a')> s
=g^{(a,a')}_2(t)=b_2(Z_2^{(a',t)})(a')-b_2(Z_2^{(a',t)})(a)$, a
contradiction. If $\gamma \neq a,a'$ then,
$b_1(v_1)(a)-b_1(v_1)(\gamma)\geq 0 =
b_2(Z_2^{(a',t)})(\gamma)-b_2(Z_2^{(a',t)})(a)$,
a contradiction.\\
By Lemma 2 it maximizes $(b_1(v_1),Z_2^{(a',t)})$ as well, but
$b_1(v_1)(a)-b_1(v_1)(a')<t=Z_2^{(a',t)}(a')-Z_2^{(a',t)}(a)$, a
contradiction.
\item $y<b_1(v_1)(a)-b_1(v_1)(a')$: Choose $t,b<t<b_1(v_1)(a)$. Consider the profile
$(b_1(v_1),b_2(Z_2^{(a',t)}))$. Let $\gamma$ be a maximizing
alternative for this profile. Then by Lemma 2 it maximizes
$(b_1(v_1),Z_2^{(a',t)})$ as well.
Hence $\gamma\neq a'$, for otherwise\\
$b_1(v_1)(a)-b_1(v_1)(a')>t=Z_2^{(a',t)}(a')-Z_2^{(a',t)}(a)$,
a contradiction.\\
It also follows by Lemma 2 that $\gamma$ maximizes
$(v_1,b_2(Z_2^{(a',t)}))$. Therefore
$g^{(a,a')}_2(t)=b_2(Z_2^{(a',t)})(a')-b_2(Z_2^{(a',t)})(\gamma)=
b_2(Z_2^{(a',t)})(a')-b_2(Z_2^{(a',t)})(a)<
v_1(\gamma)-v_1(a')<v_1(a)-v_1(a')=s$. This contradicts the mean
value exclusion condition.
\end{enumerate}
\item $g^{(a,a')}_1(s)>s$: This case is handled in a similar way as
the previous case.
\end{enumerate}
For player 2 the proof is similar.
\end{proof}

\begin{lemma}\label{lemRemark1}
Let $a,a'$ be any two alternatives and let $\Omega^{(a,a')}$ be the
set of segments induced by Proposition \ref{Pprp} then for any
$\epsilon>0$ we can find a $\delta, 0<\delta<\epsilon$ such that
$0<g_1^{(a,a')}(\delta),g_2^{(a,a')}(\delta)<\epsilon$.
\end{lemma}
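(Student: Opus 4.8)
The proof will run entirely through the combinatorial structure supplied by the corollary following Lemma \ref{l3}: the relevant pair $g_1^{(a,a')},g_2^{(a,a')}$ is $(\Omega^{(a,a')},G^{(a,a')})$-compatible, so on each point its two values are pinned down by the compatibility conditions (1)--(7). The other ingredient I would lean on is Lemma \ref{zeroseg}, which tells us that every $I\in\Omega^{(a,a')}$ has $I^-\neq 0$, hence $I^->0$. Fixing $\epsilon>0$, the idea is to locate a point $\delta$ of $(0,\epsilon)$ on which the pair takes small positive values, and I would organize this around a dichotomy: whether or not the open interval $(0,\epsilon)$ is entirely contained in $\bigcup_{I\in\Omega^{(a,a')}}\bar I$.

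In the first case, suppose some $\delta\in(0,\epsilon)$ lies outside $\bigcup_{I\in\Omega^{(a,a')}}\bar I$. Then compatibility condition (1) immediately gives $g_1^{(a,a')}(\delta)=g_2^{(a,a')}(\delta)=\delta$, and since $\delta\in(0,\epsilon)$ both images lie in $(0,\epsilon)$, which is exactly what is claimed.

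In the second case, $(0,\epsilon)\subseteq\bigcup_{I}\bar I$, and the target is to exhibit a single segment $I$ with $I^+<\epsilon$; for such a segment $I^->0$ forces $0<I^-<I^+<\epsilon$, and choosing $\delta$ in the interior of $I$ (so that conditions (2)--(3) apply) yields $\{g_1^{(a,a')}(\delta),g_2^{(a,a')}(\delta)\}=\{I^-,I^+\}\subset(0,\epsilon)$. To produce such a segment I would argue by contradiction: if every segment satisfied $I^+\geq\epsilon$, then a segment can meet $(0,\epsilon)$ only if $I^-<\epsilon$, and disjointness of the open intervals permits at most one such segment $I_0$ (two of them, say with left endpoints $I^-<J^-<\epsilon\leq I^+$, would force $J^-\in I$ and hence $I\cap J\neq\emptyset$). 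Consequently $(0,\epsilon)\cap\bigcup_I\bar I\subseteq\bar I_0=[I_0^-,I_0^+]$, but $I_0^->0$ leaves the nonempty sub-interval $(0,I_0^-)\subset(0,\epsilon)$ uncovered, contradicting the covering assumption (and if no $I_0$ exists the whole of $(0,\epsilon)$ is uncovered). Hence a segment with $I^+<\epsilon$ must exist and the conclusion follows.

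The only delicate point is the counting step in the second case, and this is precisely where Lemma \ref{zeroseg} is indispensable: were a segment of the form $(0,I^+)$ with $I^+\geq\epsilon$ admissible, its closure alone could swallow all of $(0,\epsilon)$ and the argument would collapse. The strict positivity $I^->0$ plays a double role --- it both guarantees the uncovered sliver $(0,I_0^-)$ that breaks the bad case and ensures the produced images $I^-,I^+$ are strictly positive --- so I would flag the invocation of Lemma \ref{zeroseg} as the crux of the proof, with everything else being bookkeeping on the compatibility conditions.
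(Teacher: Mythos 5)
Your proof is correct and uses the same ingredients as the paper's own argument --- Lemma \ref{zeroseg} (so that $I^->0$), disjointness of the segments, and the $(\Omega,G)$-compatibility conditions --- differing only in organization: you split on whether $(0,\epsilon)$ is covered by the closures $\bar I$, while the paper splits on whether some segment lies below $\epsilon$, straddles $\epsilon$, or no segment meets $(0,\epsilon)$. If anything your dichotomy is marginally tighter, since it automatically absorbs the boundary case $I^+=\epsilon$ that falls between the paper's three cases.
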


\begin{proof}[Proof of Lemma \ref{lemRemark1}]
Let $\epsilon>0$, if there exists a segment $I=(x,y)$ in
$\Omega^{(a,a')}$ such that $y<\epsilon$ then due to Lemma
\ref{zeroseg} $x>0$, and from Proposition \ref{Pprp} it follows that
for any $\delta$ such that $x<\delta<y$ , $0<x\leq
g_1^{(a,a')}(\delta),g_2^{(a,a')}(\delta)\leq y<\epsilon$. Other
wise there are two cases left:\\
case 1: there exists a segment $I=(x,y)$ in $\Omega^{(a,a')}$ such
that $x<\epsilon<y$ again due to Lemma \ref{zeroseg} $0<x$. As for
this cases conditions it follows that for any segment $J\in
\Omega^{(a,a')}, J\bigcap (0,x)=\phi$. So, for any $\delta \in
(0,x),  g_1^{(a,a')}(\delta)=g_2^{(a,a')}(\delta)=\delta$ where
$0<\delta<x<\epsilon$ as required.\\
case 2: For any segment $J\in \Omega^{(a,a')}, j\bigcap
(0,\epsilon)=\phi$. then for any $\delta \in (0,\epsilon),
g_1^{(a,a')}(\delta)=g_2^{(a,a')}(\delta)=\delta$ where
$0<\delta<\epsilon$ as required.\\
\end{proof}

\subsection{Proof of Theorems \ref{NoPar1} and \ref{NoPar2}}

\subsubsection{The Easy Direction}

\begin{proof}

{\bf The easy direction of Theorem \ref{NoPar1}}: We shall first
show for an arbitrary set of functions $f_i:\mathcal{V}\rightarrow
\mathbb{R_+}, i=1,...,n$. The strategy tupple
$b_i(v_i)(a)=v_i(a)+f_i(v_i)$ forms an ex-post equilibrium for the
class of VCG games over $(N,A,(\mathbb{R_+}^A)^n)$.

Consider a VC mechanism $d$, a profile of valuations
$v=(v_1,...,v_n)\in V^N$, $n$ arbitrary functions
$f_i:\mathcal{V}\rightarrow \mathbb{R_+}$ and a buyer i. According
to the strategies $b_i(v_i)(a)=v_i(a)+f_i(v_i)$, the profile of
announced valuations is
$\hat{v}=(v_1(a)+f_1(v_1),...,v_n(a)+f_n(v_n))$. Let
\[
t=\max_{\hat{a}\in A}\sum_{j\neq i}v_j(\hat{a})+f_j(v_j).
\]
Let $v'$ be the profile of announced valuations consisting of an
arbitrary announcement $v_i'$ of buyer $i$ and the fixed
announcements $v_j(a)+f_j(v_j)$ of buyers $j\in N\backslash
\{i\}$. Suppose that the alternative $d(v')$ is $a'$. Then the
utility of buyer $i$ is
\[
u_i^d(v_i,v')=v_i(a')-c_i^d(v)=v_i(a')+\sum_{j\neq
i}v_j(a')+f_j(v_j)-t.
\]
This is maximized when $a'$ maximizes $v_i(\hat{a})+\sum_{j\neq
i}v_j(\hat{a})$. But this is exactly what the mechanism maximizes
when it chooses an alternative. So, by announcing $v_i$ the
utility of $i$ will be maximized. But if he announces
$v_i+f_i(v_i)$ where $f_i(v_i)$ does not change on the different
alternatives then the mechanism still maximizes $i$'s utility.

Note that the above arguments fully mimic the proof of the standards
arguments for proving that VCG mechanisms are incentive compatible.

{\bf The easy direction of Theorem \ref{NoPar2}}: Follows as a
corollary from the above arguments and Proposition \ref{prop1}.

\subsubsection{The Difficult Direction: {\bf $n\ge 2$} and {\bf $A>2$}}

{\bf The difficult direction of Theorem \ref{NoPar2}}: In fact, to
prove this direction we may assume, with out loss of generality,
that there are exactly $n=2$ players (recall the definition of an
ex post
equilibrium) and $|A|\ge 3$, or, alternatively that there are $n=3$ players.%
\footnote{Indeed, suppose there more players and there is an ex
post equilibrium which is not of the form stated in the Theorem.
By definition, it must be an equilibrium for 2 players as
well.}

Assume for the sake of contradiction that the claim is wrong and
that for some ex-post equilibrium b, there exists an agent $i$,
without loss of generality  $i=1$ and a valuation function ,$v_1$,
and two alternatives, $a,a' \in A$ such $b_1(v_1)(a)-v_1(a)\neq
b_1(v_1)(a')-v_1(a')$. Without loss of generality we may choose
$a$ such that $v_1(a)=argmax_{\hat{a}\in A}v_1(\hat{a})$. There
are two cases:
\begin{enumerate}
\item $b_1(v_1)(a)-v_1(a)>b_1(v_1)(a')-v_1(a')$.
In this case $b_1(v_1)(a)-b_1(v_1)(a')>v_1(a)-v_1(a')$ by
Proposition \ref{Pprp}, Proposition \ref{V1} and Proposition
\ref{V2} the corresponding $\Omega^{(a,a')}$ is not empty and there
exists a segment $I$ such that $G(I)=+1$ in $\Omega^{(a,a')}$.
Denote $I=(x_1,x_2)$ and
$h=x_2-x_1$.\\
Consider the following two valuations:
\[
    u_2(\hat{a})=\left\{
            \begin{array}{ll}
             x_2-h_3-x_1         & \mbox{if $\hat{a}=a$}\\
             x_2                 & \mbox{if $\hat{a}=a'$}\\
             x_2-h_2-x_1         & \mbox{if $\hat{a}=a''$}\\
             x_2-x_1-h_4         & \mbox{otherwise}
            \end{array}
                 \right.
\]
Where $0<h_2<h_3<h_4<h$.
\[
    u_1(\hat{a})=\left\{
            \begin{array}{ll}
             M                   & \mbox{if $\hat{a}=a$}\\
             M-x_1-h_1           & \mbox{if $\hat{a}=a'$}\\
             M-x_1+\delta      & \mbox{if $\hat{a}=a''$}\\
             0                   & \mbox{otherwise}
            \end{array}
                 \right.
\]
Where $a'' \neq a,a'$ $x_2<M$, $0<h_1<h$ and $\delta$ is chosen such
that it is not a common end point of two segments in
$\Omega^{(a,a'')}$ and $0<g_1^{(a,a'')}(x_1-\delta)<h_3-h_2$,
$0<\delta<x_1$ this is possible as for lemma \ref{lemRemark1}.

Note that the following emerges from Proposition \ref{Pprp} and
Proposition \ref{V1}:\\
$b_1(u_1)(a)-b_1(u_1)(a')=x_2$\\
$b_1(u_1)(a)-b_1(u_1)(a'')=g_1^{(a,a'')}(x_1-\delta)$\\

$b_2(u_2)(a')-b_2(u_2)(a)=x_1$\\
$b_2(u_2)(a')-b_2(u_2)(a'')=x_1$\\

The following will show that for the profile of strategies
$(b_1(u_1),b_2(u_2))$, the alternative $a$ is the only maximizing alternative.\\

We show that the total announcements at $a$ exceeds that of $\hat{a}\in A$, where $\hat{a}\neq a,a',a''$:\\
$b_1(u_1)(a)-b_1(u_1)(\hat{a})=g_1^{(a,\hat{a})}(M)=
g_1^{(a,a')}(M)\geq x_2$\\
the second equality follows from lemma \ref{old2}.\\
$b_2(u_2)(a')-b_2(u_2)(a)=x_1$\\
$b_2(u_2)(a')-b_2(u_2)(\hat{a})=x_1$ again from lemma \ref{old2}.\\
Then it follows that $b_2(u_2)(\hat{a})-b_2(u_2)(a)=0$\\
So we have that $b_2(u_2)(\hat{a})-b_2(u_2)(a)=0<x_2\leq
b_1(u_1)(a)-b_1(u_1)(\hat{a})$\\
which means that $b_1(u_1)(\hat{a})+b_2(u_2)(\hat{a})<
b_1(u_1)(a)+b_2(u_2)(a)$.\\

We now show that the total announcements at $a$ exceeds that of $a'$:\\
$b_2(u_2)(a')-b_2(u_2)(a)=x_1<x_2=b_1(u_1)(a)-b_1(u_1)(a')$.\\

We now show that The total announcements at $a$ exceeds that of $a''$:\\
$b_1(u_1)(a)-b_1(u_1)(a'')=g_1^{(a,a'')}(x_1-\delta)$\\
$b_2(u_2)(a')-b_2(u_2)(a)=x_1$\\
$b_2(u_2)(a')-b_2(u_2)(a'')=x_1$\\
it follows that $b_2(u_2)(a'')-b_2(u_2)(a)=0$\\
So we that
$b_2(u_2)(a'')-b_2(u_2)(a)=0<g_1^{(a,a'')}(x_1-\delta)=\\
b_1(u_1)(a)-b_1(u_1)(a'')$.\\

This proves that $a$ is the only maximum of $(b_1(u_1),b_2(u_2))$.
By lemma \ref{old5} $a$ should be a maximum of the profile
$(b_1(u_1),u_2)$, but
$u_2(a'')-u_2(a)=x_2-h_2-x_1-x_2+h_3+x_1=h_3-h_2>g_1^{(a,a'')}(x_1-\delta)=b_1(u_1)(a)-b_1(u_1)(a'')$
which means that $u_2(a'')+b_1(u_1)(a'')>b_1(u_1)(a)+u_2(a)$ a
contradiction.
\item The proof for the case that $b_1(v_1)(a)-v_1(a)<b_1(v_1)(a')-v_1(a')$
is similar to the previous case, and is therefore omitted.
\end{enumerate}

{\bf The difficult direction of Theorem \ref{NoPar1}}: Follows as a
corollary from the proof of the difficult direction of Theorem
\ref{NoPar2} and Proposition \ref{prop1}.

\end{proof}

\subsubsection{The Difficult Direction: {\bf $n\ge 3$}}

\begin{proof}[Proof of Theorem \ref{NoPar2}]: Assume the claim is wrong and that
for some ex-post equilibrium $b$, there exists an agent $i$ and a
valuation function, $v_i$, and two alternatives, $a,a' \in A$ such
$b_i(v_i)(a) - v_i(a) \not = b_i(v_i)(a') - v_i(a')$. Without loss
of generality we may choose $a$ such that $b_i(v_i)(a) =
\arg\max_{{\hat a} \in A} b_i(v_i)({\hat a})$.

{\em Case 1}: $b_i(v_i)(a) - v_i(a) > b_i(v_i)(a') - v_i(a')$.

In this case $b_i(v_i)(a) + v_i(a')> b_i(v_i)(a') + v_i(a)$. Let $t$
satisfy $b_i(v_i)(a) - b_i(v_i)(a') > t >  v_i(a) - v_i(a')$, and
consider player $j$ and a valuation $v_j = Z_j^{(a',t)}$.

By lemma \ref{old2} $b_j(Z_j^{(a',t)})(a) = b_j(Z_j^{(a',t)})(a'')$,
which together with the choice of $a$ implies $b_i(v_i)(a) +
b_j(Z_j^{(a',t)})(a) \ge b_i(v_i)(a'') + b_j(Z_j^{(a',t)})(a'') $.
By a proper choice of the mechanism we may find a simple VCG game
such that $a''$ is not chosen, and therefore, either $a$ or $a'$ are
chosen.

Assume $a$ is chosen. Then the utility of $i$ is $v_i(a) +
b_j(Z_j^{(a',t)})(a)$, which, by lemma \ref{old4} is equal $v_i(a) +
b_j(Z_j^{(a',t)})(a') - t$. This in turn is less than $v_i(a) +
b_j(Z_j^{(a',t)})(a') - (v_i(a) - v_i(a')) = b_j(Z_j^{(a',t)})(a') +
v_i(a')$, contradicting lemma \ref{old5}.

Therefore, it must be the case that $a'$ is chosen. However,
consider $j$'s utility, $t + b_i(v_i)(a') < b_i(v_i)(a') -
b_i(v_i)(a') + b_i(v_i)(a) = b_i(v_i)(a)$, again contradicting lemma
\ref{old5}.

{\em Case 2}: $b_i(v_i)(a) - v_i(a) < b_i(v_i)(a') - v_i(a')$.

This case is repeated with analogous arguments with $v_j =
Z_j^{(a',t)}$, where $b_i(v_i)(a) - b_i(v_i)(a') < t <  v_i(a) -
v_i(a')$.

\end{proof}

\begin{proof}[Proof of Theorem \ref{NoPar1}]: Follows as a
corollary from the proof of the difficult direction of Theorem
\ref{NoPar2} and Proposition \ref{prop1}.
\end{proof}

\subsection{Proof of Theorems \ref{thm3} and \ref{thm4}}

\begin{proof} [Proof of Theorem \ref{thm3}]: Assume that for some $N' \subset N$
and for some specific realization of valuations, there exists a
player $i \in N'$ which can benefit from deviation. This means that
deviating to the strategy ${\hat b}_i(v_i)(a) = v_i(a)$ is also
strictly beneficial (recall that truth telling is a dominant
strategy for all VCG games). However, truth telling cannot change
the chosen alternative and therefore cannot change $i$'s utility.
\end{proof}

Throughout this subsection we fix the valuation sets, ${\cal
R}_i(a_i)$. For each player $i$, let $a_i$ denote the optimal
element with respect to ${\cal R}_i(a_i)$. The following lemma is in
the spirit of lemma \ref{old2}.

\begin{lemma}\label{l6} Let $n \ge 3$ and let $b$ be an ex-post equilibrium for the
class of VCG games over $(N,A,{\cal R}(a))$. Then for all $i$, $k$,
$m$ and all
$Z^{(a_i,t)}_i$, $b_i(Z^{(a_i,t)}_i)(a_k) = b_i(Z^{(a_i,t)}_i)(a_m)$ for all $a_k,a_m \not = a_i$.%
\end{lemma}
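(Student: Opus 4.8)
The plan is to run the argument of Lemma \ref{old2}, the only new difficulty being that in the restricted environment $\mathcal{R}(a)$ every player's \emph{true} valuation must peak at her designated maximum, so the perturbing valuation used in that proof is no longer automatically admissible. Write $v_i=Z_i^{(a_i,t)}$, so $v_i(a_i)=t>0$ while $v_i$ vanishes elsewhere; in particular $v_i(a_k)=v_i(a_m)=0$. I would assume toward a contradiction that the reports differ, say $b_i(v_i)(a_k)>b_i(v_i)(a_m)$ (the reverse inequality is symmetric, interchanging the roles of $k$ and $m$ and using player $k$ in place of $m$), and set $\bar x=\max_{\hat a\in A}|b_i(v_i)(\hat a)|$.

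The key point, and the step I expect to be the real obstacle, is producing an admissible adversary. I would load the perturbation onto player $m$: define $v_m(a_k)=3\bar x$, $v_m(a_m)=3\bar x+\tfrac12\big(b_i(v_i)(a_k)-b_i(v_i)(a_m)\big)$, and $v_m(\hat a)=0$ otherwise. Since $v_m(a_m)>v_m(a_k)>0$, the alternative $a_m$ is the unique maximizer of $v_m$, so $v_m\in\mathcal{R}_m(a_m)$ -- this is exactly what would fail for a generic adversary in the restricted space, and it is repaired by choosing the player whose prescribed maximum is $a_m$. I would then argue inside the two-player subgame on $\{i,m\}$, which is legitimate because an ex-post equilibrium for the class of games is, by definition, an ex-post equilibrium for every subset of players.

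The remaining steps copy Lemma \ref{old2}. First, no $\hat a\notin\{a_k,a_m\}$ can be selected: there player $m$'s payoff is $b_i(v_i)(\hat a)\le\bar x$, whereas reporting $v_m$ truthfully forces $a_k$ or $a_m$ and yields at least $2\bar x$. Next, $a_m$ cannot be selected: if it were, player $m$ would gain by reporting $3\bar x$ on $a_k$ and $0$ elsewhere, the new payoff $3\bar x+b_i(v_i)(a_k)$ exceeding the equilibrium payoff $3\bar x+\tfrac12\big(b_i(v_i)(a_k)-b_i(v_i)(a_m)\big)+b_i(v_i)(a_m)$ by exactly $\tfrac12\big(b_i(v_i)(a_k)-b_i(v_i)(a_m)\big)>0$. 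Hence $a_k$ is chosen, so player $i$'s payoff is $v_i(a_k)+b_m(v_m)(a_k)=b_m(v_m)(a_k)$; but deviating to force $a_m$ gives $v_i(a_m)+b_m(v_m)(a_m)=b_m(v_m)(a_m)$, and since $v_m$ peaks uniquely at $a_m$, Lemma \ref{l1} yields $b_m(v_m)(a_m)>b_m(v_m)(a_k)$, making the deviation profitable. (The selection steps can alternatively be phrased against a fixed tie-breaking mechanism in $\mathcal{M}_{\{i,m\}}$, invoking Lemma \ref{old5} where convenient.) Each branch contradicts equilibrium, forcing $b_i(v_i)(a_k)=b_i(v_i)(a_m)$.

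Finally, I would note that the hypothesis $n\ge3$ serves to make the statement non-vacuous rather than to drive the proof: with only two players the unique maximum different from $a_i$ is a single alternative, so $a_k$ and $a_m$ coincide, and three players are needed merely to supply two distinct maxima $a_k,a_m\neq a_i$. The only routine verification I would leave to the write-up is that the announcements used as deviations (truthful $v_m$, the spike on $a_k$, and player $i$'s spike on $a_m$) are all permitted, which they are, since deviating reports are unconstrained by $\mathcal{R}$.
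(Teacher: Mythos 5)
Your proof is correct and follows essentially the same route as the paper, whose proof of this lemma is literally ``mimic the arguments in the proof of Lemma \ref{old2}.'' You additionally make explicit the one genuine adaptation that the paper leaves implicit --- choosing the adversary to be player $m$ so that the perturbing valuation, which peaks at $a_m$, is admissible in ${\cal R}_m(a_m)$ --- which is exactly why the claim holds only on the set of designated maxima.
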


\begin{proof} Assume the claim is wrong, and that for some $i$, $k$,
$m$ and $t$, $b_i(Z^{(a_i,t)}_i)(a_k) > b_i(Z^{(a_i,t)}_i)(a_m)$,
where $a_i$, $a_k$ and $a_m$ are three distinct alternatives. Now we
can mimic the arguments in the proof of lemma \ref{old2} and reach a
contradiction.
\end{proof}

One can note that the proof above provides a slightly stronger
result, for which we only need $2$ players:

\begin{lemma}\label{l7} Let $n \ge 2$ and let $b$ be an ex-post equilibrium for the
class of VCG games over $(N,A,{\cal R}(a))$. Then for all $i$, $k$
and all $Z^{(a_i,t)}_i$, $b_i(Z^{(a_i,t)}_i)(a_k) \ge
b_i(Z^{(a_i,t)}_i)({\hat a})$ for all ${\hat a} \not = a_i$.
\end{lemma}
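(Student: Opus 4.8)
The plan is to reduce to two players and mimic the proof of Lemma \ref{old2}, exactly in the spirit of Lemma \ref{l6}, but with the arbitrary alternative $\hat a$ playing the role of $a$ and the maximum $a_k$ playing the role of $a'$. First I would dispose of the degenerate cases. If $a_k=a_i$, then since $a_i$ is the unique maximizer of $Z_i^{(a_i,t)}$, Lemma \ref{l1} already gives $b_i(Z_i^{(a_i,t)})(a_i)>b_i(Z_i^{(a_i,t)})(\hat a)$ for every $\hat a\neq a_i$; and if $\hat a=a_k$ the inequality is trivial. So I may assume $a_i,a_k,\hat a$ are three distinct alternatives, and since an ex-post equilibrium for the class of VCG games restricts to every subset of players, I may work inside the two-player game on $\{i,k\}$.

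Arguing by contradiction, suppose $b_i(Z_i^{(a_i,t)})(\hat a) > b_i(Z_i^{(a_i,t)})(a_k)$. Because $\hat a\neq a_i$ and $a_k\neq a_i$, the valuation $Z_i^{(a_i,t)}$ assigns value $0$ to both $\hat a$ and $a_k$, so player $i$'s true values on these two alternatives agree --- precisely the hypothesis $v_i(a)=v_i(a')$ driving Lemma \ref{old2}. Setting $\bar x=\max_{\bar a\in A}|b_i(Z_i^{(a_i,t)})(\bar a)|$, I would construct the auxiliary valuation $v_k$ for player $k$ by $v_k(\hat a)=3\bar x$, $v_k(a_k)=3\bar x+\tfrac12\bigl(b_i(Z_i^{(a_i,t)})(\hat a)-b_i(Z_i^{(a_i,t)})(a_k)\bigr)$, and $v_k(\bar a)=0$ otherwise. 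The essential point --- and the only place the restricted space $\mathcal R(a)$ intervenes --- is that $v_k(a_k)>v_k(\hat a)>0$, so $a_k$ is the maximizer of $v_k$ and hence $v_k\in\mathcal R_k(a_k)$ is a legal valuation for player $k$.

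With this valuation fixed, I would run the three-way case analysis of Lemma \ref{old2} on the alternative chosen in the simple two-player game. No $\bar a\notin\{\hat a,a_k\}$ can be chosen, since $k$'s utility there is at most $\bar x$ whereas truthful reporting secures at least $2\bar x$; the alternative $a_k$ cannot be chosen, since then $k$ would strictly gain by announcing $3\bar x$ on $\hat a$ and $0$ elsewhere (a legal announcement, as announcements need not lie in $\mathcal R_k(a_k)$); so $\hat a$ is chosen. But then $i$'s utility equals $b_k(v_k)(\hat a)$, while announcing a large value on $a_k$ would yield $b_k(v_k)(a_k)$, and Lemma \ref{l1} gives $b_k(v_k)(a_k)>b_k(v_k)(\hat a)$ because $a_k$ is the unique maximizer of $v_k$ --- contradicting the equilibrium condition. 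This refutes the hypothesis and yields $b_i(Z_i^{(a_i,t)})(a_k)\ge b_i(Z_i^{(a_i,t)})(\hat a)$.

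The main obstacle, really the conceptual heart of the statement, is the asymmetry that forces a one-sided inequality here rather than the equality of Lemma \ref{l6}: the auxiliary player's valuation must peak at one of the two compared alternatives, and within $\mathcal R(a)$ it may peak only at a genuine maximum. Since $a_k$ is a maximum, the argument above runs in the single displayed direction using only the two players $i$ and $k$; we cannot peak at the arbitrary $\hat a$, so the reverse inequality --- which would demand a player whose maximum is $\hat a$, and hence a third maximum --- is unavailable. Equality is recovered precisely when $\hat a$ is itself a maximum, which is the content of Lemma \ref{l6} and the reason it is stated for $n\ge 3$.
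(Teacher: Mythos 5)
Your proof is correct and takes essentially the same route as the paper: the paper gives no standalone argument for this lemma, merely noting after Lemma \ref{l6} that the mimicry of Lemma \ref{old2} ``provides a slightly stronger result, for which we only need $2$ players,'' and your writeup is a faithful expansion of exactly that remark. In particular you correctly identify the one point where ${\cal R}(a)$ matters --- the auxiliary valuation must peak at the genuine maximum $a_k$, so only players $i$ and $k$ are needed and only the one-sided inequality is obtainable.
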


The next lemma is quite similar to lemma \ref{old4}, and its proof
is identical and therefore omitted:

\begin{lemma}\label{l8} Let $n \ge 3$, $|A| \ge 3$, and let $b$ be an ex-post
equilibrium for the class of VCG games over $(N,A,{\cal R}(a))$,
then $b_i(Z_i^{(a_i,s)})(a_i) - b_i(Z_i^{(a_i,s)})(a_k) = s$ for all
$i \in N$, $s \in \mathbb{R}_+$ and $a_k \not = a_i \in A$.%
\end{lemma}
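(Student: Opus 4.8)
The plan is to run the proof of Lemma~\ref{old4} essentially verbatim, with two modifications dictated by the restricted valuation space ${\cal R}(a)$: every spike must be seated at the perturbed player's own maximum, and Lemma~\ref{old2} must be replaced by its ${\cal R}(a)$-analogue, Lemma~\ref{l6}. Concretely, the only admissible spikes are $Z_i^{(a_i,s)} \in {\cal R}_i(a_i)$, so Lemma~\ref{l3} delivers the mean value exclusion condition only for the pairs $\big(g^{(a_i,a_j)}_i, g^{(a_j,a_i)}_j\big)$ associated with two distinct maxima $a_i \ne a_j$. Before invoking this I would first use Lemma~\ref{l6} (valid since $n \ge 3$), which forces $b_i(Z_i^{(a_i,s)})$ to be constant on $A \setminus \{a_i\}$; hence $g^{(a_i,a_k)}_i(s)$ is independent of the choice of $a_k \ne a_i$, and it suffices to establish $g^{(a_i,a_j)}_i(s) = s$ for a single other player $j$ with $a_j \ne a_i$.

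Suppose not, and w.l.o.g.\ that $g^{(a_i,a_j)}_i(s) > s$ (the reverse inequality is symmetric). Choose $t$ with $g^{(a_i,a_j)}_i(s) > t > s$ and give player $j$ the spike $Z_j^{(a_j,t)}$. The mean value exclusion condition for $\big(g^{(a_i,a_j)}_i, g^{(a_j,a_i)}_j\big)$ already rules out $g^{(a_j,a_i)}_j(t) = t$, so two cases remain. If $g^{(a_j,a_i)}_j(t) > t$, then either $g^{(a_j,a_i)}_j(t) \le g^{(a_i,a_j)}_i(s)$, so that $g^{(a_j,a_i)}_j(t)$ lands in $\big(s, g^{(a_i,a_j)}_i(s)\big]$ and contradicts mean value exclusion outright, or $g^{(a_j,a_i)}_j(t) > g^{(a_i,a_j)}_i(s)$, in which case the symmetric half of the condition applied to the value $g^{(a_i,a_j)}_i(s) \in \big(t, g^{(a_j,a_i)}_j(t)\big]$ yields the contradiction. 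If instead $g^{(a_j,a_i)}_j(t) < t$, I would bring in a third player $l$ with maximum $a_l \notin \{a_i,a_j\}$: the first case applied to $l$ forces $g^{(a_l,a_i)}_l(t) < t$, and Lemma~\ref{l6} rewrites both small gaps as $g^{(a_j,a_l)}_j(t) < t$ and $g^{(a_l,a_j)}_l(t) < t$. Taking, say, $g^{(a_j,a_l)}_j(t) \le g^{(a_l,a_j)}_l(t) < t$ then contradicts the mean value exclusion condition that Lemma~\ref{l3} supplies for $\big(g^{(a_j,a_l)}_j, g^{(a_l,a_j)}_l\big)$.

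The main obstacle, and the only genuine departure from Lemma~\ref{old4}, is the admissibility bookkeeping: in the unrestricted space any alternative may carry a spike, whereas here the alternative displaced by a spike must be exactly the maximum of the player carrying it. Consequently the auxiliary third alternative $a''$ of Lemma~\ref{old4} must be realized as a genuine third maximum $a_l$, so that the three-player step is only available when three distinct maxima are present; Lemma~\ref{l6}---stated precisely for spikes at maxima---is the correct substitute for Lemma~\ref{old2} and is what lets me transport bid-differences among $a_i$, $a_j$ and $a_l$ and reduce the claim for a general $a_k$ to that for a single other maximum. The residual point I would need to verify carefully is the configuration with fewer than three distinct maxima, where the three-player step is unavailable; this is the delicate case, and the place where I would expect to spend most of the effort.
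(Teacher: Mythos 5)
Your proposal is precisely the paper's intended argument: the paper omits the proof of this lemma entirely, stating only that it is ``identical'' to that of Lemma~\ref{old4}, and your substitution of Lemma~\ref{l6} for Lemma~\ref{old2} together with the two-case mean-value-exclusion analysis via Lemma~\ref{l3} is exactly that adaptation. The one point you flag --- that the second case requires a third player whose maximum $a_l$ is distinct from both $a_i$ and $a_j$, so the argument as written is unavailable when fewer than three distinct maxima occur among the players --- is a genuine subtlety that the paper's one-line proof does not address either; your instinct that this is where the real work lies is correct, and it is a gap in the source, not in your reconstruction of it.
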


\begin{proof}[Proof of Theorem \ref{thm4}]: Follows the arguments in the proof of
Theorem \ref{NoPar2} for the case $n \ge 3$, where the reference to
lemmas \ref{old2} and \ref{old4} are replaced with lemma \ref{l6}
and \ref{l8}.
\end{proof}

\end{document}